\newcommand{\iftwocol}[2]{#2}
\newcommand{\ifshort}{\iftoggle{short}}
\newcommand{\captionsqueeze}{\ifshort{\vspace{-1.5em}}{}}
\newcommand{\slashfrac}[2]{#1/#2}
\newcommand{\proofin}[2][]{%
  \IfSubStr{#1}{,}{\begin{proof}[Proofs of \cref{#1}]}{\begin{proof}}
    See \cref{#2}. \noqed
  \end{proof}}
\newcommand{\EqBreak}[1][\quad]{\\* & #1}
  \newcommand{\OneColEqBreak}[1][]{}
  \newcommand{\TwoColEqBreak}{\EqBreak}
  \newcommand{\OneColEqBreak}{\EqBreak}
  \newcommand{\TwoColEqBreak}[1][]{}
\newcommand{\EqMoveLeft}[1]{\iftwocol{\MoveEqLeft[1]}{\MoveEqLeft} #1 \ifshort{\\}{\\*}}
  \newcommand{\OneColEqMoveLeft}[1]{#1}
  \newcommand{\TwoColEqMoveLeft}{\EqMoveLeft}
  \newcommand{\OneColEqMoveLeft}{\EqMoveLeft}
  \newcommand{\TwoColEqMoveLeft}[1]{#1}
\newcommand{\optional}[2]{%
  \ifempty{#2}{}{%
    \ifstrequal{#2}{[}{#1[}{%
      \ifstrequal{#2}{*}{#1*}{%
        #1{#2}%
      }}}}
\newcommand{\shortenRefs}{%
  \cCrefname{section}{Sec.}{Secs.}%
  \cCrefname{theorem}{Thm.}{Thms.}%
  \cCrefname{lemma}{Lem.}{Lems.}%
  \cCrefname{proposition}{Prop.}{Props.}%
  \cCrefname{corollary}{Cor.}{Cors.}%
  \cCrefname{definition}{Def.}{Defs.}%
  \renewcommand{\crefpairconjunction}{, }%
  \renewcommand{\crefmiddleconjunction}{, }%
  \renewcommand{\creflastconjunction}{, }%
  \renewcommand{\crefpairgroupconjunction}{, }%
  \renewcommand{\crefmiddlegroupconjunction}{, }%
  \renewcommand{\creflastgroupconjunction}{, }%
}
\newcommand{\byref}[1]{%
  \iftwocol{\quad}{&&}{\shortenRefs\text{\footnotesize[by \cref{#1}]}}}
\newcommand{\onetablenote}{%
  \renewcommand{\TPTnoteSettings}{%
    \labelsep0pt \leftmargin0pt \labelwidth0pt \footnotesize}}
\newcommand{\Diff}{\mathop{}\!\Delta}
\newcommand{\diff}{\mathop{}\!\partial}
\newsavebox{\invarrowbox}
\sbox{\invarrowbox}{%
  \begin{tikzpicture}[baseline=0cm, transform canvas={scale=0.7}]
    \draw[line width=0.4 / 0.7, <-] (0, 0) -- (0.115 / 0.7, 0);
  \end{tikzpicture}}
\newcommand{\F}{\ol{F}\optional\gp}
\newcommand{\dF}{\d{F}\optional\gp}
\def\G{\ol{F}_{\mkern-2mu e}\optional\gp}
\newcommand{\Ginv}{\ol{F}_{\mkern-2mu e}^{-1}\optional\gp}
\renewcommand{\H}[1][\rho]{H\optional\sb{#1}\optional\gp}
\newcommand{\coload}{\ol{\rho}\optional\gp}
\newcommand{\excess}{\tau\optional\gp}
\newcommand{\rank}[1]{r\optional\sp{#1}\optional\gp}
\newcommand{\response}[2]{T\optional\sp{#1}\optional\sb{#2}}
\newcommand{\waiting}[2]{Q\optional\sp{#1}\optional\sb{#2}}
\newcommand{\residence}[2]{R\optional\sp{#1}\optional\sb{#2}}
\newcommand{\sesidence}[2]{S\optional\sp{#1}\optional\sb{#2}}
\newcommand{\ySize}[2][]{y\optional\sp{#1}\optional\sb{#2}}
\newcommand{\zSize}[2][]{z\optional\sp{#1}\optional\sb{#2}}
\newcommand{\y}[1][]{\ySize[#1]{x}}
\newcommand{\z}[1][]{\zSize[#1]{x}}
\newcommand{\arelSize}[2][]{s\optional\sp{#1}\optional\sb{#2}}
\newcommand{\arel}[1][k]{\arelSize[\generic{#1}]{x}}
\newcommand{\textclass}{\textsf}
\newcommand{\NBUE}{\textnormal{\textclass{NBUE}}}
\newcommand{\IMRL}{\textnormal{\textclass{IMRL}}}
\newcommand{\DHR}{\textnormal{\textclass{DHR}}}
\newcommand{\ORname}{\textnormal{\textclass{OR}}}
\newcommand{\OR}[2]{\ORname(
  \ifstrequal{#2}{0}{0}{-#2},
  \ifstrequal{#1}{0}{0}{-#1}
)}
\newcommand{\Gumbel}{\textnormal{\textclass{MDA}}(\Lambda)}
\newcommand{\QIMRL}{\textnormal{\textclass{QIMRL}}}
\newcommand{\QDHR}{\textnormal{\textclass{QDHR}}}
\newcommand{\ENBUE}{\textnormal{\textclass{ENBUE}}}
\newcommand{\Bounded}{\textnormal{\textclass{Bounded}}}
\newcommand{\Qa}[1][]{%
  \textnormal{\textrm{I}}_Q\optional\sp{#1}}
\newcommand{\Qb}[1][]{%
  \textnormal{\textrm{II}}_Q\optional\sp{#1}}
\newcommand{\Rb}[1][]{%
  \textnormal{\textrm{II}}_R\optional\sp{#1}}
\newcommand{\Rc}[1][]{%
  \textnormal{\textrm{III}}_R\optional\sp{#1}}
\newcommand{\Sb}[1][]{%
  \textnormal{\textrm{II}}_S\optional\sp{#1}}
\newcommand{\Sc}[1][]{%
  \textnormal{\textrm{III}}_S\optional\sp{#1}}
\newcommand{\QRS}[1][]{%
  $\E{\waiting{#1}{}}$, $\E{\residence{#1}{}}$, and $\E{\sesidence{#1}{}}$}
\newcommand{\relwork}[2]{\mathrm{RelWork}^{#1}_{#2}}
\newcommand{\st}{\mathrm{st}}
\newcommand{\kmathy}[1]{%
  \ifmmode
    #1%
  \else
    \ifstrequal{#1}{k}{\lowercase{\textrm{\textit{k}}}}{#1}%
  \fi}
\newcommand{\mg}[1]{M/G/\ifdefstring{\f@shape}{it}{}{\kern-0.08ex}\kmathy{#1}}
\newcommand{\policyNameFont}{\textrm}
\newcommand{\policyName}[3]{%
  \ifmmode%
    \textnormal{\policyNameFont{#2\ifempty{#3}{}{-}}}#3%
  \else%
    #1\ifempty{#3}{}{\=/\kmathy{#3}}%
  \fi%
}
\newcommand{\gittins}{\policyName{\gittinsName}{\gittinsMathName}}
\newcommand{\mgittins}{\policyName{\mgittinsName}{\mgittinsMathName}}
\newcommand{\serpt}{\policyName{\serptName}{\serptMathName}}
\newcommand{\mserpt}{\policyName{\mserptName}{\mserptMathName}}
\newcommand{\fb}{\policyName{FB}{FB}}
\newcommand{\srpt}{\policyName{SRPT}{SRPT}}
\newcommand{\rmlf}{\policyName{RMLF}{RMLF}}
\newcommand{\generic}{\policyName{$\pi$}{$\pi$}}
\newcommand{\mgeneric}{\policyName{M\=/$\pi$}{M\=/$\pi$}}
\newcommand{\gittinsName}{Gittins}
\newcommand{\mgittinsName}{M\=/Gittins}
\newcommand{\serptName}{SERPT}
\newcommand{\mserptName}{M\=/SERPT}
\newcommand{\gittinsMathName}{Gittins}
\newcommand{\mgittinsMathName}{M\=/Gittins}
\newcommand{\serptMathName}{SERPT}
\newcommand{\mserptMathName}{M\=/SERPT}
\newcommand{\lengthenMathNames}{%
  \def\gittinsMathName{\gittinsName}%
  \def\mgittinsMathName{\mgittinsName}%
  \def\serptMathName{\serptName}%
  \def\mserptMathName{\mserptName}%
}
\begin{document}

\title{Optimal Multiserver Scheduling with Unknown Job Sizes \\ in Heavy Traffic}

\ifacmart{
  \acmclean
  \author{Ziv Scully}
  \affiliation{%
    \institution{Carnegie Mellon University}
    \department{Computer Science Department}
    \streetaddress{5000 Forbes Ave}
    \city{Pittsburgh}
    \state{PA}
    \postcode{15213}
    \country{USA}}
  \email{zscully@cs.cmu.edu}
  \author{Isaac Grosof}
  \affiliation{%
    \institution{Carnegie Mellon University}
    \department{Computer Science Department}
    \streetaddress{5000 Forbes Ave}
    \city{Pittsburgh}
    \state{PA}
    \postcode{15213}
    \country{USA}}
  \email{igrosof@cs.cmu.edu}
  \author{Mor Harchol-Balter}
  \affiliation{%
    \institution{Carnegie Mellon University}
    \department{Computer Science Department}
    \streetaddress{5000 Forbes Ave}
    \city{Pittsburgh}
    \state{PA}
    \postcode{15213}
    \country{USA}}
  \email{harchol@cs.cmu.edu}
}{
  \author{Ziv Scully}
  \ead{zscully@cs.cmu.edu}
  \author{Isaac Grosof}
  \ead{igrosof@cs.cmu.edu}
  \author{Mor Harchol-Balter}
  \ead{harchol@cs.cmu.edu}
  \address{%
    Carnegie Mellon University,
    Computer Science Department,
    5000 Forbes Ave,
    Pittsburgh,
    PA
    15213,
    USA}
}

\begin{abstract}
  We consider scheduling to minimize mean response time
  of the \mg{k} queue with unknown job sizes.
  In the single-server $k = 1$ case,
  the optimal policy is the \emph{Gittins} policy,
  but it is not known whether Gittins or any other policy
  is optimal in the multiserver case.
  Exactly analyzing the \mg{k} under any scheduling policy is intractable,
  and Gittins is a particularly complicated policy
  that is hard to analyze even in the single-server case.

  In this work we introduce \emph{monotonic Gittins} (\mgittins{}),
  a new variation of the Gittins policy,
  and show that it minimizes mean response time in the heavy-traffic \mg{k}
  for a wide class of finite-variance job size distributions.
  We also show that the
  \emph{monotonic shortest expected remaining processing time} (\mserpt{}) policy,
  which is simpler than \mgittins{},
  is a $2$\=/approximation for mean response time in the heavy traffic \mg{k}
  under similar conditions.
  These results constitute the most general optimality results to date
  for the \mg{k} with unknown job sizes.\ifshort{}{
  Our techniques build upon work by \citet{srpt_multiserver_grosof},
  who study simple policies, such as SRPT, in the \mg{k};
  \citet{fb_heavy_zwart, rmlf_zwart}, and \citet{srpt_heavy_zwart},
  who analyze mean response time scaling of simple policies
  in the heavy-traffic \mg{1};
  and \citet{m/g/1_gittins_aalto, mlps_gittins_aalto}
  and \citet{soap_scully, m-serpt_scully},
  who characterize and analyze the Gittins policy in the \mg{1}.}
\end{abstract}

\maketitle

\section{Introduction}
\label{sec:intro}

Scheduling to minimize mean response time\footnote{%
  A job's \emph{response time},
  also called \emph{sojourn time} or \emph{latency},
  is the amount of time between its arrival and its completion.}
of the \mg{k} queue
is an important problem in queueing theory.
The single-server $k = 1$ case has been well studied.
If the scheduler has access to each job's exact size,
the \emph{shortest remaining processing time} (SRPT) policy
is easily shown to be optimal
\citep{srpt_optimal_schrage}.
If the scheduler does not know job sizes,
which is very often the case in practical systems,
then a more complex policy called the \emph{Gittins} policy
is known to be optimal
\citep{m/g/1_gittins_aalto, mlps_gittins_aalto, book_gittins}.
The Gittins policy tailors its priority scheme to the job size distribution,
and it takes a simple form in certain special cases.
For example, for distributions with \emph{decreasing hazard rate} (DHR),
Gittins becomes the \emph{foreground-background} (FB) policy,\footnote{%
  FB is the policy that prioritizes the job of least age,
  meaning the job that has been served the least so far.
  It is also known as \emph{least attained service} (LAS).}
so FB is optimal in the \mg{1} for DHR job size distributions
\citep{fb_optimality_misra, m/g/1_gittins_aalto, mlps_gittins_aalto}.

In contrast to the \mg{1}, the \mg{k} with $k \geq 2$
has resisted exact analysis,
even for very simple scheduling policies.
As such, much less is known about minimizing mean response time in the \mg{k},
with the only nontrivial results holding under heavy traffic.\footnote{%
  Here ``heavy traffic'' refers to the limit as
  the system load approaches capacity for a fixed number of servers.}
For known job sizes,
recent work by \citet{srpt_multiserver_grosof} shows that
a multiserver analogue of SRPT is optimal in the heavy-traffic \mg{k}.
For unknown job sizes, \citet{srpt_multiserver_grosof} address only the case of
DHR job size distributions,
showing that a multiserver analogue of FB
is optimal in the heavy-traffic \mg{k}.\footnote{%
  Both the SRPT and FB optimality results of \citet{srpt_multiserver_grosof}
  hold under technical conditions similar to finite variance.}
But in general, optimal scheduling is an open problem for unknown job sizes,
even in heavy traffic.
We therefore ask:
\begin{quote}
  \emph{What scheduling policy minimizes mean response time
    in the heavy-traffic \mg{k}
    with unknown job sizes and general job size distribution?}
\end{quote}

This is a very difficult question.
In order to answer it,
we draw upon several recent lines of work in scheduling theory.
\begin{itemize}
\item
  As part of their heavy-traffic optimality proofs,
  \citet{srpt_multiserver_grosof} use a tagged job method
  to stochastically bound
  \mg{k} response time under each of SRPT and FB
  relative to \mg{1} response time (\cref{fig:mgk}) under the same policy.
\item
  \citet{srpt_heavy_zwart} and \citet{fb_heavy_zwart}
  characterize the heavy-traffic scaling of
  \mg{1} mean response time under SRPT and FB, respectively.
\item
  \Citet{m-serpt_scully} show that a policy called
  \emph{monotonic shortest expected remaining processing time} (\mserpt{}),
  which is considerably simpler than Gittins,
  has \mg{1} mean response time within a constant factor of that of Gittins.
\end{itemize}
While these prior results do not answer the question on their own,
together they suggest a plan of attack for proving optimality
in the heavy-traffic \mg{k}.

When searching for a policy to minimize mean response time,
a natural candidate is a multiserver analogue of Gittins.
As a first step,
one might hope to use the tagged job method of \citet{srpt_multiserver_grosof}
to stochastically bound \mg{k} response time under Gittins
relative to \mg{1} response time.
Unfortunately, the tagged job method
does not apply to multiserver Gittins,
because it relies on both stochastic and worst-case properties of the scheduling policy,
whereas Gittins has poor worst-case properties.

One of our key ideas is to introduce a new variant of Gittins,
called \emph{monotonic Gittins} (\mgittins{}),
that has better worst-case properties than Gittins
while maintaining similar stochastic properties.
This allows us to generalize the tagged job method \citep{srpt_multiserver_grosof}
to \mgittins{},
thus bounding its \mg{k} response time relative to its \mg{1} response time.

Our \mg{k} analysis of \mgittins{}
reduces the question of whether \mgittins{} is optimal in the heavy-traffic \mg{k}
to analyzing the heavy-traffic scaling of \mgittins{}'s \mg{1} mean response time.
However, there are no heavy-traffic scaling results for the \mg{1}
under policies other than
SRPT \citep{srpt_heavy_zwart},
FB \citep{fb_heavy_zwart},
\emph{first-come, first served} (FCFS) \citep{fcfs_heavy_dist_kollerstrom, fcfs_heavy_mean_kollerstrom},
and a small number of other simple policies
\citep{rmlf_zwart, two_classes_chen}.
To remedy this, we derive heavy-traffic scaling results for \mgittins{} in the \mg{1}.
It turns out that analyzing \mgittins{} directly is very difficult.
Fortunately, \mgittins{} has a simpler cousin, \mserpt{},
which \citet{m-serpt_scully} introduce and analyze.
We analyze \mserpt{} in heavy traffic
as a key stepping stone in our \ifshort{}{heavy-traffic }analysis of \mgittins{}.

This paper makes the following contributions:
\begin{itemize}
\item
  We introduce the \mgittins{} policy and prove that
  it minimizes mean response time in the heavy-traffic \mg{k}
  for a large class of finite-variance job size distributions
  (\cref{thm:mgittins_opt}).
\item
  We also prove that the simple and practical \mserpt{} policy
  is a $2$\=/approximation for mean response time in the heavy-traffic \mg{k}
  for a large class of finite-variance job size distributions
  (\cref{thm:mserpt_approx}).
\item
  We characterize the heavy-traffic scaling
  of mean response time in the \mg{1}
  under Gittins, \mgittins{}, and \mserpt{}
  (\cref{thm:heavy}).
\end{itemize}
\Cref{sec:main_results} formally states these results
and compares them to prior work.
Their proofs rely on a large collection of intermediate results,
which we outline in detail in \cref{sec:overview}
and prove in \cref{sec:mgk, sec:rank_bounds, sec:heavy}.

\section{Preliminaries}

We consider an \mg{k} queue with arrival rate~$\lambda$
and job size distribution~$X$.
Each of the $k$ servers has speed $1/k$,
so regardless of the number of servers,
the total service rate is~$1$
and the system load is $\rho = \lambda\E{X}$.
This allows us to easily compare the \mg{k} system
to a single-server \mg{1} system, as illustrated in \cref{fig:mgk}.
We assume a preempt-resume model with no preemption overhead.
This means that a single-server \mg{1} system
can simulate any \mg{k} policy
by time-sharing between $k$ jobs.

Throughout this paper we consider the $\rho \to 1$ or \emph{heavy-traffic} limit.
This is the $\lambda \to 1/\E{X}$ limit
with the job size distribution~$X$ and number of servers~$k$ held constant.

\begin{figure}
  \centering
  \begin{tikzpicture}[semithick, scale=\ifshort{0.22}{0.31}]
  \begin{scope}
    \node at (7.76, 5.25) {\textsc{Single-Server System}};
    \draw (10.6, 0) circle (1.6);
    \node[below] at (10.6, -1.5) {speed $1$};
    \draw[->] (12.2, 0) -- ++(0.8, 0);
    \draw (4.75, -1.25) -- (9, -1.25) -- (9, 1.25) -- (4.75, 1.25);
    \draw (7.75, -1.25) -- ++(0, 2.5);
    \draw (6.5, -1.25) -- ++(0, 2.5);
    \draw (5.25, -1.25) -- ++(0, 2.5);
    \draw[->] (3.5, 0) node [left] {$\lambda$} -- ++(0.8, 0);
  \end{scope}
  \begin{scope}[shift={(\ifshort{18}{13}, 0)}]
    \node at (8.83, 5.25) {\textsc{$k$-Server System}};
    \newcommand{\server}[2]{%
      \draw (9, #1) -- (11, #2) -- (12.5, #2) ++(0.55, 0) circle (0.55)
        ++(0, -0.35) node [below] {\ifshort{\small}{}speed $1/k$} ++(0.55, 0.35) edge[->] ++(0.8, 0);}
    \server{0.98}{3.5}
    \server{0.21}{0.75}
    \filldraw (13.05, -1.4) circle (1.5mu) ++(0, -0.5) circle (1.5mu) ++(0, -0.5) circle (1.5mu);
    \server{-0.98}{-3.5}
    \draw (4.75, -1.25) -- (9, -1.25) -- (9, 1.25) -- (4.75, 1.25);
    \draw (7.75, -1.25) -- ++(0, 2.5);
    \draw (6.5, -1.25) -- ++(0, 2.5);
    \draw (5.25, -1.25) -- ++(0, 2.5);
    \draw[->] (3.5, 0) node [left] {$\lambda$} -- ++(0.8, 0);
  \end{scope}
\end{tikzpicture}

  \captionsqueeze
  \caption{Single-Server and $k$-Server Systems}
  \label{fig:mgk}
\end{figure}

We write $F$ for the cumulative distribution function of~$X$
and $\F{x} = 1 - F(x)$ for its tail.
We assume that $X$ has a continuous, piecewise-monotonic\footnote{%
  A function is piecewise-monotonic if, roughly speaking,
  it switches between increasing and decreasing finitely many times
  in any compact interval.}
hazard rate\ifshort{ $h(x) = F'(x)/\F{x}$.}{
\begin{align*}
  h(x) = \frac{\frac{\d}{\d{x}} F(x)}{\F{x}}.
\end{align*}}
We also frequently work with the expected remaining size of a job at age~$a$,
which is $\E{X - a \given X > a}$.
We assume it, too, is continuous and piecewise-monotonic as a function of~$a$.
\ifshort{\footnote}{\par}{%
The above assumptions on hazard rate and expected remaining size
are not restrictive and serve primarily to simplify presentation.\ifshort{}{
It is very likely that our proofs can be generalized to relax them.}}

\subsection{SOAP Policies and Rank Functions}
\label{sub:rank_functions}

All of the scheduling policies considered in this work
are in the class of \emph{SOAP policies} \citep{soap_scully},
generalized to a multiserver setting.
In a single-server setting,
a SOAP policy~\generic{} is specified by a \emph{rank function}
\ifshort{\(}{\begin{align*}}
  \rank{\generic{}}{} : \R_+ \to \R
\ifshort{\)}{\end{align*}}
which maps a job's \emph{age},
namely the amount of service it has received so far,
to its \emph{rank}, or priority level.
Single-server SOAP policies work by always serving
the job of \emph{minimal rank},
breaking ties in FCFS fashion.\footnote{%
  The full SOAP class allows a job's rank to depend
  on both its age and its ``static'' characteristics,
  such as its size or class,
  but we do not use this generality in this paper.}

As an example, FB is a SOAP policy with $\rank{\fb{}}{a} = a$.
\ifshort{L}{Because l}ower age corresponds to lower rank,
\ifshort{so }{}FB prioritizes the job of least age.\footnote{%
  When multiple jobs are tied for least age,
  FB shares the server among all such jobs
  because the rank function is increasing.
  See \citet[Appendix~B]{soap_scully} for details.}

A multiserver SOAP policy uses the same rank function as its single-server analogue.
The only difference is that the system can serve up to $k$~jobs,
so a multiserver SOAP policy \ifshort{%
serves the $k$ jobs of minimal rank,
breaking ties in FCFS fashion.
}{%
works as follows:
\begin{itemize}
\item
  If there are at most $k$ jobs in the system,
  serve all of them.
\item
  If there are more than $k$ jobs in the system,
  serve the $k$ jobs of minimal rank,
  breaking ties in FCFS fashion.
\end{itemize}}
We often compare the $k$-server variant of a policy~\generic{}
to its single-server analogue.
When it is necessary to distinguish between them,
we write \generic{k} for the $k$-server version of a policy,
so \generic{1} is the single-server version.
We write $\response{\generic{k}}{x}$
for the size-conditional response time distribution of jobs of size~$x$
under \generic{k},
and we write $\response{\generic{k}}{}$
for the overall response time distribution.

\ifshort{We primarily consider four policies:}{There are four main policies we consider in this work:}
\ifshort{
    shortest expected remaining processing time (SERPT),
    monotonic SERPT (\mserpt{}),
    Gittins,
    and monotonic Gittins (\mgittins{}).
}{
    SERPT, \mserpt{}, Gittins, and \mgittins{}.}
None of the policies need job size information,
but each uses the job size distribution to tune its rank function.
\ifshort{}{As an example, \cref{fig:rank_examples} shows
the four rank functions for a bounded distribution with nonmonotonic hazard rate.

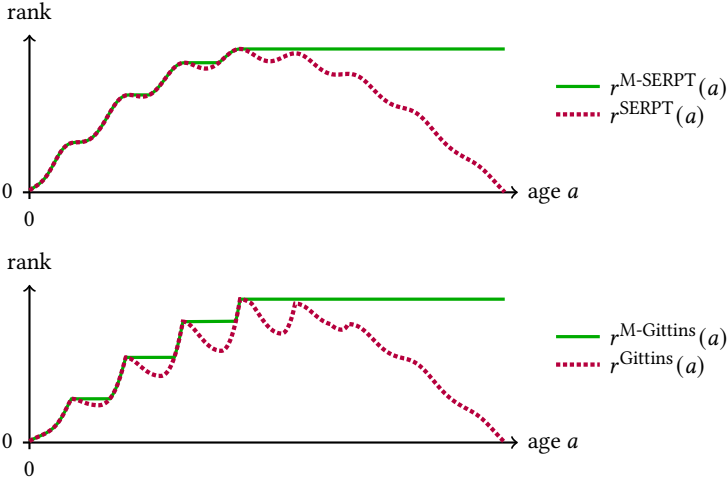
\begin{figure}
  \centering
  \renewcommand{\xscale}{180} 
\renewcommand{\yscale}{360} 

\begin{tikzpicture}[figure]
  \axes{1}{0.15}{$0$}{age~$a$}{$0$}{rank}

  \draw[primary] \mserptExampleRank;
  \draw[secondary] \serptExampleRank;

  \iftwocol{\newcommand{\xlegend}{1.05}}{\newcommand{\xlegend}{1.2}}
  \draw[primary] (\xlegend, 0.11) ++(-0.092, 0) -- ++(0.092, 0);
  \node[right] at (\xlegend, 0.11) {$\rank{\mserpt{}}{a}$};
  \draw[secondary] (\xlegend, 0.081) ++(-0.092, 0) -- ++(0.092, 0);
  \node[right] at (\xlegend, 0.081) {$\rank{\serpt{}}{a}$};
\end{tikzpicture}

  \renewcommand{\xscale}{180} 
\renewcommand{\yscale}{360} 

\begin{tikzpicture}[figure]
  \axes{1}{0.15}{$0$}{age~$a$}{$0$}{rank}

  \draw[primary] \mgittinsExampleRank;
  \draw[secondary] \gittinsExampleRank;

  \iftwocol{\newcommand{\xlegend}{1.05}}{\newcommand{\xlegend}{1.2}}
  \draw[primary] (\xlegend, 0.11) ++(-0.092, 0) -- ++(0.092, 0);
  \node[right] at (\xlegend, 0.11) {$\rank{\mgittins{}}{a}$};
  \draw[secondary] (\xlegend, 0.081) ++(-0.092, 0) -- ++(0.092, 0);
  \node[right] at (\xlegend, 0.081) {$\rank{\gittins{}}{a}$};
\end{tikzpicture}

  \captionsqueeze
  \caption{Rank Function Examples}
  \label{fig:rank_examples}
\end{figure}}
\ifshort{The rank functions of the policies are as follows:
\begin{definition}
  \label{def:serpt}
  \label{def:mserpt}
  \label{def:gittins}
  \label{def:mgittins}
    \begin{align*}
    \rank{\serpt{}}{a}
        &= \E{X - a \given X > a} \\
    \rank{\mserpt{}}{a}
        &= \max_{b \in [0, a]} \rank{\serpt{}}{b} \\
    \rank{\gittins{}}{a}
        &= \inf_{b > a} \frac{\E{\min\{X, b\} - a \given X > a}}{\P{X \leq b \given X > a}} \\
    \rank{\mgittins{}}{a}
        &= \max_{b \in [0, a]} \rank{\gittins{}}{b}.
    \end{align*}
\end{definition}
}{
\begin{definition}
  \label{def:serpt}
  The \emph{shortest expected remaining processing time} (SERPT) policy is
  the SOAP policy with rank function
  \begin{align*}
    \rank{\serpt{}}{a}
    = \E{X - a \given X > a}
    \ifshort{}{= \frac{\int_a^\infty \F{t} \d{t}}{\F{a}}}.
  \end{align*}
  As a reminder, lower rank means better priority,
  so, as hinted by its name, SERPT prioritizes the job
  of least expected remaining size.
\end{definition}

\begin{definition}
  \label{def:mserpt}
  The \emph{monotonic SERPT} (\mserpt{}) policy is
  the SOAP policy with monotonic rank function
  \begin{align*}
    \rank{\mserpt{}}{a}
    = \max_{b \in [0, a]} \rank{\serpt{}}{b}.
  \end{align*}
\end{definition}

\begin{definition}
  \label{def:gittins}
  The \emph{Gittins} policy is
  the SOAP policy with rank function
  \begin{align*}
    \rank{\gittins{}}{a}
    = \inf_{b > a} \frac{\E{\min\{X, b\} - a \given X > a}}{\P{X \leq b \given X > a}}
    \ifshort{}{= \inf_{b > a} \frac{\int_a^b \F{t} \d{t}}{\F{a} - \F{b}}}.
  \end{align*}
\end{definition}

\begin{definition}
  \label{def:mgittins}
  The \emph{monotonic Gittins} (\mgittins{}) policy is
  the SOAP policy with monotonic rank function
  \begin{align*}
    \rank{\mgittins{}}{a}
    = \max_{b \in [0, a]} \rank{\gittins{}}{b}.
  \end{align*}
\end{definition}
}
\ifshort{}{The \mgittins{} and \mserpt{} policies,
which both have monotonic rank functions,
are the primary focus of this paper.
Some of our intermediate results apply more broadly
to any policy with a monotonic rank function.}

\begin{definition}
  A SOAP policy~\generic{} is \emph{monotonic}
  if its rank function is nondecreasing,
  meaning $\rank{\generic{}}{a} \leq \rank{\generic{}}{b}$
  for all ages $a < b$.\footnote{%
    The nonincreasing case\ifshort{
      includes only FCFS and is thus less interesting.%
    }{
      is less interesting,
      because all nonincreasing rank functions encode FCFS.}}
\end{definition}

\ifshort{%
  SERPT and Gittins can have nonmonotonic rank functions.
}{%
  \Cref{fig:rank_examples} shows
  the SERPT, \mserpt{}, Gittins, and \mgittins{} rank functions
  for a bounded distribution with nonmonotonic hazard rate.
  Notice that SERPT and Gittins are not monotonic.%
}
This makes it hard to analyze their \mg{k} response time\ifshort{}{
(\cref{app:mgk_nonmonotonic})}.
In contrast, the \mserpt{} and \mgittins{} are monotonic:
their rank functions alternate between constant regions
and strictly increasing regions.

\ifshort{%
  While the rank functions of Gittins and SERPT may not be monotonic,
  they are still well behaved, namely continuous and piecewise monotonic,
  under our assumptions on the job size distribution
  \citep[Theorem~1]{mlps_gittins_aalto}.%
}{%
While the rank functions of Gittins and SERPT may not be monotonic,
they are still well behaved under our assumptions on the job size distribution.

\begin{lemma}
  \label{lem:rank_continuous}
  Under the assumption that the job size distribution~$X$
  has continuous and piecewise-monotonic
  hazard rate and expected remaining size functions,
  each of $\rank{\serpt{}}{}$, $\rank{\mserpt{}}{}$,
  $\rank{\gittins{}}{}$, and $\rank{\mgittins{}}{}$
  is continuous and piecewise-monotonic.
\end{lemma}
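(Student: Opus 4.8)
The plan is to establish continuity and piecewise-monotonicity one rank function at a time, exploiting the fact that all four are built from the two primitive functions $m(a) = \E{X - a \given X > a}$ (expected remaining size) and $h(a)$ (hazard rate), both of which are assumed continuous and piecewise-monotonic. First I would handle $\rank{\serpt{}}{}$, which is the easiest: by definition $\rank{\serpt{}}{a} = m(a)$, so the claim is immediate from the hypothesis. Next I would treat $\rank{\mserpt{}}{}$, which is the running maximum $\rank{\mserpt{}}{a} = \max_{b \in [0,a]} m(b)$. The running maximum of a continuous function is continuous, and the running maximum of a piecewise-monotonic function is piecewise-monotonic: on each maximal interval where $m$ is increasing, the running max equals $m$ itself and is increasing; on each interval where $m$ is decreasing, the running max is constant (equal to the value of $m$ at the left endpoint, or to an earlier peak). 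Since $m$ has only finitely many monotonicity switches in any compact interval, so does the running max, which gives the piecewise-monotonic conclusion; continuity follows because the running max is a sup of continuous functions that is locally attained.

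The substantive work is $\rank{\gittins{}}{}$. Here I would invoke \citet[Theorem~1]{mlps_gittins_aalto}, which (as noted in the short version of the excerpt) already establishes that under our standing assumptions on the job size distribution the Gittins rank function is continuous and piecewise-monotonic. If one wanted a self-contained argument instead, the approach would be to write $\rank{\gittins{}}{a} = \inf_{b > a} G(a, b)$ where $G(a,b) = \E{\min\{X,b\} - a \given X > a}/\P{X \le b \given X > a}$, observe that $G$ is jointly continuous in $(a,b)$ on $\{a < b\}$ (both numerator and denominator are continuous, and the denominator is strictly positive away from the distribution's support issues), and then argue that the infimum over $b$ varies continuously in $a$ and switches between monotone pieces finitely often because the "argmin" $b^*(a)$ is governed by the sign of $h(b^*(a)) \cdot m(a) - 1$-type conditions, which change sign only finitely often by piecewise-monotonicity of $h$ and $m$. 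I expect this to be the main obstacle, both because the formula for $\rank{\gittins{}}{}$ is genuinely more delicate than the others and because making the "finitely many monotonicity switches" claim rigorous requires carefully tracking how the optimal cutoff $b^*(a)$ moves; citing \citet[Theorem~1]{mlps_gittins_aalto} is the clean way to sidestep this. Finally, $\rank{\mgittins{}}{}$ is the running maximum of $\rank{\gittins{}}{}$, so it inherits continuity and piecewise-monotonicity by exactly the same running-maximum argument used for \mserpt{}, completing the proof.
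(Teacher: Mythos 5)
Your proposal matches the paper's proof in substance: the paper's argument is exactly to observe that the SERPT claim is the hypothesis on expected remaining size, to cite \citet[Theorem~1]{mlps_gittins_aalto} for Gittins, and to implicitly reduce the \mserpt{} and \mgittins{} cases to the running-max observation that you spell out in full. The only difference is that you make the running-max lemma explicit and sketch a self-contained alternative for Gittins before falling back to the same citation; both are fine and neither changes the logical content.
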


\begin{proof}
  It suffices to prove the claims for $\rank{\serpt{}}{}$
  and $\rank{\gittins{}}{}$.
  The claim for $\rank{\serpt{}}{}$ is exactly our assumption on
  expected remaining size,
  and the claim for $\rank{\gittins{}}{}$ is a known result
  \citep[Theorem~1]{mlps_gittins_aalto}.
\end{proof}}

\subsection{Job Size Distribution Classes}
\label{sub:distributions}

We consider several classes of job size distributions in this paper.
We briefly describe each class before giving the formal definitions.
\begin{itemize}
\item
  The $\OR{1}{\infty}$ class (\cref{def:or})
  contains, roughly speaking, distributions with Pareto-like tails.\ifshort{}{
  \begin{itemize}
  \item}
    We focus especially on the $\OR{2}{\infty}$ subclass,
    all members of which have finite variance.\ifshort{}{
  \end{itemize}}
\item
  The $\Gumbel$ class\ifshort{}{ (\cref{def:gumbel})}
  contains, roughly speaking, distributions with smooth tails
  that are lighter than Pareto tails.
  It includes\ifshort{}{, among others,}
  exponential, normal, log-normal, Weibull, and Gamma distributions.
\item
  The $\QDHR$ and $\QIMRL$ classes (\ifshort{\cref{def:qdhr_qimrl}}{\cref{def:qdhr, def:qimrl}})
  are relaxations of the well-known \emph{decreasing hazard rate} ($\DHR$)
  and \emph{increasing mean residual lifetime} ($\IMRL$) classes
  \citep{m/g/1_gittins_aalto, mlps_gittins_aalto, fb_optimality_misra,
    \ifshort{}{book_shaked,}fb_nonoptimality_aalto, dhr_dmrl_optimality_righter,
    dhr_ihr_optimality_righter, mlps_delay_aalto}.
  $\QDHR$ contains distributions whose hazard rate
  is roughly decreasing with age,
  even if it is not perfectly monotonic,
  and $\QIMRL$ contains distributions with
  roughly increasing expected remaining size.\ifshort{}{%
  \begin{itemize}
  \item
    We focus especially on the subclasses
    $\Gumbel \cap \QDHR$ and $\Gumbel \cap \QIMRL$.
  \end{itemize}}
\item
  The $\ENBUE$ class (\cref{def:enbue})
  is a relaxation of the well-known \emph{new better than used in expectation} ($\NBUE$) class
  \citep{m/g/1_gittins_aalto, mlps_gittins_aalto\ifshort{}{,book_shaked}}.\ifshort{}{\footnote{%
    Because the $\NBUE$ terminology originates in reliability analysis,
    the word ``better'' here means ``longer''.}}
  It contains distributions whose expected remaining size
  reaches a global maximum at some age\ifshort{,
  which includes all distributions with bounded support.}{.
  \begin{itemize}
  \item
    We focus especially on the $\Bounded$ subclass,
    which contains all bounded distributions.
  \end{itemize}}
\end{itemize}

These classes play two different roles in our analysis.
\begin{itemize}
\item
  Some of the classes broadly characterize
  the asymptotic behavior of the tail~$\F{}$.
  These include $\OR{1}{\infty}$, $\Gumbel$, and $\ENBUE$.
  Virtually all job size distributions of interest are in one of these classes,
  so requiring membership in one of them, as in \cref{thm:heavy},
  should not be viewed as a major restriction.
\item
  Some of the classes impose additional conditions on the job size distribution
  that help us bound the \mgittins{} and \mserpt{} rank functions (\cref{sec:rank_bounds}).
  These include $\QDHR$, $\QDHR$, and $\Bounded$.
  While these classes are much broader than those previously studied (\cref{sub:prior_work}),
  they do not cover all distributions of interest.
  Requiring membership in one of them, as in \cref{thm:mgittins_opt, thm:mserpt_approx},
  represents a genuine restriction.
\end{itemize}

\begin{definition}
  \label{def:or}
  A function~$f$ is \emph{$O$-regularly varying}
  if there exist exponents $\beta \geq \alpha > 0$
  along with constants $C_0, x_0 > 0$
  such that for all $y \geq x \geq x_0$,
  \begin{align*}
    \frac{1}{C_0} \gp*{\frac{y}{x}}^{-\beta}
    \leq \frac{f(y)}{f(x)}
    \leq C_0 \gp*{\frac{y}{x}}^{-\alpha}.
  \end{align*}
  We write $\OR{\alpha_0}{\beta_0}$ for the set of
  $O$-regularly varying functions
  where the exponents $\alpha$ and~$\beta$ above may be chosen such that
  $\alpha_0 < \alpha \leq \beta < \beta_0$.\footnote{%
    This is not the standard definition of $O$-regular variation,
    but it is equivalent to it \citep[Section~2.2.1]{book_bingham}.
    Specifically, our $\OR{\alpha_0}{\beta_0}$
    contains the $O$-regularly varying functions
    whose Matuszewska indices are in the interval $(-\beta_0, -\alpha_0)$.}
  We use the same $\OR{\alpha_0}{\beta_0}$ notation to represent
  the class of distributions
  whose tails are in $\OR{\alpha_0}{\beta_0}$.
\end{definition}

\ifshort{%
\begin{definition}
  \label{def:qdhr_qimrl}
  A job size distribution is in the
  \emph{quasi-decreasing hazard rate} ($\QDHR$) class,
  if there exist
  a strictly increasing function $m : \R_+ \to \R_+$,
  an exponent $\gamma \geq 1$,
  and constants $C_0, x_0 > 0$ such that for all $x \geq x_0$,
  \begin{align*}
      m(x) \leq \ifshort{h(x)^{-1}}{\frac{1}{h(x)}} \leq m(C_0 x^\gamma).
  \end{align*}
  Similarly, a distribution is in the
  \emph{quasi-increasing mean residual lifetime} ($\QIMRL$) class
  if under the same conditions,
  \begin{align*}
    m(x) \leq \E{X - x \given X > x} \leq m(C_0 x^\gamma).
  \end{align*}
\end{definition}
}{
\begin{definition}
  \label{def:qdhr}
  A job size distribution is in the
  \emph{quasi-decreasing hazard rate} class, denoted $\QDHR$,
  if there exist
  a strictly increasing function $m : \R_+ \to \R_+$,
  an exponent $\gamma \geq 1$,
  and constants $C_0, x_0 > 0$ such that for all $x \geq x_0$,
  \begin{align*}
    m(x) \leq \frac{1}{h(x)} \leq m(C_0 x^\gamma).
  \end{align*}
\end{definition}

\begin{definition}
  \label{def:qimrl}
  A job size distribution is in the
  \emph{quasi-increasing mean residual lifetime} class, denoted $\QIMRL$,
  if there exist
  a strictly increasing function $m : \R_+ \to \R_+$,
  an exponent $\gamma \geq 1$,
  and constants $C_0, x_0 > 0$ such that for all $x \geq x_0$,
  \begin{align*}
    m(x) \leq \E{X - x \given X > x} \leq m(C_0 x^\gamma).
  \end{align*}
\end{definition}
}

\begin{definition}
  \label{def:enbue}
  A job size distribution is in the
  \emph{eventually new better than used in expectation} class,
  denoted $\ENBUE$,
  if there exists an age $a_* \geq 0$ at which a job's expected remaining size
  reaches a global maximum,
  meaning that for all $x \neq a_*$,
  \begin{align*}
    \E{X - a_* \given X > a_*} \geq \E{X - x \given X > x}.
  \end{align*}
  \ifshort{$\ENBUE$ contains $\Bounded$, distributions with bounded support.}{}
\end{definition}

\ifshort{}{\begin{definition}
  A job size distribution is in the \emph{bounded} class, denoted $\Bounded$,
  if there exists $x_{\max} < \infty$ such that $\F{x_{\max}} = 0$.
\end{definition}}

\ifshort{}{\begin{definition}
  \label{def:gumbel}
  A job size distribution is said to be in the
  \emph{Gumbel domain of attraction}, denoted $\Gumbel$,
  under certain conditions specified in extreme value theory
  \citep{book_resnick}.
\end{definition}}

The exact characterization of $\Gumbel$\ifshort{,
  which comes from extreme value theory \citep{book_resnick},}{}
is outside the scope of this paper.
The most important property is that distributions in $\Gumbel$
are lighter-tailed than all Pareto distributions.\ifshort{
Its proof follows from a known characterization of $\Gumbel$
\citep[Proposition~1.4]{book_resnick}.}{}

\begin{lemma}
  \label{lem:F_gumbel}
  If $X \in \Gumbel$, then $\F{x} = o(x^{-\alpha})$ for all $\alpha > 0$.
\end{lemma}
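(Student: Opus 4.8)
The plan is to reduce the claim to the von Mises--type representation of Gumbel-domain distributions, which is precisely the content of the cited \citet[Proposition~1.4]{book_resnick}. First I would dispose of the trivial case: if $X$ has bounded support, say $\F{x_{\max}} = 0$ with $x_{\max} < \infty$, then $\F{x} = 0 = o(x^{-\alpha})$ for all $x > x_{\max}$ and every $\alpha > 0$, so the lemma holds. Hence I may assume the right endpoint of $X$ is $+\infty$.

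By the representation theorem for $\Gumbel$, there exist a finite $x_0$, a function $c(\cdot)$ with $c(x) \to c > 0$ as $x \to \infty$, a function $g(\cdot)$ with $g(x) \to 1$, and a positive, absolutely continuous auxiliary function $a(\cdot)$ whose density satisfies $a'(x) \to 0$ as $x \to \infty$, such that
\[
  \F{x} = c(x) \exp\left( - \int_{x_0}^x \frac{g(t)}{a(t)} \, \d{t} \right)
  \qquad \text{for all } x > x_0.
\]
The key structural fact I would extract from this is that $a(x) = o(x)$: since $a(x) = a(x_0) + \int_{x_0}^x a'(t)\,\d{t}$ and $a'(t) \to 0$, a standard Ces\`aro/L'H\^opital argument gives $a(x)/x \to 0$.

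Given this, fix $\alpha > 0$. Because $a(x) = o(x)$ and $g(x) \to 1$, there is a threshold $x_1 > x_0$ beyond which $g(t) \geq \tfrac12$ and $a(t) \leq t/(4\alpha)$, and hence $g(t)/a(t) \geq 2\alpha/t$ for $t \geq x_1$. Integrating from $x_1$ to $x$ gives $\int_{x_0}^x g(t)/a(t)\,\d{t} \geq 2\alpha \log(x/x_1) + O(1)$, so the representation above yields $\F{x} \leq C x^{-2\alpha}$ for some finite constant $C$ and all sufficiently large $x$. Therefore $x^{\alpha}\F{x} \leq C x^{-\alpha} \to 0$, i.e.\ $\F{x} = o(x^{-\alpha})$; since $\alpha$ was arbitrary, this proves the lemma.

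I do not expect a genuine obstacle here. The only points requiring care are correctly invoking the representation theorem with the auxiliary-function property $a(x) = o(x)$ (standard, but it is what makes the argument work), and treating the bounded-support case separately so that comparison with $x^{-\alpha}$ makes sense. An essentially equivalent alternative would be to cite that every $\Gumbel$ distribution has a rapidly varying tail (index $-\infty$) and then invoke the standard fact that rapidly varying functions decay faster than every power of~$x$; this bypasses the explicit integral estimate but rests on the same background theory.
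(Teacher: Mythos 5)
Your proof is correct and follows essentially the same route as the paper, which simply cites \citet[Proposition~1.4]{book_resnick} for this fact. You have merely unpacked the cited representation theorem (including the key step $a(x) = o(x)$ from $a'(x) \to 0$), so the two proofs agree.
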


\ifshort{}{\begin{proof}
  The result follows from a known characterization of $\Gumbel$
  \citep[Proposition~1.4]{book_resnick}.
\end{proof}}

\section{Main Results}
\label{sec:main_results}

We now present our main results,\ifshort{
beginning with our heavy-traffic \mg{k} optimality result.}{
explaining how they relate to prior work in \cref{sub:prior_work}.
We begin with our heavy-traffic \mg{k} optimality result.}

\begin{theorem}
  \label{thm:mgittins_opt}
  \ifshort{We have}{In an \mg{k}, if
  \begin{align*}
    X \in \OR{2}{\infty}
      \cup (\Gumbel \cap \QDHR)
      \cup \Bounded,
  \end{align*}
  then}
  \ifshort{\(}{\begin{align*}}
    \lim_{\rho \to 1} \ifshort{\slashfrac}{\frac}{\E{\response{\mgittins{k}}{}}}{\E{\response{\gittins{1}}{}}} = 1\ifshort{}{.}
  \ifshort{\)}{\end{align*}}
  \ifshort{if $X$ is in $\OR{2}{\infty}$, $\Gumbel \cap \QDHR$, or $\Bounded$.}{}
  In such cases, \mgittins{k} is optimal for mean response time in heavy traffic.
\end{theorem}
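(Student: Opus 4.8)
The plan is to sandwich $\E{\response{\mgittins{k}}{}}$ between $\E{\response{\gittins{1}}{}}$ and $(1+o(1))\,\E{\response{\gittins{1}}{}}$ as $\rho \to 1$, from which both the stated limit and the heavy-traffic optimality claim follow.

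\emph{Lower bound.} Because the preempt-resume \mg{1} can simulate any \mg{k} scheduling policy by time-sharing among $k$~jobs (as noted in the Preliminaries), the response time distribution of any size-unaware \mg{k} policy~\generic{} is also realizable in the \mg{1}. Since Gittins minimizes mean response time among size-unaware policies in the \mg{1}, this gives $\E{\response{\generic{k}}{}} \geq \E{\response{\gittins{1}}{}}$ at every load $\rho < 1$, and in particular $\E{\response{\mgittins{k}}{}} \geq \E{\response{\gittins{1}}{}}$. This reduces the theorem to showing $\limsup_{\rho \to 1} \E{\response{\mgittins{k}}{}} / \E{\response{\gittins{1}}{}} \leq 1$: combined with the lower bound for arbitrary~\generic{}, this simultaneously pins the limit at~$1$ and shows that no size-unaware \mg{k} policy can asymptotically beat \mgittins{k}.

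\emph{Upper bound.} This has two ingredients. First, I would generalize the tagged-job method of \citet{srpt_multiserver_grosof} from SRPT and FB to \mgittins{}. The original argument needs both stochastic and worst-case control of the scheduling policy; the whole point of introducing \mgittins{} is that, unlike Gittins, it has a monotonic rank function, which supplies the worst-case properties the method requires. The output should be a stochastic comparison of $\response{\mgittins{k}}{}$ with $\response{\mgittins{1}}{}$ tight enough to yield $\E{\response{\mgittins{k}}{}} = (1 + o(1))\,\E{\response{\mgittins{1}}{}}$ as $\rho \to 1$. Second, I would invoke the heavy-traffic \mg{1} scaling of \cref{thm:heavy}: under each of the hypotheses $X \in \OR{2}{\infty}$, $X \in \Gumbel \cap \QDHR$, and $X \in \Bounded$, the policies \mgittins{} and Gittins have the same leading-order mean-response-time growth, so $\E{\response{\mgittins{1}}{}} = (1 + o(1))\,\E{\response{\gittins{1}}{}}$. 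Composing the two gives $\E{\response{\mgittins{k}}{}} = (1 + o(1))\,\E{\response{\gittins{1}}{}}$, closing the sandwich.

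\emph{The main obstacle} is \cref{thm:heavy}, the claim that \mgittins{} and Gittins have matching heavy-traffic \mg{1} growth. \mgittins{} is too complicated to analyze head-on, so the route is: (i) analyze the simpler \mserpt{} policy in heavy traffic, building on the \mg{1} transform analysis of \citet{m-serpt_scully} and the heavy-traffic methods of \citet{srpt_heavy_zwart, fb_heavy_zwart}; and (ii) use the rank-function bounds of \cref{sec:rank_bounds} --- which is where the distributional hypotheses $\OR{2}{\infty}$, $\Gumbel \cap \QDHR$, and $\Bounded$ actually enter --- to trap the \mgittins{} rank function between those of \mserpt{} and Gittins over the range of ages that governs heavy-traffic behavior, and to transfer this to a comparison of mean response times. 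The delicate part is that \mg{1} heavy-traffic growth is sensitive to the rank function's large-age behavior, so the rank comparisons must be quantitative and uniform enough to survive being pushed through the response-time analysis. The multiserver generalization of the tagged-job bound is also nontrivial but, relative to the heavy-traffic scaling, comparatively routine.
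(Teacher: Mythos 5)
Your overall scaffolding matches the paper's: the lower bound comes from the simulation argument plus Gittins's $\mg{1}$ optimality, and the upper bound generalizes the tagged-job method of \citet{srpt_multiserver_grosof} to \mgittins{} by exploiting its monotonic rank function, then kills the $k$-dependent correction terms via heavy-traffic estimates. But the final link in your upper-bound chain has a genuine gap, and it is not a cosmetic one.

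You close the argument by invoking \cref{thm:heavy} to deduce $\E{\response{\mgittins{1}}{}} = (1 + o(1))\,\E{\response{\gittins{1}}{}}$ from the fact that \mgittins{} and Gittins ``have the same leading-order mean-response-time growth.'' But \cref{thm:heavy} only asserts $\Theta$-equivalence: it says both quantities are $\Theta\bigl(1 / \bigl((1-\rho)\,\rank{\mserpt{}}{\Ginv{1-\rho}}\bigr)\bigr)$, with unspecified and possibly different implicit constants for the two policies. Matching $\Theta$-order gives you $\limsup \E{\response{\mgittins{1}}{}}/\E{\response{\gittins{1}}{}} \leq C$ for some constant $C \ge 1$, not $C = 1$. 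So the step from ``same growth rate'' to ``ratio tends to $1$'' is unsupported, and with only \cref{thm:heavy} in hand you would obtain constant-factor optimality, not asymptotic optimality.

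The paper closes this gap by a different and sharper mechanism. Rather than compare $\E{\response{\mgittins{1}}{}}$ to $\E{\response{\gittins{1}}{}}$, it bounds $\E{\response{\mgittins{k}}{}}$ by $(1 + o(1))\,\E{\waiting{\mgittins{1}}{}}$ (this is exactly where the tagged-job bound plus the $o(\cdot)$ estimates on residence and inflated residence time are used), and then invokes the \emph{exact, load-independent} inequality
\begin{align*}
  \E{\waiting{\mgittins{1}}{}} \leq \E{\waiting{\gittins{1}}{}} \leq \E{\response{\gittins{1}}{}},
\end{align*}
which is \citet[Proposition~4.7]{m-serpt_scully}. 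This inequality holds with constant exactly $1$ at every load, so it converts directly into the $1 + o(1)$ factor you need, with no asymptotic constant to worry about. Your plan is missing this ingredient; without it, or some other argument that pins the ratio of the leading coefficients to $1$ (which the paper never attempts and which would be substantially harder than what \cref{thm:heavy} provides), the proof does not deliver optimality, only a constant-factor guarantee.

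As a secondary remark, your proposed route for \cref{thm:heavy} itself --- trapping the \mgittins{} rank function between \mserpt{} and Gittins ``over the range of ages that governs heavy-traffic behavior'' --- is not quite what \cref{sec:rank_bounds, sec:heavy} do. The paper bounds the age cutoffs $\y[\mgittins{}], \z[\mgittins{}]$ polynomially in $x$ (e.g., via $\QDHR$ or $\OR{1}{\infty}$) and uses the inequality $\y[\mgittins{}] \leq \y[\mserpt{}] \leq \z[\mserpt{}] \leq \z[\mgittins{}]$ together with the sandwich $\E{\waiting{\mgittins{1}}{}} \leq \E{\waiting{\gittins{1}}{}}$ and $\E{\waiting{\mserpt{1}}{}} \leq 2\E{\waiting{\mgittins{1}}{}}$ to transfer the \mserpt{} analysis to \mgittins{}; it does not compare rank functions pointwise in the manner you suggest. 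This is a more minor mismatch, but worth knowing, since the cutoff bounds (rather than rank-function sandwiching) are what make the integrals in \cref{sec:heavy} tractable.
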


The \mgittins{} policy is based on the Gittins policy,
which is somewhat complex to describe and compute.
Fortunately, the \mserpt{} policy,
which can be much simpler to compute \citep{m-serpt_scully},
also performs well in the heavy-traffic \mg{k}.

\begin{theorem}
  \label{thm:mserpt_approx}
  \ifshort{We have}{In an \mg{k}, if
  \begin{align*}
    X \in \OR{2}{\infty}
      \cup (\Gumbel \cap (\QDHR \cup \QIMRL))
      \cup \Bounded,
  \end{align*}
  then}
  \ifshort{\(}{\begin{align*}}
    \lim_{\rho \to 1} \ifshort{\slashfrac}{\frac}{\E{\response{\mserpt{k}}{}}}{\E{\response{\gittins{1}}{}}} \leq 2\ifshort{}{.}
  \ifshort{\)}{\end{align*}}
  \ifshort{if $X$ is in $\OR{2}{\infty}$, $\Gumbel \cap (\QDHR \cup \QIMRL)$, or $\Bounded$.}{}
  In such cases, \mserpt{k} is a $2$\=/approximation for mean response time in heavy traffic.
\end{theorem}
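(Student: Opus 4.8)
The plan is to derive \cref{thm:mserpt_approx} by composing two facts established elsewhere in the paper: a multiserver-to-single-server reduction for \mserpt{}, and the heavy-traffic \mg{1} scaling of \cref{thm:heavy} applied to both \mserpt{1} and \gittins{1}. Granting these, the argument is short: the reduction gives $\lim_{\rho \to 1} \E{\response{\mserpt{k}}{}} / \E{\response{\mserpt{1}}{}} = 1$, \cref{thm:heavy} gives $\lim_{\rho \to 1} \E{\response{\mserpt{1}}{}} / \E{\response{\gittins{1}}{}} \le 2$, and multiplying the two limits yields the claim. So the real work is in justifying the two ingredients and checking that they hold over exactly the distribution classes $\OR{2}{\infty} \cup (\Gumbel \cap (\QDHR \cup \QIMRL)) \cup \Bounded$.

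First I would establish the reduction. Since \mserpt{} is a monotonic SOAP policy --- by \cref{lem:rank_continuous} and the definition of $\rank{\mserpt{}}{}$, its rank function is continuous and alternates between constant plateaus and strictly increasing stretches --- the tagged-job method of \citet{srpt_multiserver_grosof} generalizes from SRPT and FB to \mserpt{}: one stochastically bounds \mg{k} response time under \mserpt{} by \mg{1} response time under \mserpt{} plus a correction controlled by the $k$-server ``overflow'' work. Under the stated tail hypotheses this correction is $o(\E{\response{\mserpt{1}}{}})$ as $\rho \to 1$, since those classes force $\E{\response{\mserpt{1}}{}} \to \infty$ fast enough to absorb it. The role of $\OR{2}{\infty}$, $\Gumbel$, and $\Bounded$ here is the same as in the corresponding step of \cref{thm:mgittins_opt}, so that analysis can be reused essentially verbatim. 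This reduces everything to comparing \mserpt{} and \gittins{} in the \mg{1}.

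Next I would invoke \cref{thm:heavy}, which expresses the leading-order heavy-traffic asymptotics of $\E{\response{\mserpt{1}}{}}$ and $\E{\response{\gittins{1}}{}}$ as weighted functionals of the respective rank functions. The factor-$2$ comparison then reduces to a rank-function estimate. One inequality is immediate: Gittins' defining infimum over $b > a$ includes the limit $b \to \infty$, which recovers the SERPT rank, so $\rank{\gittins{}}{a} \le \rank{\serpt{}}{a}$ for all $a$, and monotonizing both sides preserves it. The substantive step is the reverse direction --- bounding the heavy-traffic cost of $\rank{\mserpt{}}{}$ by twice that of $\rank{\mgittins{}}{}$ --- and this is exactly where the rank bounds of \cref{sec:rank_bounds} enter. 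For each of the three distribution families, those bounds let one trade an upper bound on $\rank{\mserpt{}}{a}$ for a bound on the Gittins quantity at a comparable age; the constant $2$ comes out of the integration-by-parts identity relating the expected-remaining-size form of the SERPT rank to the completion-probability form of the Gittins rank. The reason \mserpt{} tolerates the broader hypothesis $\QDHR \cup \QIMRL$ (versus $\QDHR$ alone in \cref{thm:mgittins_opt}) is that $\rank{\mserpt{}}{}$ is defined directly through expected remaining size, so its bound is driven by the $\QIMRL$ condition in the $\Gumbel$ case, while $\OR{2}{\infty}$ and $\Bounded$ are handled by their own, separate rank estimates.

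I expect the main obstacle to be this last step: getting the constant down to exactly $2$ rather than a larger $O(1)$ factor. This demands identifying the precise functional of the rank function that governs heavy-traffic \mg{1} mean response time --- not merely up to constants, which is all \citet{m-serpt_scully} needed --- and then verifying that the \mserpt{}-versus-Gittins comparison in that functional is tight at $2$ uniformly across the three distribution classes. A secondary difficulty is the bookkeeping in the reduction step --- confirming the multiserver overflow correction is genuinely $o(\E{\response{\mserpt{1}}{}})$ and not merely $O(\E{\response{\mserpt{1}}{}})$ --- but since this mirrors the \mgittins{} case it should not require new ideas.
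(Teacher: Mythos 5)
Your overall structure — reduce the \mg{k} to the \mg{1} via the monotonic tagged-job argument, then compare \mserpt{1} to \gittins{1} — matches the paper's plan, and your account of the reduction step (and of why the residence/inflated-residence correction vanishes over exactly these distribution classes) is essentially right. The gap is in how you propose to produce the factor of~$2$.

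You write that ``\cref{thm:heavy} gives $\lim_{\rho \to 1} \E{\response{\mserpt{1}}{}}/\E{\response{\gittins{1}}{}} \le 2$,'' but \cref{thm:heavy} only establishes matching $\Theta$-asymptotics; it carries no information about the hidden constants, so it cannot yield any particular numerical bound, let alone exactly~$2$. Your fallback — deriving the~$2$ from an ``integration-by-parts identity relating the expected-remaining-size form of the SERPT rank to the completion-probability form of the Gittins rank'' — is a speculative sketch with no concrete mechanism, and you yourself flag this as the likely sticking point. It is indeed where your proposal breaks down. The paper does not re-derive the constant at all: it imports the all-loads inequality $\E{\waiting{\mserpt{1}}{}} \leq 2\,\E{\waiting{\mgittins{1}}{}}$ directly from \citet[Lemma~5.6]{m-serpt_scully}, chains it with $\E{\waiting{\mgittins{1}}{}} \leq \E{\waiting{\gittins{1}}{}} \leq \E{\response{\gittins{1}}{}}$ (the latter from \mg{1}-optimality of Gittins), and then uses the heavy-traffic dominance of $\E{\waiting{\mserpt{1}}{}}$ over the residence terms to conclude $\E{\response{\mserpt{k}}{}} \le (1 + o(1))\,\E{\waiting{\mserpt{1}}{}} \le 2(1 + o(1))\,\E{\response{\gittins{1}}{}}$. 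In other words, the heavy-traffic asymptotics are used only to kill the multiserver overhead; the factor of~$2$ is a preexisting load-independent fact about \mg{1} waiting times. Without identifying that lemma (or reproving it), your plan has no route to the sharp constant.
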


\Cref{thm:mgittins_opt, thm:mserpt_approx} apply to a broad class of
finite-variance job size distributions.
Roughly speaking, $\OR{2}{\infty}$ covers heavy-tailed distributions,
and $\Gumbel$ covers non-heavy-tailed distributions that are unbounded
(\cref{sub:distributions}).
Assuming membership in these sets is standard for heavy-traffic analysis
\citep{fb_heavy_zwart}.
The main restriction the results impose is on $\Gumbel$ distributions,
for which we additionally require membership in $\QDHR$ or $\QIMRL$.
While slightly relaxing this restriction is possible,\footnote{%
  For example, we only need the $\QDHR$ and $\QIMRL$ assumptions to prove
  \cref{thm:yz_bound_mserpt_qimrl, thm:yz_bound_mgittins_qdhr},
  so we could instead assume the results of those theorems.}
removing it entirely appears to be very difficult (\cref{sec:conclusion}).

A key step in the proofs of \cref{thm:mgittins_opt, thm:mserpt_approx}
is analyzing \mgittins{} and \mserpt{} in the heavy-traffic \mg{1}.
This analysis is itself a new result of independent interest.
Notably, it extends to ordinary Gittins in addition to \mgittins{}\ifshort{}{,
thus characterizing the optimal heavy-traffic scaling
attainable by any scheduling policy in the setting of unknown job sizes}.

\begin{theorem}
  \label{thm:heavy}
  Let \generic{1} be one of \gittins{1}, \mgittins{1}, or \mserpt{1}.
  If $X \in \OR{1}{2}$, then in the $\rho \to 1$ limit,
  \begin{align*}
    \E{\response{\generic{1}}{}}
    = \Theta\gp*{\ifshort{-\log(1 - \rho)}{\log\frac{1}{1 - \rho}}}
  \end{align*}
  and if $X \in \OR{2}{\infty} \cup \Gumbel \cup \ENBUE$, then
  \begin{align*}
    \E{\response{\generic{1}}{}}
    = \Theta\ifshort{\gp[\Big]}{\gp*}{\ifshort{\gp[\Big]}{\frac{1}}{(1 - \rho) \cdot \rank{\mserpt{}}[\big]{\Ginv{1 - \rho}}}\ifshort{\sp{-1}}{}},
  \end{align*}
  where $\Ginv{}\esub$ is the inverse of the tail of the excess of~$X$, namely
  \begin{align*}
    \G{x} = \frac{1}{\E{X}} \int_x^\infty \F{t} \d{t}.
  \end{align*}
\end{theorem}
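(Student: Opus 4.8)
\textbf{Reduction to \mserpt{1}.} The plan is to prove the scaling first for \mserpt{1}, whose monotonic rank function makes the \mg{1} mean response time tractable, and then transfer it to \gittins{1} and \mgittins{1}. For \gittins{1} the transfer is immediate: optimality of Gittins gives $\E{\response{\gittins{1}}{}} \le \E{\response{\mserpt{1}}{}}$, while \citet{m-serpt_scully} show $\E{\response{\mserpt{1}}{}} \le C\,\E{\response{\gittins{1}}{}}$ for an absolute constant~$C$, so $\E{\response{\gittins{1}}{}} = \Theta\gp*{\E{\response{\mserpt{1}}{}}}$. For \mgittins{1} I would combine $\E{\response{\gittins{1}}{}} \le \E{\response{\mgittins{1}}{}}$ with an upper bound $\E{\response{\mgittins{1}}{}} = O\gp*{\E{\response{\mserpt{1}}{}}}$; the latter should follow by feeding the rank bounds of \cref{sec:rank_bounds}, which control $\rank{\mgittins{}}{}$ and $\rank{\mserpt{}}{}$ in terms of the same quantities, into the same \mg{1} analysis. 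Everything thus reduces to establishing both halves of $\E{\response{\mserpt{1}}{}} = \Theta(g(\rho))$, with $g$ the claimed expression in each regime.

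\textbf{Locating the critical scale.} Using the SOAP mean response time formula \citep{soap_scully}, which for a monotonic policy takes a tractable form, I would write $\E{\response{\mserpt{1}}{}} = \int_0^\infty F'(x)\,\E{\response{\mserpt{1}}{x}}\,\d{x}$, where the per-size term decomposes into a \emph{residence} part and a \emph{waiting} part, each governed by the relevant load $\lambda\E{\min\{X,x\}}$ and the relevant work from jobs of rank at most $\rank{\mserpt{}}{x}$. The crucial identity is $1 - \lambda\E{\min\{X,x\}} = (1-\rho) + \rho\,\G{x}$, so the transition between the congested regime ($\G{x}\gg1-\rho$) and the uncongested one ($\G{x}\ll1-\rho$) happens exactly at the critical age $a_* = \Ginv{1-\rho}$ of the theorem. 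I would then establish two estimates: up to a term that integrates to $\Theta(1)$, the residence part for a job of size~$x$ is $\Theta\gp*{\int_0^{\min\{x,a_*\}}\d{t}/\G{t}}$; and the waiting part, weighted by the job-size density and integrated, contributes a term of order $1/\gp*{(1-\rho)\,\rank{\mserpt{}}{a_*}}$ when~$X$ has finite variance — the factor $\rank{\mserpt{}}{a_*}$ arising because \mserpt{} deprioritizes jobs past age~$a_*$, the larger their conditional expected remaining size the more effectively they are pushed aside.

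\textbf{The two regimes.} The theorem then reduces to comparing the residence and waiting contributions in each distribution class. Integrating the residence estimate over~$x$ always gives $O\gp*{\log\frac{1}{1-\rho}}$, which is in fact $\Theta\gp*{\log\frac{1}{1-\rho}}$ when $X\in\OR{1}{2}\cup\OR{2}{\infty}$. When $X\in\OR{2}{\infty}\cup\Gumbel\cup\ENBUE$ (finite variance), the waiting contribution grows strictly faster — polynomially in $\frac{1}{1-\rho}$ for heavy-enough tails, and $\Theta\gp*{\frac{1}{1-\rho}}$ for $\Gumbel$ and $\ENBUE$, where $\rank{\mserpt{}}{a_*} = \Theta(1)$ — so it dominates and yields $\Theta\gp*{1/((1-\rho)\rank{\mserpt{}}{a_*})}$; here I would use Potter/Karamata bounds \citep{book_bingham} for the $O$-regularly-varying cases and \cref{lem:F_gumbel} together with the extreme-value characterization of $\Gumbel$ \citep{book_resnick}. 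When $X\in\OR{1}{2}$ (infinite variance) the roles reverse: $\rank{\mserpt{}}{a} = \Theta(a)$, $a_* = (1-\rho)^{-\Theta(1)}$ so $\log a_* = \Theta\gp*{\log\frac{1}{1-\rho}}$, the residence contribution is $\Theta\gp*{\log\frac{1}{1-\rho}}$, and the would-be waiting term $1/((1-\rho)\rank{\mserpt{}}{a_*})$ is of lower order, so the residence term dominates. Both directions of each bound come from the same integral estimates, since $O$-regular variation gives two-sided control.

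\textbf{Main obstacle.} The hard part is the \mg{1} analysis of \mserpt{} itself: turning the schematic estimates above into rigorous two-sided bounds that hold uniformly across $\OR{1}{2}$, $\OR{2}{\infty}$, $\Gumbel$, and $\ENBUE$. This means extracting from the SOAP formula a closed form that cleanly exposes the dependence on $\G{}$ and $\rank{\mserpt{}}{}$, splitting the size integral at~$a_*$ with tail bounds valid in all four classes at once, and checking in each class exactly which of the two contributions dominates, including the delicate behavior near the finite-variance threshold (where the dominant mechanism switches from residence to waiting). A secondary obstacle is the \mgittins{1} transfer: lacking an off-the-shelf constant-factor comparison, I expect to need the \mg{1} argument to be robust enough to re-run verbatim with $\rank{\mserpt{}}{}$ replaced by the (comparable, by \cref{sec:rank_bounds}) \mgittins{} rank function.
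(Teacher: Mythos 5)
Your proposal captures the paper's overall strategy quite faithfully: reduce to \mserpt{1}, identify the critical age $\Ginv{1-\rho}$ via the identity $\coload{x} = (1-\rho) + \rho\,\G{x}$, split the size integral there, show that the congested-regime waiting contribution dominates in the finite-variance classes, apply Karamata/Potter bounds (plus \cref{lem:F_gumbel} for $\Gumbel$), and transfer to \gittins{} and \mgittins{} via Gittins optimality and the constant-factor bound of \citet{m-serpt_scully}. That is exactly the skeleton of \cref{lem:small_terms_or, lem:small_terms_gumbel, lem:big_term_nfv, thm:heavy_mserpt_or_iv, thm:heavy_mserpt_or_fv, thm:heavy_mserpt_gumbel, thm:heavy_mserpt_enbue, thm:heavy_mgittins_response}.

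Two places where your plan and the paper diverge are worth noting. First, you do not foresee the main technical obstacle in making the \mserpt{} analysis precise: the formulas in \cref{eq:waiting_residence_monotonic} involve $\y$ and $\z$, which are discontinuous ``hill-valley'' functions, and the paper develops a dedicated integration-by-parts calculus for such pairs (\cref{app:new_formulas}, \cref{lem:yz_integrate_by_parts}) to convert them into the tractable key quantities $\Qa, \Qb, \Rb, \Rc$. Your schematic formula for the residence part and the $1/((1-\rho)\rank{\mserpt{}}{\Ginv{1-\rho}})$ estimate for the waiting part are in the right spirit, but without that machinery it is unclear how to make them rigorous two-sided bounds. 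Second, for the \mgittins{1} transfer, ``feeding the rank bounds into the same \mg{1} analysis'' is more work than necessary. The paper's route (in \cref{thm:heavy_mgittins_response}) is cleaner: $\E{\residence{\mgittins{1}}{}} \le \E{\residence{\mserpt{1}}{}}$ and $\E{\waiting{\mgittins{1}}{}} \le \E{\waiting{\gittins{1}}{}}$ are already known from \citet{m-serpt_scully}, and the latter is closed against $\E{\waiting{\mserpt{1}}{}}$ via Gittins optimality under the precondition that residence time is dominated by waiting time. You would otherwise re-derive the \mgittins{} heavy-traffic bounds from scratch, which the paper avoids.

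One claim in your $\OR{1}{2}$ discussion is not quite right and is also a genuine loose end. You assert that the residence contribution is $\Theta\gp*{\log\frac{1}{1-\rho}}$ and dominates. The paper's \cref{thm:heavy_mserpt_or_iv} only proves the \emph{upper} bound $O\gp*{\log\frac{1}{1-\rho}}$ for both waiting and residence, and no lower bound is derived from the \mserpt{} analysis alone. The matching $\Omega\gp*{\log\frac{1}{1-\rho}}$ lower bound for mean response time in the $\OR{1}{2}$ case must come from elsewhere — most naturally, the chain $\E{\response{\gittins{1}}{}} \geq \E{\response{\srpt{1}}{}} = \Theta\gp*{\log\frac{1}{1-\rho}}$ using \citet{srpt_heavy_zwart} plus the optimality of SRPT with known sizes, and then Gittins optimality to transfer to \mserpt{1} and \mgittins{1}. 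Your plan should state this explicitly; as written, it asserts a $\Theta$ for the residence part that you have not argued for.
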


\subsection{Relationship to Prior Work}
\label{sub:prior_work}

\Cref{thm:mgittins_opt}
is the first result proving optimality of a scheduling policy
in the heavy-traffic \mg{k}
with unknown job sizes and general job size distribution.
As mentioned in \cref{sec:intro},
the only prior results of this type were shown by \citet{srpt_multiserver_grosof},
who prove similar results for SRPT and FB,
that latter for \emph{decreasing hazard rate} ($\DHR$) job size distributions.
\ifshort{\par}{\begin{itemize}
\item}
  SRPT was shown to be optimal in the heavy-traffic \mg{k}
  for job size distributions whose tail has
  upper Matuszewska index less than~$-2$
  \citep[Theorem~6.1]{srpt_multiserver_grosof},
  which corresponds to satisfying the upper bound
  in \cref{def:or} for some $\alpha > 2$.
  This is somewhat broader than the precondition of \cref{thm:mgittins_opt},
  though it is still limited to finite-variance distributions.
  \ifshort{}{\begin{itemize}
  \item}
    Given that SRPT is designed for known job sizes
    while \mgittins{} is designed for unknown job sizes,
    \cref{thm:mgittins_opt} complements the prior SRPT results.
  \ifshort{\par}{\end{itemize}
\item}
  FB was shown to be optimal in the heavy-traffic \mg{k}
  for job size distributions in the class
  $\DHR \cap (\OR{2}{\infty} \cup \Gumbel)$
  \citep[Theorem~7.13]{srpt_multiserver_grosof}.\footnote{%
    While \citet[Theorem~7.13]{srpt_multiserver_grosof} claim that
    this result applies to all distributions in $\DHR$
    with upper Matuszewska index less than~$-2$,
    their proof incorrectly cites the preconditions of
    results of \citet{fb_heavy_zwart}.\ifshort{}{
    Correcting the precondition narrows the result to what we state here.}}
  The $\DHR$ class is much more restrictive than $\QDHR$,
  so this is much narrower than the precondition of \cref{thm:mgittins_opt}.
  \ifshort{}{\begin{itemize}
  \item}
    Given that FB is equivalent to \mgittins{} in the $\DHR$ case
    \citep{m/g/1_gittins_aalto, mlps_gittins_aalto},
    \cref{thm:mgittins_opt} subsumes the prior FB results.
  \ifshort{}{\end{itemize}
\end{itemize}}

\ifshort{}{There is another result that follows from two prior works
that complements \cref{thm:mgittins_opt},
although to the best of our knowledge it has never been explicitly stated.
\Citet{fcfs_heavy_dist_kollerstrom, fcfs_heavy_mean_kollerstrom}
shows that under FCFS,
the mean response times in the \mg{1} and \mg{k} converge.
This means that if \gittins{} and \mgittins{}
happen to be equivalent to FCFS for a given job size distribution,
then FCFS minimizes mean response time in the heavy-traffic \mg{k}.
\Citet{m/g/1_gittins_aalto, mlps_gittins_aalto}
show this occurs exactly for job size distributions in the
\emph{new better than used in expectation} ($\NBUE$) class,
which includes some distributions that \cref{thm:mgittins_opt} does not cover.

Finally, versions of the Gittins policy have been shown to be heavy-traffic optimal
for two discrete-state versions of the \mg{k} queue
\citep{m/m/k_gittins_glazebrook, m/g/k_gittins_glazebrook}.
These models support some features our model does not,
such as multiple job classes,
but discretizing the state space imposes some limitations.
Specifically, \citet{m/m/k_gittins_glazebrook} require each job to be composed of phases
where each phase has exponentially distributed size;
and \citet{m/g/k_gittins_glazebrook} allows nonexponential job size distributions
but discretizes time
and additionally requires $\ENBUE$ job size distributions (\cref{def:enbue}).
In contrast, \cref{thm:mgittins_opt} applies to heavy-tailed
and other non-$\ENBUE$ job size distributions
that are of practical importance in computer systems
\citep{pareto_tails_crovella, pareto_tails_williamson, real_tails_survey_park,
  pareto_tails_harchol-balter}.}

\Cref{thm:mserpt_approx} shows that a simple scheduling policy,
namely \mserpt{},
has mean response time within a constant factor of optimal
in the heavy-traffic \mg{k}
with unknown job sizes and general job size distribution.
Specifically, we show \mserpt{} is a $2$\=/approximation.
This complements the result of \citet{m-serpt_scully},
who show that in the \mg{1},
\mserpt{} is a $5$\=/approximation
for \mg{1} mean response time at all loads.
Our result is tighter and applies to multiserver systems, not just single-server systems,
but it applies only in heavy traffic.
The techniques we introduce could be useful for tightening the upper bound
on \mserpt{}'s \mg{1} approximation ratio,
which is conjectured to be~$2$ \citep{m-serpt_scully}.

\Cref{thm:heavy} characterizes the heavy-traffic scaling
of \mg{1} mean response time
under Gittins, \mgittins{}, and \mserpt{}.
There are three other policies
whose heavy-traffic scaling has been characterized:
FB, SRPT, and a policy called
\emph{randomized multilevel feedback} (RMLF) \citep{rmlf_pruhs, rmlf_leonardi}.
We now compare \cref{thm:heavy} to \ifshort{}{each of }these prior results.

\Citet{fb_heavy_zwart} study FB in heavy traffic.
They show that if $X \in \OR{1}{2}$, then
\begin{align*}
  \E{\response{\fb{1}}{}}
  = \Theta\gp*{\ifshort{-\log(1 - \rho)}{\log\frac{1}{1 - \rho}}},
\end{align*}
matching the first expression in \cref{thm:heavy}.
They also show that if $X \in \OR{2}{\infty} \cup \Gumbel$, then
\begin{align*}
  \E{\response{\fb{1}}{}}
  = \Theta\ifshort{\gp[\Big]}{\gp*}{\ifshort{\gp[\Big]}{\frac{1}}{(1 - \rho) \cdot \rank{\serpt{}}[\big]{\Ginv{1 - \rho}}}\ifshort{\sp{-1}}{}}.
\end{align*}
This is similar to the second expression in \cref{thm:heavy},
except it replaces the monotonic $\rank{\mserpt{}}{}$
with the nonmonotonic $\rank{\serpt{}}{}$,
which pinpoints the suboptimality of FB's heavy-traffic scaling.

\Citet{srpt_heavy_zwart} study SRPT in heavy traffic.
They show that if $X \in \OR{1}{2}$, then
\begin{align*}
  \E{\response{\srpt{1}}{}}
  = \Theta\gp*{\ifshort{-\log(1 - \rho)}{\log\frac{1}{1 - \rho}}},
\end{align*}
and if $\F{}$ has upper Matuszewska index less than~$-2$,
which covers $X \in \OR{2}{\infty} \cup \Gumbel$, then
\begin{align*}
  \E{\response{\srpt{1}}{}}
  = \Theta\ifshort{\gp[\big]}{\gp*}{\ifshort{\gp[\big]}{\frac{1}}{(1 - \rho) \cdot \ol{G}_{}^{-1}(1 - \rho)}\ifshort{\sp{-1}}{}},
\end{align*}
where
\begin{align*}
  \ol{G}(x) = 1 - \frac{\E{X \1(X \leq x)}}{\E{X}} = \G{x} + \frac{x \F{x}}{\E{X}}.
\end{align*}
Recall that SRPT minimizes mean response time in the presence of job size information,
whereas Gittins does not use job size information,
so the heavy-traffic scaling of SRPT is a lower bound on that of Gittins.
By comparing the above result for SRPT
with our result for Gittins (\cref{thm:heavy}),
we learn when knowledge of job sizes yields
an asymptotic improvement in mean response time.
\begin{itemize}
\item
  For $X \in \OR{1}{2}$, meaning $X$ is heavy-tailed with infinite variance,
  the heavy-traffic scaling of Gittins matches that of SRPT.
\item
  For $X \in \OR{2}{\infty}$, meaning $X$ is heavy-tailed with finite variance,
  the heavy-traffic scaling of Gittins still matches that of SRPT.
  Specifically, we later show $\rank{\mserpt{}}{a} = \Theta(a)$ (\cref{thm:yz_bound_mserpt_or}),
  and one can also show $\ol{G}_{}^{-1}(1 - \rho) = \Theta(\Ginv{1 - \rho})$.
\item
  For $X \in \Gumbel$, meaning $X$ is not heavy-tailed,
  one can show $\rank{\mserpt{}}{a} = o(a)$ \citep{book_resnick},
  implying Gittins has worse heavy-traffic scaling than SRPT in those cases.
\end{itemize}
We see that, roughly speaking, Gittins matches the heavy-traffic scaling of SRPT
if and only if the job size distribution is heavy-tailed.
We conclude that knowledge of job sizes yields an asymptotic improvement in mean response time
for non-heavy-tailed job size distributions.

\Citet{rmlf_zwart} study RMLF in heavy traffic.
They show that if $\E{X^\alpha} < \infty$
for some $\alpha > 2$, then
\begin{align}
  \label{eq:rmlf_heavy}
  \E{\response{\rmlf{1}}{}}
  = O\ifshort{\gp}{\gp*}{\ifshort{-\log(1 - \rho) \cdot}{} \E{\response{\srpt{1}}{}} \ifshort{}{\cdot \log\frac{1}{1 - \rho}}}.
\end{align}
Because Gittins minimizes \mg{1} mean response time,
this serves as an upper bound on the heavy-traffic scaling of Gittins.
However, as previously discussed when comparing \cref{thm:heavy}
to prior results on SRPT,
there are cases where Gittins matches the heavy-traffic scaling of SRPT,
so our result is a tighter bound.
With that said,
requiring $\E{X^\alpha} < \infty$ for some $\alpha > 2$
is more lenient than the precondition of \cref{thm:heavy},
so there are still instances where
\cref{eq:rmlf_heavy} is the best known bound on Gittins's heavy-traffic scaling.

\section{Technical Overview}
\label{sec:overview}

Our main goal is to show that \mgittins{}
minimizes \mg{k} mean response time in the $\rho \to 1$ limit.
Specifically, we show
\begin{align}
  \label{eq:high_level}
  \E{\response{\mgittins{k}}{}} \le
  \E{\response{\gittins{1}}{}}
  + o(\E{\response{\gittins{1}}{}}).
\end{align}

The only existing technique for proving a bound like \cref{eq:high_level}
is the \mg{k} tagged job method of \citet{srpt_multiserver_grosof}.
In general, tagged job methods work as follows
\citep{soap_scully,
  srpt_analysis_schrage,
  fb_analysis_schrage,
  multiclass_ayesta,
  book_kleinrock,
  smart_insensitive_wierman,
  book_harchol-balter,
  srpt_multiserver_grosof}:
one focuses on a ``tagged'' job $J$ throughout its time in the system,
tracking how much each other job delays $J$.
The amount of time for which another job can delay $J$
is called the \emph{relevant work} due to that other job.
The specific \mg{k} tagged job method \citep{srpt_multiserver_grosof}
relates the amount of relevant work in an \mg{k} under \generic{k}
to the amount of relevant work in an \mg{1} under~\generic{1}.

As a first approach, we might try to prove a result like \cref{eq:high_level}
for \gittins{k} using the \mg{k} tagged job method.
Unfortunately, the method turns out not to work for \gittins{},
because \gittins{} can have a nonmonotonic rank function.
It turns out that under nonmonotonic rank functions,
jobs can contribute more relevant work in an \mg{k} than in an \mg{1}\ifshort{}{
(\cref{app:mgk_nonmonotonic})},
resulting in a much looser response time bound.

\begin{figure}
  \centering
  \begin{threeparttable}
  \begin{tabularx}{37em}{@{}X@{}}
    \toprule
    \textbf{Key Definitions}
    \begin{itemize}[topsep=\medskipamount, itemsep=\medskipamount, after=\vspace{-\baselineskip}]
    \item
      (\cref{sub:distributions})
      \emph{Job size distribution classes:}
      $\QDHR$, $\OR{2}{\infty}$, $\Gumbel$, etc.
    \item
      (\cref{sec:overview, sec:mgk})
      \emph{Single-server quantities:}
      \QRS[\generic{1}].
    \item
      (\cref{sub:understanding_qrs})
      \emph{Age cutoffs:}
      $\y[\generic{}]$ and $\z[\generic{}]$.
    \end{itemize}
    \\ \midrule
    \textbf{Proof Steps}
    \begin{itemize}[topsep=\medskipamount, itemsep=\medskipamount, after={\vspace{-\medskipamount}\vspace{-\baselineskip}}]
    \item
      (\cref{sec:mgk})
      \emph{Compare \mg{k} to \mg{1}:}
      $\E{\response{\generic{k}}{}}
      \leq \E{\waiting{\generic{1}}{}}
        + k \E{\residence{\generic{1}}{}}
        + (k - 1) \E{\sesidence{\generic{1}}{}}$,
      whereas $\E{\response{\generic{1}}{}} = \E{\waiting{\generic{1}}{}} + \E{\residence{\generic{1}}{}}$.
    \item
      \emph{Show $\E{\waiting{\generic{1}}{}}$ dominates
        $\E{\residence{\generic{1}}{}}$ and $\E{\sesidence{\generic{1}}{}}$
        in $\rho \to 1$ limit.}
      \begin{itemize}[topsep=\medskipamount, itemsep=\medskipamount]
      \item
        (\cref{sec:rank_bounds})
        \emph{Job size distribution classes imply bounds on age cutoffs:}
        for example, if $X \in \QDHR$, then $\z[\generic{}] = O(x^\gamma)$ for some $\gamma \geq 1$.
      \item
        (\cref{sec:heavy})
        \emph{Job size distribution classes and bounds on age cutoffs
          imply $\E{\waiting{\generic{1}}{}}$ dominates:}
        for example, if $X \in \Gumbel$ and $\z[\generic{}] = O(x^\gamma)$ for some $\gamma \geq 1$,
        then $\E{\sesidence{\generic{1}}{}} = o({\waiting{\generic{1}}{}})$.
      \end{itemize}
    \item
      (\cref{sub:formal_proofs})
      \emph{Compare \mgittins{k} and \mserpt{k} to \gittins{1}.}
      \begin{itemize}[topsep=\medskipamount, itemsep=\medskipamount]
      \item
        \emph{\mgittins{k} vs. \gittins{1}:} prior work shows $\E{\waiting{\mgittins{1}}{}} \leq \E{\response{\gittins{1}}{}}$,
        implying $\lim_{\rho \to 1} \E{\response{\mgittins{k}}{}}/\E{\response{\gittins{1}}{}} = 1$.
      \item
        \emph{\mserpt{k} vs. \gittins{1}:} prior work shows $\E{\waiting{\mserpt{1}}{}} \leq 2 \E{\response{\gittins{1}}{}}$,
        implying $\lim_{\rho \to 1} \E{\response{\mserpt{k}}{}}/\E{\response{\gittins{1}}{}} \leq 2$.
      \end{itemize}
    \end{itemize}
    \\ \bottomrule
  \end{tabularx}
  \onetablenote
  \begin{tablenotes}
  \item
    Throughout, \generic{} stands for either \mgittins{} or \mserpt{}.
  \end{tablenotes}
\end{threeparttable}


  \caption{Proof Overview}
  \label{fig:overview}
\end{figure}

Our key insight is that we can generalize the \mg{k} tagged job method
of \citet{srpt_multiserver_grosof}
to any SOAP policy, provided it has a monotonic rank function.
In \cref{thm:mgk_response} we show that for any monotonic SOAP policy~\generic{},
\begin{align}
  \label{eq:mgk_response}
  \E{\response{\generic{k}}{}}
  \le \E{\waiting{\generic{1}}{}}
  + k\E{\residence{\generic{1}}{}}
  + (k-1)\E{\sesidence{\generic{1}}{}},
\end{align}
where the quantities on the right hand side,
defined formally in \cref{sec:mgk},
can be thought of as follows:
\begin{itemize}
\item
  $\waiting{\generic{1}}{}$ and $\residence{\generic{1}}{}$
  are distributions called
  \emph{waiting time} and \emph{residence time}, respectively
  \citep{soap_scully}.
  Response time in the \mg{1} is the sum of waiting time and residence time.
\item
  $\sesidence{\generic{1}}{}$ is a new distribution we call
  \emph{inflated residence time},
  which is similar to residence time but longer.
\end{itemize}
Proving \cref{eq:mgk_response} is the first stepping stone
to proving \cref{thm:mgittins_opt}
because it reduces an \mg{k} analysis to an \mg{1} analysis.
Only the $\E{\residence{\generic{1}}{}}$
and $\E{\sesidence{\generic{1}}{}}$ coefficients depend on~$k$,
so to prove \cref{thm:mgittins_opt},
we show the $\E{\waiting{\generic{1}}{}}$ term dominates\ifshort{}{
  in the $\rho \to 1$ limit}
when \generic{} is \mgittins{}.
\Cref{fig:overview} gives an overview of the main proof steps.

In the remainder of this section,
our goal is to bound \QRS[\generic{1}],
where \generic{} is either \mgittins{} or \mserpt{}.
We begin in \cref{sub:understanding_qrs} by explaining in more detail
the concepts of relevant work and of waiting, residence, and inflated residence time.
In doing so, we introduce \emph{age cutoffs},
quantities which characterize the relevant work due to each job.
It turns out that to bound \QRS[\generic{1}],
we first need to bound the age cutoffs.
\Cref{sub:age_cutoffs} presents our age cutoff bounds,
deferring proofs to \cref{sec:rank_bounds},
and \cref{sub:heavy_traffic_qrs} presents our bounds on \QRS[\generic{1}],
deferring proofs to \cref{sec:heavy}.
Finally, in \cref{sub:formal_proofs},
we formally prove \cref{thm:mgittins_opt, thm:mserpt_approx, thm:heavy}
by combining the intermediate results discussed throughout this section.

\subsection{Understanding the Tagged Job Method and Relevant Work}
\label{sub:understanding_qrs}

In this section we give intuition for the tagged job method,
deferring some formalities to \cref{sec:mgk}.

Recall that the tagged job method works by focusing on the journey of
a ``tagged'' job~$J$ through the system.
Roughly speaking, the relevant work due to any other job is
the amount of time by which that job delays $J$'s departure.
A key insight from the \mg{1} SOAP analysis \citep{soap_scully}
is that to figure out how much another job delays~$J$,
we need to look not at $J$'s current rank
but at its \emph{worst future rank}.
This is because even if $J$ has priority over another job at first,
if $J$'s rank later increases, the other job can get priority.

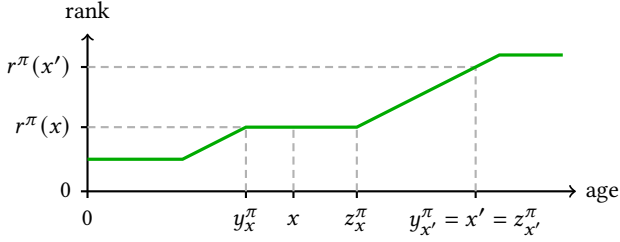
\begin{figure}
  \centering
  \begin{tikzpicture}[figure]
  \axes{15}{9}{$0$}{age}{$0$}{rank}

  \xguide[{$\y[\generic{}]$}]{5}{4}
  \yguide[$\rank{\generic{}}{x}$]{5}{4}
  \xguide[$x$]{6.5}{4}
  \xguide[{$\z[\generic{}]$}]{8.5}{4}

  \xguide[{$\ySize[\generic{}]{x'} = x' = \zSize[\generic{}]{x'}$}]{12.25}{7.75}
  \yguide[{$\rank{\generic{}}{x'}$}]{12.25}{7.75}

  \draw[primary]
  (0, 2) -- (3, 2)
  -- (5, 4) -- (8.5, 4)
  -- (13, 8.5) -- (15, 8.5);
\end{tikzpicture}

  \captionsqueeze
  \caption{New Job and Old Job Age Cutoffs}
  \label{fig:yz}
\end{figure}

Suppose that $J$ has size~$x$.
Under a monotonic SOAP policy~\generic{},
such as \mgittins{} or \mserpt{},
the worst future rank $J$ will have
is always the rank it will have just before completion, namely $\rank{\generic{}}{x}$.
The amount of relevant work due to another job~$J'$
is the amount of time $J'$ is served while $J$ is in the system
until $J'$ either completes or reaches rank $\rank{\generic{}}{x}$.
Due to the FCFS tiebreaking rule (\cref{sub:rank_functions}),
exactly what ``reaches'' means depends on when $J'$ arrives.
\begin{itemize}
\item
  \emph{New jobs}, those that arrive \emph{after} $J$,
  contribute relevant work until
  they first have rank greater than \emph{or equal} to $\rank{\generic{}}{x}$.
  This occurs at a specific age called the \emph{new job age cutoff},
  denoted~$\y[\generic{}]$.
\item
  \emph{Old jobs}, those that arrive \emph{before} $J$,
  contribute relevant work until
  they first have rank \emph{strictly} greater than $\rank{\generic{}}{x}$.
  This occurs at a specific age called the \emph{old job age cutoff},
  denoted~$\z[\generic{}]$.
\end{itemize}
\Cref{fig:yz} illustrates the new job and old job age cutoffs
$\y[\generic{}]$ and~$\z[\generic{}]$,
which are formally defined below.\footnote{%
  The new job and old job age cutoffs of~$x$ are equivalent to
  what \citet{m-serpt_scully} call the \emph{previous and next hill ages} of~$x$.}
Roughly speaking,
\begin{itemize}
\item
  if $\rank{\generic{}}{}$ is increasing at~$x$,
  then $\y[\generic{}] = x = \z[\generic{}]$;
  and
\item
  if $\rank{\generic{}}{}$ is constant at~$x$,
  then $\y[\generic{}]$ and $\z[\generic{}]$ are
  the endpoints of the constant region containing~$x$.
\end{itemize}
As \cref{fig:yz} illustrates, we always have
\begin{align}
  \label{eq:yxz}
  \y[\generic{}] \leq x \leq \z[\generic{}].
\end{align}

\begin{definition}
  \label{def:yz}
  Let \generic{} be a monotonic SOAP policy.
  The \emph{new job age cutoff} and \emph{old job age cutoff} of size~$x$
  are, respectively,
  \begin{align*}
    \y[\generic{}]
    &= \sup\curlgp{a \geq 0 \given
        \rank{\generic{}}{a} < \rank{\generic{}}{x}}, \\
    \z[\generic{}]
    &= \sup\curlgp{a \geq 0 \given
        \rank{\generic{}}{a} \leq \rank{\generic{}}{x}}.
  \end{align*}
  When the policy in question is clear,
  we drop the superscript~\generic{}.
\end{definition}

One can use new job and old job age cutoffs
to write \mg{1} mean response time under a monotonic SOAP policy
\citep{m-serpt_scully}.
As a first step, we write \mg{1} response time~$\response{\generic{1}}{}$
as a sum of two parts,
called \emph{waiting time}~$\waiting{\generic{1}}{}$
and \emph{residence time}~$\residence{\generic{1}}{}$ \citep{soap_scully}:
\begin{align*}
  \E{\response{\generic{1}}{}}
  = \E{\waiting{\generic{1}}{}} + \E{\residence{\generic{1}}{}}.
\end{align*}
We define waiting and residence times formally in \cref{sec:mgk}.
For now, we just need to know that
their means can be written in terms of $\y[\generic{}]$ and~$\z[\generic{}]$.
Specifically, \citet[Propositions~4.7 and~4.8]{m-serpt_scully} show
\begin{align}[c]
  \label{eq:waiting_residence_monotonic}
  \E{\waiting{\generic{1}}{}}
  &= \int_0^\infty
      \frac{\excess{\z[\generic{}]}}{\coload{\y[\generic{}]} \coload{\z[\generic{}]}}
    \dF{x},
  \\
  \E{\residence{\generic{1}}{}}
  &= \int_0^\infty \frac{x}{\coload{\y[\generic{}]}} \dF{x},
\end{align}
where $\coload{}$ and $\excess{}$ are defined as
\begin{align}[c]
  \label{eq:coload_excess}
  \coload{a} &= 1 -  \lambda \E{\min\{X, a\}} = 1 - \int_0^a \lambda \F{t} \d{t}, \\
  \excess{a} &= \frac{\lambda}{2} \E{\min\{X, a\}^2} = \int_0^a \lambda t\F{t} \d{t}.
\end{align}
The proof of \cref{thm:mgk_response}
explains the intuition behind \cref{eq:waiting_residence_monotonic}.

The significance of \cref{eq:mgk_response}
is that it expresses \mg{k} response time
in terms of waiting and residence times,
which are \mg{1} quantities.
It also features a third quantity
called \emph{inflated residence time}~$\sesidence{\generic{1}}{}$.
We define inflated residence time formally in \cref{sec:mgk}.
For now, we just need to know that its mean,
\begin{align}
  \label{eq:sesidence}
  \E{\sesidence{\generic{1}}{}}
  = \int_0^\infty \frac{\z[\generic{}]}{\coload{\y[\generic{}]}} \dF{x},
\end{align}
can be written in terms of $\y[\generic{}]$ and~$\z[\generic{}]$.
Note that $\E{\residence{\generic{1}}{}} \leq \E{\sesidence{\generic{1}}{}}$.

\subsection{Bounding New and Old Age Cutoffs}
\label{sub:age_cutoffs}

\begin{table}
  \ifshort{\shortenRefs}{}
  \centering
  \begin{threeparttable}
    \caption{New Job and Old Job Age Cutoff Bounds}
    \label{tab:yz_bounds}
    \begin{tabular}{@{}llll@{}}
  \toprule

  \textsc{Size Dist\iftwocol{.}{ribution}} & \textsc{Quantity} & \textsc{Bound} & \textsc{Ref\ifshort{.}{erence}} \\

  \midrule

  $\OR{1}{\infty}$ &
  \ifshort{
  $\y[\mgittins{1}]$, 
  $\z[\mgittins{1}]$ &
  $\Theta(x)$ &
  \cref{thm:yz_bound_mgittins_or}\\
    }{
  $\y[\mgittins{1}]$ &
  $\Theta(x)$ &
  \cref{thm:yz_bound_mgittins_or}
  \\
  &
  $\z[\mgittins{1}]$ &
  $\Theta(x)$ &
  \\}
  &
  \ifshort{
  $\y[\mserpt{1}]$, 
  $\z[\mserpt{1}]$ &
  $\Theta(x)$ &
  \cref{thm:yz_bound_mserpt_or}\\
  }{
  $\y[\mserpt{1}]$ &
  $\Theta(x)$ &
  \cref{thm:yz_bound_mserpt_or}
  \\
  &
  $\z[\mserpt{1}]$ &
  $\Theta(x)$ &
  \\}

  \midrule

  $\QDHR$ &
  $\y[\mgittins{1}]$ &
  $\ifshort{\exists \gamma \geq 1,}{} \Omega(x^{1/\gamma})$\ifshort{}{ for some $\gamma \geq 1$} &
  \cref{thm:yz_bound_mgittins_qdhr}
  \\
  &
  $\z[\mgittins{1}]$ &
  $\ifshort{\exists \gamma \geq 1,}{} O(x^\gamma)$\ifshort{}{ for some $\gamma \geq 1$} &
  \\
  $\QDHR\iftwocol{}{\cup \QIMRL}$ &
  $\y[\mserpt{1}]$ &
  $\ifshort{\exists \gamma \geq 1,}{} \Omega(x^{1/\gamma})$\ifshort{}{ for some $\gamma \geq 1$} &
  \cref{thm:yz_bound_mserpt_qimrl}
  \\
  $\iftwocol{\ {} \cup \QIMRL}{}$ &
  $\z[\mserpt{1}]$ &
  $\ifshort{\exists \gamma \geq 1,}{} O(x^\gamma)$\ifshort{}{ for some $\gamma \geq 1$} &
  \\

  \bottomrule
\end{tabular}


    \onetablenote
    \begin{tablenotes}
    \item
      These bounds on $\y[\generic{}]$ and $\z[\generic{}]$
      are critical for characterizing heavy-traffic scaling
      of \QRS[\generic{1}].
    \end{tablenotes}
  \end{threeparttable}
\end{table}

Recall that proving our main results rests on
characterizing the heavy-traffic scaling of \QRS[\generic{}],
where \generic{} is either \mgittins{} or \mserpt{}.
As we see in \cref{eq:waiting_residence_monotonic, eq:sesidence},
both $\y[\generic{}]$ and $\z[\generic{}]$ feature prominently in
the formulas of \QRS[\generic{}].
This means the first step of characterizing the heavy-traffic scaling
of \QRS[\generic{}]
is understanding $\y[\generic{}]$ and~$\z[\generic{}]$.
This is the subject of \cref{sec:rank_bounds}, in which
we prove bounds on $\y[\generic{}]$ and $\z[\generic{}]$
for a wide class of job size distributions.
\Cref{tab:yz_bounds} summarizes these results.
The main takeaway is that $\y[\generic{}]$ and $\z[\generic{}]$
are always polynomially bounded relative to~$x$.

\subsection{Characterizing Heavy Traffic Scaling}
\label{sub:heavy_traffic_qrs}

Armed with bounds on age cutoffs,
we are ready to characterize heavy-traffic scaling of
mean waiting, residence, and inflated residence times.
This is the subject of \cref{sec:heavy}, in which\ifshort{
  \cref{thm:heavy_mserpt_or_iv, thm:heavy_mserpt_or_fv,
    thm:heavy_mserpt_gumbel, thm:heavy_mserpt_enbue}
  characterize \mserpt{}'s heavy-traffic scaling; and
  \cref{thm:heavy_mgittins_response, thm:heavy_mgittins_sesidence}
  characterize \mgittins{}'s heavy-traffic scaling in terms of \mserpt{}'s.%
}{
\begin{itemize}
\item
  \cref{thm:heavy_mserpt_or_iv, thm:heavy_mserpt_or_fv,
    thm:heavy_mserpt_gumbel, thm:heavy_mserpt_enbue}
  characterize \mserpt{}'s heavy-traffic scaling; and
\item
  \cref{thm:heavy_mgittins_response, thm:heavy_mgittins_sesidence}
  characterize \mgittins{}'s heavy-traffic scaling in terms of \mserpt{}'s.
\end{itemize}%
}
\Cref{tab:heavy} summarizes these results.
The main takeaway of the table is that
for all of the finite-variance job size distribution classes considered,\footnote{%
  That is, for all the classes in \cref{tab:heavy} except $\OR{1}{2}$.}
if \generic{} is either \mgittins{} or \mserpt{},
$\E{\waiting{\generic{1}}{}}$ dominates
$\E{\residence{\generic{1}}{}}$ and $\E{\sesidence{\generic{1}}{}}$,
with the latter sometimes requiring an additional condition.
Specifically,
\begin{itemize}
\item
  $\E{\waiting{\generic{1}}{}}$ grows polynomially in $1/(1 - \rho)$, whereas
\item
  $\E{\residence{\generic{1}}{}}$ and $\E{\sesidence{\generic{1}}{}}$
  grow subpolynomially in $1/(1 - \rho)$.
\end{itemize}

\ifshort{\begin{table}}{\begin{table*}}
  \ifshort{\shortenRefs\small}{}
  \centering
  \begin{threeparttable}
    \caption{Heavy-Traffic Scaling of Waiting, Residence, and \ifshort{\\}{}Inflated Residence Times}
    \label{tab:heavy}
    \begin{tabular}{@{}llll@{}}
  \toprule

  \textsc{Size Dist\iftwocol{.}{ribution}} & \textsc{Quant\ifshort{.}{ity}} & \textsc{\ifshort{Bound}{Heavy-Traffic Scaling}} & \textsc{Reference} \\

  \midrule

  $\OR{1}{2}$ &
  $\E{\waiting{\generic{1}}{}}$ &
  $O\gp{-\log(1 - \rho)}$ &
  \cref{thm:heavy_mserpt_or_iv, thm:heavy_mgittins_response}
  \\
  &
  $\E{\residence{\generic{1}}{}}$ &
  $O\gp{-\log(1 - \rho)}$ &
  \\

  \midrule

  $\OR{2}{\infty}$ &
  $\E{\waiting{\generic{1}}{}}$ &
  $\ifshort{\exists \delta > 0,}{} \Omega\gp{(1 - \rho)^{-\delta}}$\ifshort{}{ for some $\delta > 0$} &
  \cref{thm:heavy_mserpt_or_fv, thm:heavy_mgittins_response}
  \\
  &
  $\E{\residence{\generic{1}}{}}$ &
  $O\gp{-\log(1 - \rho)}$
  \\
  &
  $\E{\sesidence{\generic{1}}{}}$ &
  $O\gp{-\log(1 - \rho)}$ &
  \cref{thm:heavy_mserpt_or_fv, thm:heavy_mgittins_sesidence}
  \\

  \midrule

 $\Gumbel$ &
  $\E{\waiting{\generic{1}}{}}$ &
  $\ifshort{\forall \epsilon > 0,}{} \Omega\gp{(1 - \rho)^{-(1 - \epsilon)}}$\ifshort{}{ for all $\epsilon > 0$} &
  \cref{thm:heavy_mserpt_gumbel, thm:heavy_mgittins_response}
  \\
  &
  $\E{\residence{\generic{1}}{}}$ &
  $\ifshort{\forall \epsilon > 0,}{} O\gp{(1 - \rho)^{-\epsilon}}$\ifshort{}{ for all $\epsilon > 0$} &
  \\
  $\Gumbel \cap \QDHR$ &
  $\E{\sesidence{\generic{1}}{}}$ &
  $\ifshort{\forall \epsilon > 0,}{} O\gp{(1 - \rho)^{-\epsilon}}$\ifshort{}{ for all $\epsilon > 0$} &
  \cref{thm:heavy_mserpt_gumbel, thm:heavy_mgittins_sesidence}
  \ifshort{}{\\
  $\Gumbel \cap \QIMRL$ &
  $\E{\sesidence{\mserpt{1}}{}}$ &
  $\ifshort{\forall \epsilon > 0,}{} O\gp{(1 - \rho)^{-\epsilon}}$\ifshort{}{ for all $\epsilon > 0$} &
  \cref{thm:heavy_mserpt_gumbel}}
  \\

  \midrule

  $\ENBUE$ &
  $\E{\waiting{\generic{1}}{}}$ &
  $\Theta\gp{(1 - \rho)^{-1}}$ &
  \cref{thm:heavy_mserpt_enbue, thm:heavy_mgittins_response}
  \\
  &
  $\E{\residence{\generic{1}}{}}$ &
  $\Theta(1)$ &
  \\
  $\Bounded$ &
  $\E{\sesidence{\generic{1}}{}}$ &
  $\Theta(1)$ &
  \cref{thm:heavy_mserpt_enbue, thm:heavy_mgittins_sesidence}
  \\

  \bottomrule
\end{tabular}


    \onetablenote
    \begin{tablenotes}
    \item
      These bounds hold when \generic{} is either \mgittins{} or \mserpt{}\ifshort{.
      Additionally, $\E{\sesidence{\mserpt{1}}{}} = O((1 - \rho)^{-\epsilon})$ for all $\epsilon > 0$
      for the $\Gumbel \cap \QIMRL$ class.}{,
      except for the $\Gumbel \cap \QIMRL$ case,
      in which the bound holds only for \mserpt{}.}
    \end{tablenotes}
  \end{threeparttable}
\ifshort{\end{table}}{\end{table*}}

\subsection{From Intermediate Results to Main Results}
\label{sub:formal_proofs}

We now prove our main results.
The proofs of \cref{thm:mgittins_opt, thm:mserpt_approx}
both follow the same three \ifshort{}{main }steps,
where \generic{} is \mgittins{} or \mserpt{}, respectively:\ifshort{
  \cref{thm:mgk_response}
  bounds $\E{\response{\generic{k}}{}}$ in terms of \mg{1} quantities,
  the results in \cref{tab:heavy} show
  $\lim_{\rho \to 1} \E{\response{\generic{k}}{}} / \E{\waiting{\generic{1}}{}} = 1$,
  and prior work relates $\E{\waiting{\generic{1}}{}}$
  to $\E{\response{\gittins{1}}{}}$.
}{
\begin{itemize}
\item
  \Cref{thm:mgk_response}
  bounds $\E{\response{\generic{k}}{}}$ in terms of \mg{1} quantities.
\item
  The results in \cref{tab:heavy} show
  $\lim_{\rho \to 1} \E{\response{\generic{k}}{}} / \E{\waiting{\generic{1}}{}} = 1$.
\item
  Prior work relates $\E{\waiting{\generic{1}}{}}$
  to $\E{\response{\gittins{1}}{}}$.
\end{itemize}}

\begin{proof}[Proof of \cref{thm:mgittins_opt}]
  An \mg{1} can simulate any \mg{k} policy by sharing the server,
  so the fact that Gittins minimizes \mg{1} mean response time
  means $\E{\response{\mgittins{k}}{}} / \E{\response{\gittins{1}}{}} \geq 1$.
  It therefore suffices to show
  $\lim_{\rho \to 1} \E{\response{\mgittins{k}}{}} / \E{\response{\gittins{1}}{}} \leq 1$.

  \Cref{thm:mgk_response} implies
  \begin{align*}
    \frac{\E{\response{\mgittins{k}}{}}}{\E{\waiting{\mgittins{1}}{}}}
    \leq 1 + \frac{
      k\E{\residence{\mgittins{1}}{}} + (k - 1) \E{\sesidence{\mgittins{1}}{}}
    }{\E{\waiting{\mgittins{1}}{}}}.
  \end{align*}
  \Cref{thm:heavy_mserpt_or_fv, thm:heavy_mserpt_gumbel, thm:heavy_mserpt_enbue,
    thm:heavy_mgittins_response, thm:heavy_mgittins_sesidence}
  imply that the second term vanishes in the $\rho \to 1$ limit.
  A result of \citet[Proposition~4.7]{m-serpt_scully} implies
  \begin{align}
    \label{eq:mgittins_gittins}
    \E{\waiting{\mgittins{1}}{}}
    \leq \E{\waiting{\gittins{1}}{}}
    \leq \E{\response{\gittins{1}}{}},
  \end{align}
  implying the desired result.
\end{proof}

\begin{proof}[Proof of \cref{thm:mserpt_approx}]
  \Cref{thm:mgk_response} implies
  \begin{align*}
    \frac{\E{\response{\mserpt{k}}{}}}{\E{\waiting{\mserpt{1}}{}}}
    \leq 1 + \frac{
      k\E{\residence{\mserpt{1}}{}} + (k - 1) \E{\sesidence{\mserpt{1}}{}}
    }{\E{\waiting{\mserpt{1}}{}}}.
  \end{align*}
  \Cref{thm:heavy_mserpt_or_fv, thm:heavy_mserpt_gumbel, thm:heavy_mserpt_enbue}
  imply that the second term vanishes in the $\rho \to 1$ limit.
  \Citet[Lemma~5.6]{m-serpt_scully} show\ifshort{ that}{}\footnote{%
    While \citet[Lemma~5.6]{m-serpt_scully}
    mention Gittins instead of \mgittins{},
    they prove the desired statement for \mgittins{}
    as an intermediate step of their proof.}
  \ifshort{$\E{\waiting{\mserpt{1}}{}} \leq 2\E{\waiting{\mgittins{1}}{}},$}{
  \begin{align*}
    \E{\waiting{\mserpt{1}}{}} \leq 2\E{\waiting{\mgittins{1}}{}},
  \end{align*}}
  which combines with \cref{eq:mgittins_gittins}
  to imply the desired result.
\end{proof}

To prove \cref{thm:heavy}, we simply combine the results in \cref{tab:heavy}.

\begin{proof}[Proof of \cref{thm:heavy}]
  We examine each case in turn.
  \begin{itemize}
  \item
    For $X \in \OR{1}{2}$, we use
    \cref{thm:heavy_mserpt_or_iv, thm:heavy_mgittins_response}.
  \item
    For $X \in \OR{2}{\infty} \cup \Gumbel$,
    we use
    \cref{thm:heavy_mserpt_or_fv, thm:heavy_mserpt_gumbel,
      thm:heavy_mgittins_response}.
  \item
    For $X \in \ENBUE$, we have $\rank{\mserpt{}}{a} = \Theta(1)$ by \cref{def:enbue},
    so we use
    \cref{thm:heavy_mserpt_enbue, thm:heavy_mgittins_response}.
    \qedhere
  \end{itemize}
\end{proof}

\section{\mg{k} Response Time Bound}
\label{sec:mgk}

\begin{table}
  \ifshort{\shortenRefs}{}
  \centering
  \begin{threeparttable}
    \caption{Summary of Notation}
    \label{tab:notation}
    \begin{tabular}{@{}lll@{}}
  \toprule

  \textsc{Notation} & \textsc{Description} & \textsc{Reference} \\

  \midrule

  \generic{k} &
  $k$-server version of SOAP policy \generic{} &
  \cref{sub:rank_functions} \\

  $\coload{a}, \excess{a}$ &
  functions of moments of $\min\{X, a\}$ &
  \cref{eq:coload_excess} \\

  $\y[\generic{}], \z[\generic{}]$ &
  new job and old job age cutoffs &
  \cref{def:yz} \\

  $\response{\generic{k}}{}$ &
  response time under \generic{k} &
  \cref{sub:rank_functions} \\

  $\waiting{\generic{1}}{}$ &
  waiting time under \generic{1} &
  \cref{eq:waiting_residence_monotonic} \\

  $\residence{\generic{1}}{}$ &
  residence time under \generic{1} &
  \cref{eq:waiting_residence_monotonic} \\

  $\sesidence{\generic{1}}{}$ &
  inflated residence time under \generic{1} &
  \cref{eq:sesidence} \\

  \bottomrule
\end{tabular}


    \onetablenote
    \begin{tablenotes}
    \item
      Additionally, $\response{\generic{k}}{x}$
      is size-conditional response time for size~$x$, and similarly for
      $\waiting{\generic{1}}{x}$, $\residence{\generic{1}}{x}$,
      and $\sesidence{\generic{1}}{x}$.
    \end{tablenotes}
  \end{threeparttable}
\end{table}

This section bounds \mg{k} mean response time
under any monotonic SOAP policy~\generic{}.
The notation used in \cref{thm:mgk_response} below
is summarized in \cref{tab:notation}.

\begin{theorem}
  \label{thm:mgk_response}
  For any monotonic SOAP policy \generic{},
  \begin{align}
    \label{eq:response_detailed}
    \E{\response{\generic{k}}{x}}
    \leq \ifshort{}{\frac{1}{\coload{\y[\generic{}]}}} \gp*{
        \ifshort{\slashfrac}{\frac}{\excess{\z[\generic{}]}}{\coload{\z[\generic{}]}}
        + kx
        + (k-1)\z[\generic{}]
    }\ifshort{/\coload{\y[\generic{}]}}{},
  \end{align}
  and therefore
  \ifshort{\(}{\begin{align*}}
    \E{\response{\generic{k}{}}{}}
    \leq \E{\waiting{\generic{1}}{}}
      + k \E{\residence{\generic{1}}{}}
      + (k - 1) \E{\sesidence{\generic{1}}{}}\ifshort{}{.}
  \ifshort{\).}{\end{align*}}
\end{theorem}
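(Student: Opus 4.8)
The plan is to first establish the size-conditional bound \cref{eq:response_detailed} using the \mg{k} tagged job method of \citet{srpt_multiserver_grosof}, which the paper has already flagged as applicable to any monotonic SOAP policy, and then to obtain the aggregate bound by integrating over the job size distribution.

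For the size-conditional bound, I would fix a tagged job~$J$ of size~$x$ arriving to the \mg{k} under the monotonic SOAP policy~\generic{}, and track how long $J$ remains in the system. Because $\rank{\generic{}}{}$ is nondecreasing, $J$'s worst future rank is the rank it has just before completing, namely $\rank{\generic{}}{x}$. Following the tagged job framework \citep{soap_scully, srpt_multiserver_grosof}, I would define the \emph{relevant work} with respect to~$J$: each other job contributes relevant work equal to the service it can receive while $J$ is present before its rank passes $\rank{\generic{}}{x}$. By the FCFS tiebreaking rule, a job present when $J$ arrives (an old job) contributes relevant work until it reaches age~$\z[\generic{}]$, while a job arriving after $J$ (a new job) contributes only until age~$\y[\generic{}]$ (\cref{def:yz}). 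Monotonicity is essential here: it guarantees that each old job's relevant work is uniformly capped by~$\z[\generic{}]$ and that the set of jobs relevant to~$J$ only shrinks as $J$ ages, so the multiserver relevant-work process can be coupled to its single-server counterpart.

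The heart of the argument is then to show that $J$'s response time is at most the time for all relevant work — the relevant work present at $J$'s arrival plus the new relevant work arriving during $J$'s sojourn — to drain, plus a $k$-dependent correction. Whenever $J$ is present and at least $k$ jobs carry relevant work, all $k$ servers (each of speed~$1/k$) serve jobs carrying relevant work, so relevant work drains at rate at least~$1$; the exceptional periods, during which fewer than $k$ such jobs are present, involve at most $k - 1$ ``straggler'' jobs, each carrying at most $\z[\generic{}]$ units of relevant work, so their total extra contribution to $J$'s delay is at most $(k - 1)\z[\generic{}]$, and $J$'s own residual service (at rate at least $1/k$) contributes at most $kx$. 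Taking expectations, PASTA gives that the relevant work present at arrival is stationary, and a rate-conservation argument in the style of the \mg{1} SOAP analysis converts ``initial plus arriving relevant work, drained at rate~$1$'' into the closed form of \cref{eq:waiting_residence_monotonic}: the work present evaluates to $\excess{\z[\generic{}]}/\coload{\z[\generic{}]}$, the continual arrival of new relevant work at cutoff~$\y[\generic{}]$ inflates time by the factor $1/\coload{\y[\generic{}]}$, and the same inflation applies to the $kx$ and $(k - 1)\z[\generic{}]$ terms. This yields \cref{eq:response_detailed}.

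The aggregate bound then follows immediately by integrating \cref{eq:response_detailed} against $\dF{x}$ and recognizing the three resulting integrals, via \cref{eq:waiting_residence_monotonic,eq:sesidence}, as $\E{\waiting{\generic{1}}{}}$, $k\E{\residence{\generic{1}}{}}$, and $(k - 1)\E{\sesidence{\generic{1}}{}}$. The main obstacle is the middle step: rigorously controlling the multiserver relevant-work process, in particular proving the rate-$1$ drain property and bounding the exceptional-period contribution by the additive $(k - 1)\z[\generic{}]$ rather than something scaling worse with load. This is precisely the part of the Grosof et al.\ machinery that relies on monotonicity — for a nonmonotonic rank function, old jobs' relevant work need not be bounded by $\z[\generic{}]$ and the coupling to the single-server process fails — which is why \cref{thm:mgk_response} is stated only for monotonic SOAP policies.
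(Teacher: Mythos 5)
Your proposal follows essentially the same approach as the paper: tagged-job framework, relevant-work decomposition with age cutoffs $\y[\generic{}]$ and~$\z[\generic{}]$, monotonicity fixing $J$'s worst future rank, busy-period inflation by $1/\coload{\y[\generic{}]}$, and aggregation via the waiting, residence, and inflated residence time formulas.

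The one step you describe imprecisely is the origin of the $(k-1)\z[\generic{}]$ term. You attribute it to ``$J$'s delay during exceptional periods,'' but exceptional periods (fewer than $k$ relevant jobs) can recur arbitrarily often during $J$'s sojourn, so their cumulative contribution is not directly capped by $(k-1)\z[\generic{}]$. The paper instead proves a coupling lemma (\cref{lem:delta_relwork}): running the \mg{k} and a work-conserving \mg{1} on the same arrival sequence, the difference $\relwork{\generic{k}}{x}(t) - \relwork{\generic{1}}{x}(t)$ never exceeds $(k-1)\z[\generic{}]$ at any single time~$t$. Its proof tracks this difference across few-jobs and many-jobs intervals: during many-jobs intervals the \mg{k} drains relevant work at rate~$1$, so the difference cannot increase; at the start of a many-jobs interval (and throughout few-jobs intervals) the \mg{k} holds at most $k-1$ relevant jobs, each with at most $\z[\generic{}]$ relevant work, giving the bound trivially. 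Your ``straggler'' intuition is exactly right for the few-jobs case, but it is the preservation of the bound across many-jobs intervals that makes it hold globally in time. This snapshot bound is what licenses replacing $\relwork{\generic{k}}{x}$ at $J$'s arrival by $\relwork{\generic{1}}{x} + (k-1)\z[\generic{}]$, and only then does the Pollaczek--Khinchine formula --- which requires work conservation and hence holds for the \mg{1}, not the \mg{k} --- give $\excess{\z[\generic{}]}/\coload{\z[\generic{}]}$. Applying PASTA plus P--K directly to the \mg{k}'s relevant work, as your ``rate-conservation'' step implies, would not be valid because the \mg{k} can leave servers idle while relevant work remains.
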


\begin{proof}
  In order to bound \mg{k} mean response time,
  we use a tagged job method in the style of \citet{srpt_multiserver_grosof},
  but we generalize it to allow an arbitrary monotonic SOAP policy~\generic{}.
  We consider an arbitrary ``tagged'' job $J$ of size~$x$
  arriving to a steady-state system.
  Our goal is to analyze the distribution of $J$'s response time.

  The first step is a shift in perspective:
  instead of thinking about \emph{time passing},
  we reason in terms of \emph{work completed}.
  Since each of the $k$ servers works at rate~$1/k$,
  the system can complete work at rate~$1$.
  While $J$ is in the system,
  servers sometimes complete work and are sometimes left idle\ifshort{, so}{.
  This means} $J$'s response time is the sum of
  \begin{itemize}
  \item
    the amount of work completed while $J$ is in the system and
  \item
    the amount of work ``wasted'', meaning service capacity left idle,
    while $J$ is in the system.
  \end{itemize}
  We bound $J$'s response time by bounding the total amount of work above.
  We do so by dividing it into several pieces:
  \begin{itemize}
  \item \emph{Tagged work}: the work of $J$ itself.
  \item \emph{Virtual work}: work on jobs prioritized behind $J$,
    plus wasted work due to servers left idle.
  \item \emph{Relevant work}: work on jobs prioritized ahead of~$J$.
    We divide this into two subcategories:
    \begin{itemize}
    \item \emph{Old} relevant work: relevant work on \emph{old jobs},
      namely those present when $J$~arrives.
    \item \emph{New} relevant work: relevant work on \emph{new jobs},
      namely those that arrive after~$J$.
    \end{itemize}
  \end{itemize}

  For the first two categories,
  we have the same simple bound as \citet{srpt_multiserver_grosof}:
  tagged work and virtual work add up to at most~$kx$.
  This is because tagged work is $J$'s size~$x$,
  and the scheduling policy ensures that
  a server only completes virtual work while $J$ is in service at another server.
  However, bounding the two relevant work categories
  is more complicated than in \citet{srpt_multiserver_grosof}.

  We begin by asking: what rank must a job have to contribute to relevant work?
  Note that the job $J$ will never have rank greater than its rank upon completion,
  $\rank{\generic{}}{x}$,
  since \generic{} is a monotonic policy.
  As a result, all new relevant work is from jobs with rank
  \emph{strictly} less than $\rank{\generic{}}{x}$,
  and all old relevant work is from jobs with rank
  less than \emph{or equal} to $\rank{\generic{}}{x}$.
  We can put this in terms of the age cutoffs defined in \cref{def:yz}:
  \begin{itemize}
  \item
    jobs contribute new relevant work up to at most age~$\y[\generic{}]$, and
  \item
    jobs contribute old relevant work up to at most age~$\z[\generic{}]$.
  \end{itemize}
  In the rest of this proof,
  $\y$ and $\z$ refer to $\y[\generic{}]$ and~$\z[\generic{}]$, respectively.

  To help us bound the amount of old relevant work completed while $J$
  is in the system,
  we define a new concept:
  the amount of relevant work in the \mg{k} system under~\generic{}.

  \begin{definition}
    Let $\relwork{\generic{k}}{x}(t)$ denote the amount of work
    in the \mg{k} at time $t$ which is relevant to a job~$J$ of size~$x$:
    \begin{align*}
      \relwork{\generic{k}}{x}(t)
        = \smashoperator{\sum_{\text{jobs } J'}} \gp[\big]{\min\{\z, x_{J'}\} - a_{J'}(t)}^+,
    \end{align*}
    where $x_{J'}$ is the size of job~$J'$ and $a_{J'}(t)$ is its age at time~$t$.
    We write $\relwork{\generic{k}}{x}$ for the steady state distribution
    of the amount of relevant work in the \mg{k} system.
  \end{definition}

  Since $J$ is a Poisson arrival,
  $\relwork{\generic{k}}{x}$ is the distribution of the amount of relevant work
  in the system when $J$ arrives.
  That amount is an upper bound on the amount of old relevant work
  that will be completed while $J$ is in the system.

  To bound new relevant work,
  note that if a job~$J'$ of size $x'$ arrives while $J$ is in the system,
  then $J'$ contributes at most $\min\{x', y_x\}$ new relevant work.
  As a result, new relevant work can be upper bounded
  by considering a transformed \mg{1} system in which the job size distribution is
  \ifshort{$\min\{X, \y\}$.

  }{%
  \begin{align*}
    X_{\ol{\y}} =_\st \min\{X, \y\}.
  \end{align*}%
  }
  The amount of new relevant work that arrives to our real system is
  upper bounded by the total amount of work that arrives to the transformed system.
  Let $B_{\ol{\y}}(w)$ be the length of a busy period in the transformed \mg{1} system
  started by an initial amount of work~$w$.
  If $w$ is the total amount of tagged, virtual, and old relevant work,
  then the amount of new relevant work is at most $B_{\ol{\y}}(w) - w$.

  Combining our bounds, we obtain
  \begin{align*}
    \response{\generic{k}}{x} \le_\st B_{\ol{\y}}\gp[\big]{kx + \relwork{\generic{k}}{x}}.
  \end{align*}
  Applying \Cref{lem:delta_relwork}, stated\ifshort{}{ and proven} later in this section,
  yields
  \begin{align}
    \label{eq:response_bound_busy}
    \response{\generic{k}}{x} \le_\st B_{\ol{\y}}\gp[\big]{kx + \relwork{\generic{1}}{x} + (k-1)\z}.
  \end{align}
  Taking expectations gives us
  \begin{align*}
    \E{\response{\generic{k}}{x}}
    \leq \ifshort{\slashfrac}{\frac}{
      \ifshort{\gp[\big]}{}{
        \E{\relwork{\generic{1}}{x}}
        + kx
        + (k-1)\z
      }}{\coload{\y}}.
  \end{align*}
  Because \generic{1} is work conserving with respect to relevant work,
  the Pollaczek-Khinchine formula tells us
  \ifshort{\(}{\begin{align*}}
    \E{\relwork{\generic{1}}{x}} = \ifshort{\slashfrac}{\frac}{\excess{\z}}{\coload{\z}}\ifshort{}{,}
  \ifshort{\),}{\end{align*}}
  which completes the proof of~\cref{eq:response_detailed}.

  To connect \cref{eq:response_detailed} to the quantities \QRS[\generic{}],
  we rewrite \cref{eq:response_bound_busy} as
  \begin{align}
    \label{eq:response_bound_split}
    \ifshort{\hspace{-1em}}{}
    \response{\generic{k}}{x}
    \le_\st B_{\ol{\y}}(\relwork{\generic{1}}{x})
    + \sum\ifshort{\nolimits}{}_{1}^k B_{\ol{\y}}(x)
    + \sum\ifshort{\nolimits}{}_{1}^{k-1} B_{\ol{\y}}(\z),
  \end{align}
  where all of the relevant busy periods are independent.
  Prior work on SOAP policies \citep{soap_scully, m-serpt_scully}
  gives names to some of the distributions on the right-hand side.\ifshort{}{\footnote{%
    We define waiting, residence, and inflated residence times
    in terms of relevant busy periods.
    Waiting and residence times also have natural definitions as components of
    \mg{1} response time \citep{soap_scully, m-serpt_scully},
    but we do not need them in this paper.}}
  \begin{itemize}
  \item
    The \emph{size-conditional waiting time} for size~$x$
    is the random variable
    $\waiting{\generic{1}}{x} =_\st B_{\ol{\y}}(\relwork{\generic{1}}{x})$,
    and \emph{waiting time} is
    $\waiting{\generic{1}}{} =_\st \waiting{\generic{1}}{X}$.
  \item
    The \emph{size-conditional residence time} for size~$x$
    is the random variable
    $\residence{\generic{1}}{x} =_\st B_{\ol{\y}}(x)$,
    and \emph{residence time} is
    $\residence{\generic{1}}{} =_\st \residence{\generic{1}}{X}$.
  \item
    As there is no concise name for $B_{\ol{\y}}(\z)$ in prior work,
    \ifshort{let}{we define} \emph{size-conditional inflated residence time} for size~$x$
    to be the random variable
    $\sesidence{\generic{1}}{x} =_\st B_{\ol{\y}}(\z)$,
    and \ifshort{let}{we define} \emph{inflated residence time} \ifshort{}{to }be
    $\sesidence{\generic{1}}{} =_\st \sesidence{\generic{1}}{X}$.
  \end{itemize}
  With these definitions in place, \cref{eq:response_bound_split} gives us
  \begin{align*}
    \response{\generic{k}}{x}
    \le_\st \waiting{\generic{1}}{x}
    + \sum\ifshort{\nolimits}{}_{1}^k \residence{\generic{1}}{x}
    + \sum\ifshort{\nolimits}{}_{1}^{k-1} \sesidence{\generic{1}}{x},
  \end{align*}
  so the result follows by taking the expectation of
  $\response{\generic{k}}{} =_\st \response{\generic{k}}{X}$.
\end{proof}

\Cref{thm:mgk_response} applies only to monotonic SOAP policies.
It is tempting to try to apply the same technique
to SOAP policies with nonmonotonic rank functions,
but \ifshort{it}{as we discuss in \cref{app:mgk_nonmonotonic},
the argument} does not readily generalize.

\ifshort{%
  The proof of \cref{thm:mgk_response} assumes a bound on $\relwork{\generic{k}}{x}$,
  namely \cref{lem:delta_relwork}.
  The lemma generalizes a similar result of
  \citet[Lemma~7.10]{srpt_multiserver_grosof},
  so we omit its proof for lack of space.
}{%
  The proof of \cref{thm:mgk_response} assumes a bound on $\relwork{\generic{k}}{x}$.
  We prove the bound in the following lemma,
  which generalizes a similar lemma of
  \citet[Lemma~7.10]{srpt_multiserver_grosof}.%
}

\begin{lemma}
  \label{lem:delta_relwork}
  Let
  \ifshort{\(}{\begin{align*}}
    \Delta_x(t) = \relwork{\generic{k}}{x}(t) - \relwork{\generic{1}}{x}(t)\ifshort{}{.}
  \ifshort{\).}{\end{align*}}
  Then $\Delta_x(t) \leq (k-1)\z[\generic{}]\esub{}$ for all times~$t$,
  and therefore
  \begin{align*}
    \relwork{\generic{k}}{x} \le_\st \relwork{\generic{1}}{x} + (k-1)\z[\generic{}].
  \end{align*}
\end{lemma}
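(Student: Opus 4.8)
The plan is to couple the \mg{k} and \mg{1} systems so that they see the same Poisson arrival stream (same arrival epochs, same job sizes), start both empty at time~$0$, and track $\Delta_x(t)$ along this common sample path; the stochastic-dominance conclusion then follows by letting $t \to \infty$ to reach stationarity. This generalizes the argument of \citet[Lemma~7.10]{srpt_multiserver_grosof}, the key new ingredient being that monotonicity of $\rank{\generic{}}{}$ makes the jobs contributing relevant work exactly the lowest-ranked jobs in the system.

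First I would record two structural facts that hold under any monotonic SOAP policy. Write $z = \z[\generic{}]$. (i) A job $J'$ contributes positive relevant work iff it is incomplete and its age lies in $[0, z)$, and by monotonicity of $\rank{\generic{}}{}$ together with $z = \sup\{a : \rank{\generic{}}{a} \leq \rank{\generic{}}{x}\}$, these are precisely the jobs of rank at most $\rank{\generic{}}{x}$, i.e., the lowest-ranked jobs present. Hence in the \mg{k}, if $m$ jobs of relevant age are present, then $\min\{m, k\}$ of them are in service (relevant jobs are served before any non-relevant job), so relevant work drains at rate $\min\{m,k\}/k$; in particular it drains at the full rate~$1$ exactly when at least $k$ relevant jobs are in service. (ii) Each relevant job carries strictly less than $z$ units of relevant work, so whenever fewer than $k$ relevant jobs are in service --- equivalently, by~(i), fewer than $k$ relevant jobs are present --- the total relevant work is strictly less than $(k-1)z$. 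Contrapositively, whenever $\relwork{\generic{k}}{x}(t) \geq (k-1)z$, the \mg{k} drains relevant work at rate~$1$. The analogue for the \mg{1} is simpler: total service rate is~$1$, so the \mg{1} always drains relevant work at rate at most~$1$.

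Next I would observe that $\Delta_x(t)$ is continuous in~$t$: arrivals raise $\relwork{\generic{k}}{x}$ and $\relwork{\generic{1}}{x}$ by the identical amount $\min\{z, x_{J'}\}$, so those jumps cancel in $\Delta_x$, and between arrivals each $\relwork{}{}$ decreases continuously. Suppose for contradiction that $\Delta_x(t) > (k-1)z$ for some~$t$. Since $\Delta_x(0) = 0 \leq (k-1)z$, continuity yields a last time $t_0 < t$ with $\Delta_x(t_0) = (k-1)z$ and $\Delta_x(s) > (k-1)z$ for all $s \in (t_0, t]$. On that interval $\relwork{\generic{k}}{x}(s) \geq \Delta_x(s) > (k-1)z$, so by~(ii) the \mg{k} drains relevant work at rate~$1$ while the \mg{1} drains it at rate at most~$1$; hence $\frac{d}{ds}\Delta_x(s) \leq 0$ for a.e.\ $s \in (t_0, t]$, giving $\Delta_x(t) \leq \Delta_x(t_0) = (k-1)z$, a contradiction. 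Thus $\Delta_x(t) \leq (k-1)z$ for all~$t$, so $\relwork{\generic{k}}{x}(t) = \relwork{\generic{1}}{x}(t) + \Delta_x(t) \leq \relwork{\generic{1}}{x}(t) + (k-1)z$ pointwise along the coupling, and letting $t \to \infty$ gives $\relwork{\generic{k}}{x} \leq_\st \relwork{\generic{1}}{x} + (k-1)z$.

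The main obstacle I anticipate is making the drain-rate bookkeeping airtight: one must argue carefully that ``at least $k$ relevant jobs are in service'' genuinely follows once relevant work exceeds $(k-1)z$, which rests on relevant jobs being exactly the lowest-ranked jobs --- precisely the step where monotonicity of the rank function is essential and where nonmonotonic policies fail. Secondary care is needed around measure-zero sets of times (a served job momentarily at age exactly~$z$, or ties broken by FCFS, which affect $\frac{d}{ds}\Delta_x$ only on a null set) and around the passage from the empty-start sample-path bound to the stationary stochastic-dominance statement.
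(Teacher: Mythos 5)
Your proof is correct and rests on the same two core facts as the paper's: (a) under a monotonic rank function, the relevant jobs (age $< \z[\generic{}]$) are exactly the jobs whose rank does not exceed $\rank{\generic{}}{x}$, so when at least $k$ of them are present the \mg{k} drains relevant work at rate~$1$; and (b) arrivals change $\relwork{\generic{k}}{x}$ and $\relwork{\generic{1}}{x}$ by the same amount, so $\Delta_x$ is continuous. What differs is the bookkeeping: the paper decomposes time into few-jobs and many-jobs intervals defined by the state of System~$k$, bounds $\Delta_x$ directly in the first case, and propagates the bound forward from the start of each many-jobs interval (arguing separately, via monotonicity, that such intervals can only begin with an arrival). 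You instead run a contradiction via a last-crossing-time argument: pick the last time $t_0$ at which $\Delta_x = (k-1)\z[\generic{}]$, observe that on $(t_0,t]$ System~$k$ must be draining at rate~$1$, and conclude $\Delta_x$ cannot rise past the threshold. Your route is slightly more compact and sidesteps the explicit classification of how many-jobs intervals begin; the paper's interval decomposition makes the role of monotonicity (no job re-entering relevance) more visible. Both are valid.

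One small slip: you assert that each relevant job carries \emph{strictly} less than $\z[\generic{}]$ units of relevant work, but a job that has just arrived (age~$0$) with size at least $\z[\generic{}]$ carries exactly $\z[\generic{}]$. So the implication should be \textquotedblleft fewer than $k$ relevant jobs present $\Rightarrow$ relevant work $\leq (k-1)\z[\generic{}]$,\textquotedblright{} whose contrapositive is \textquotedblleft relevant work $> (k-1)\z[\generic{}] \Rightarrow$ at least $k$ relevant jobs present.\textquotedblright{} Your main argument already invokes the strict inequality $\relwork{\generic{k}}{x}(s) > (k-1)\z[\generic{}]$ on $(t_0,t]$, so this does not break anything, but the intermediate statement as written is not quite right.
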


\ifshort{}{\begin{proof}
  Throughout this proof, $\z$ refers to $\z[\generic{}]$.
  We consider a pair of coupled systems with the same arrival sequence:
  \begin{itemize}
  \item \emph{System~1}, an \mg{1} using \generic{1}; and
  \item \emph{System~$k$}, an \mg{k} using \generic{k}.
  \end{itemize}
  Our approach is to bound the difference in relevant work between
  Systems~1 and~$k$ at any time~$t$.

  Call a job \emph{relevant} if it has age less than~$\z$.
  These are the only jobs that contribute relevant work.
  To bound $\Delta_x(t)$, we divide times~$t$ into two types of intervals:
  \begin{itemize}
  \item \emph{few-jobs intervals}, during which there are fewer than
    $k$ relevant jobs in System~$k$; and
  \item \emph{many-jobs intervals}, during which there are at least
    $k$ relevant jobs in System~$k$.
  \end{itemize}
  Note that both types of intervals are defined based on System~$k$ alone,
  so System~1 may or may not have relevant jobs during either type of interval.

  Any time $t$ is in either a few-jobs interval or a many-jobs interval.
  If $t$ is in a few-jobs interval, the argument is simple:
  there are at most $k-1$ relevant jobs in System~$k$ at time $t$, so
  \begin{align*}
    \Delta_x(t) \le \relwork{\generic{k}}{x}(t) \le (k-1)\z.
  \end{align*}
  Suppose instead that $t$ is in a many-jobs interval.
  Let $s \leq t$ be the start of the many-jobs interval containing~$t$.
  We will show
  \begin{align*}
    \Delta_x(t) \le \Delta_x(s) \le (k-1)\z.
  \end{align*}

  We begin by showing $\Delta_x(t) \le \Delta_x(s)$.
  Note that arrivals do not affect~$\Delta_x$,
  because the two systems experience the same arrivals
  and have the same definition of relevant work.
  Next, note that service to irrelevant jobs
  does not affect~$\Delta_x$,
  because irrelevant jobs never become relevant under~\generic{},
  since \generic{} is a monotonic policy.
  In fact, the only way that $\Delta_x$ changes over a many-jobs period
  is due to service to relevant jobs.
  System~$k$ serves relevant jobs on all $k$ servers
  throughout a many-jobs period,
  completing relevant work at rate~$1$.
  System~1 may or may not serve relevant jobs
  during a many-jobs period,
  so it completes relevant work at rate at most~$1$.
  This means $\Delta_x(t) \le \Delta_x(s)$, as desired.

  All that remains is to show that $\Delta_x(s) \le (k-1)\z$.
  Recall that $s$ is the start of a many-jobs interval.
  Many-jobs intervals cannot start due to irrelevant jobs becoming relevant,
  because \generic{} is a monotonic policy.
  This means each many-jobs interval starts due to a relevant job arriving
  while System~$k$ has $k-1$ relevant jobs.
  Relevant jobs arriving do not change~$\Delta_x$, as discussed above.
  This means $\Delta_x(s) = \Delta_x(s^-)$,
  where $s^-$ is the instant before the arrival that starts the many-jobs interval.
  But $s^-$ is in a few-jobs interval, so
  \begin{align*}
    \Delta_x(s) = \Delta_x(s^-) \le (k-1)\z.
    \qedhere
  \end{align*}
\end{proof}}

\section{Rank Function Bounds}
\label{sec:rank_bounds}

We now have a bound on \mg{k} mean response time
under monotonic SOAP policies~\generic{},
including \mgittins{} and \mserpt{}.
The bound (\cref{thm:mgk_response}) is expressed in terms of \QRS[\generic{1}],
quantities which in turn are expressed in terms of the new job and old job age cutoffs
$\y[\generic{}]$ and~$\z[\generic{}]$.
In order to prove optimality of \mgittins{} in the heavy-traffic \mg{k},
we need to understand the heavy-traffic behavior of \QRS[\generic{1}],
which, as we will see in \cref{sec:heavy},
boils down to understanding the behavior of $\y[\generic{}]$ and $\z[\generic{}]$
in the $x \to \infty$ limit.
This section is thus devoted to asymptotically bounding
the new job and old job age cutoffs,
and more generally the rank functions,
of \mgittins{} and \mserpt{}.

Recall from \cref{def:mserpt} that
SERPT's rank function is used to define \mserpt{}'s.
The following lemma shows that the two rank functions
are equal at the new job and old job age cutoffs,
and similarly for Gittins and \mgittins{}.\ifshort{}{
\Cref{fig:mserpt_serpt} gives an intuitive picture of the result.}

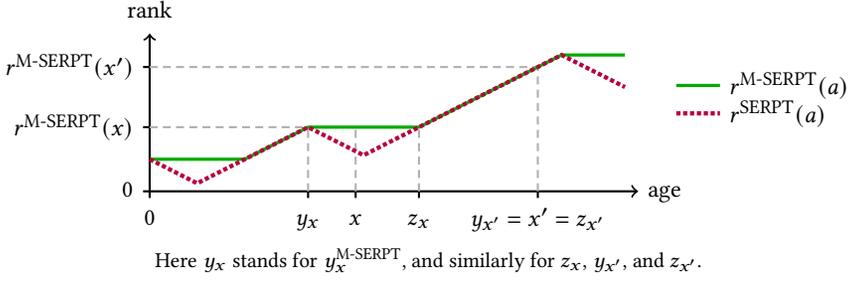
\begin{figure}
  \centering
  \begin{tikzpicture}[figure]
  \axes{15}{9}{$0$}{age}{$0$}{rank}

  \xguide[{$\y$}]{5}{4}
  \yguide[$\rank{\mserpt{}}{x}$]{5}{4}
  \xguide[$x$]{6.5}{4}
  \xguide[{$\z$}]{8.5}{4}

  \xguide[{$\ySize{x'} = x' = \zSize{x'}$}]{12.25}{7.75}
  \yguide[{$\rank{\mserpt{}}{x'}$}]{12.25}{7.75}

  \draw[primary]
  (0, 2) -- (3, 2)
  -- (5, 4) -- (8.5, 4)
  -- (13, 8.5) -- (15, 8.5);

  \draw[secondary]
  (0, 2) -- (1.5, 0.5) -- (3, 2)
  -- (5, 4) -- (6.75, 2.25) -- (8.5, 4)
  -- (13, 8.5) -- (15, 6.5);

  \begin{scope}[xscale=15, yscale=60]
    \iftwocol{\newcommand{\xlegend}{1.05}}{\newcommand{\xlegend}{1.2}}
    \draw[primary] (\xlegend, 0.11) ++(-0.092, 0) -- ++(0.092, 0);
    \node[right] at (\xlegend, 0.11) {$\rank{\mserpt{}}{a}$};
    \draw[secondary] (\xlegend, 0.081) ++(-0.092, 0) -- ++(0.092, 0);
    \node[right] at (\xlegend, 0.081) {$\rank{\serpt{}}{a}$};
  \end{scope}
\end{tikzpicture}
\\[\smallskipamount]
{\footnotesize Here $\y$ stands for $\y[\mserpt{}]$, and similarly for $\z$, $\ySize{x'}$, and $\zSize{x'}$.}

  \captionsqueeze
  \caption{Relationship Between \serpt{} and \mserpt{} Rank Functions}
  \label{fig:mserpt_serpt}
\end{figure}

\begin{lemma}
  \label{lem:rank_yz}
  The SERPT and \mserpt{} rank functions \ifshort{satisfy}{are related by}
  \begin{align*}
    \rank{\serpt{}}{\y[\mserpt{}]}
    \iftwocol{&}{}
    = \rank{\mserpt{}}{\y[\mserpt{}]}
    \iftwocol{\\*}{}
    = \rank{\mserpt{}}{x}
    \iftwocol{&}{}
    = \rank{\mserpt{}}{\z[\mserpt{}]}
    = \rank{\serpt{}}{\z[\mserpt{}]},
  \end{align*}
  and analogously for Gittins and \mgittins{}.
\end{lemma}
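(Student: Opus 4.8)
Write $R = \rank{\mserpt{}}{x}$, and abbreviate $\y = \y[\mserpt{}]$, $\z = \z[\mserpt{}]$. The key structural facts are that $\rank{\mserpt{}}{} = \max_{b \in [0, \cdot]} \rank{\serpt{}}{b}$ is nondecreasing and, since $\rank{\serpt{}}{}$ is continuous (our assumption on expected remaining size), also continuous. The plan is to first establish the two ``$\mserpt{}$'' equalities $\rank{\mserpt{}}{\y} = R = \rank{\mserpt{}}{\z}$ purely from monotonicity and continuity, and then to upgrade each of these to the corresponding ``$\serpt{}$'' equality using the running-maximum structure. Throughout I assume $\z < \infty$; the case $\z = \infty$ occurs only when $\rank{\mserpt{}}{}$ is eventually constant at its global maximum and $x$ lies in or past that region, in which case all five quantities equal that maximum and the statement holds in the obvious limiting sense.

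First I would handle the $\mserpt{}$ equalities. Since $\rank{\mserpt{}}{}$ is nondecreasing and continuous, the set $\{a \ge 0 : \rank{\mserpt{}}{a} < R\}$ is a half-open interval $[0,\y)$ (if nonempty; if empty then $\y = 0$ and $\rank{\mserpt{}}{0} = R$ because $0 \le x$ forces $\rank{\mserpt{}}{0} \le R$), so continuity gives $\rank{\mserpt{}}{\y} = R$. Symmetrically, $\{a \ge 0 : \rank{\mserpt{}}{a} \le R\} = [0,\z]$, and for every $a > \z$ we have $\rank{\mserpt{}}{a} > R$, so continuity gives $\rank{\mserpt{}}{\z} = R$.

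Next, the $\serpt{}$ equalities. For $\y$: the maximum defining $\rank{\mserpt{}}{\y} = \max_{b \in [0,\y]} \rank{\serpt{}}{b} = R$ is attained at some $b^* \in [0,\y]$ by continuity and compactness. If $b^* < \y$, then $\rank{\mserpt{}}{b^*} \ge \rank{\serpt{}}{b^*} = R$, contradicting $b^* < \y$, i.e.\ $\rank{\mserpt{}}{b^*} < R$; hence $b^* = \y$ and $\rank{\serpt{}}{\y} = R$. For $\z$: we already know $\rank{\serpt{}}{\z} \le \rank{\mserpt{}}{\z} = R$, so it suffices to show $\rank{\serpt{}}{\z} \ge R$. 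For any $a > \z$, since $\rank{\mserpt{}}{a} > R$ but $\rank{\mserpt{}}{\z} = R$, we must have $\max_{b \in [\z,a]} \rank{\serpt{}}{b} > R$; letting $a \downarrow \z$ and using continuity of $\rank{\serpt{}}{}$ yields $\rank{\serpt{}}{\z} \ge R$, as desired. This proves all five quantities equal $R$.

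Finally, the Gittins/\mgittins{} case is obtained by the identical argument with $\rank{\serpt{}}{}$ replaced by $\rank{\gittins{}}{}$ and $\rank{\mserpt{}}{}$ by $\rank{\mgittins{}}{}$; the only fact used about $\rank{\serpt{}}{}$ was its continuity, and $\rank{\gittins{}}{}$ is continuous by \cref{lem:rank_continuous}. I expect the main care points — not deep obstacles — to be the $\z$-side argument (which must look just \emph{above} $\z$ rather than below, unlike the $\y$-side) and the degenerate edge cases $\y = 0$ and $\z = \infty$, all of which are dispatched by the monotonicity/continuity bookkeeping above rather than by any new idea.
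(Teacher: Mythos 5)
Your proof is correct, and it takes a genuinely different route from the paper's. The paper first proves the \emph{outer} equalities by invoking the piecewise-monotonicity assumption: \cref{def:yz} together with piecewise-monotonicity forces $\rank{\mserpt{}}{}$ to be \emph{strictly} increasing on some $(\y - \delta, \y)$ and $(\z, \z + \delta)$, and a running maximum agrees with its underlying function wherever it is strictly increasing, so $\rank{\mserpt{}}{} = \rank{\serpt{}}{}$ on those punctured neighborhoods and continuity extends this to $\y$ and $\z$; the inner equality $\rank{\mserpt{}}{\y} = \rank{\mserpt{}}{\z}$ then follows from constancy on $[\y, \z)$. You reverse the order (inner equalities first, from the sup characterization plus monotonicity and continuity alone) and then get the outer equalities directly from the running-max structure via a compactness/maximizer argument at $\y$ and a one-sided limit $a \downarrow \z$. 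Your argument is slightly more elementary in that it never invokes piecewise-monotonicity — continuity of $\rank{\serpt{}}{}$ and nondecreasingness of $\rank{\mserpt{}}{}$ suffice — whereas the paper leans on piecewise-monotonicity to get the ``strictly increasing in an interval'' step. You are also more careful about the degenerate cases $\y = 0$ and $\z = \infty$, which the paper glosses over. One stylistic note: the paper phrases the running-max agreement as a local property (``for $\rank{\mserpt{}}{}$ to be increasing at $a$ we must have $\rank{\mserpt{}}{a} = \rank{\serpt{}}{a}$''), which is the reusable lemma your maximizer argument re-derives in the specific instance needed; either packaging is fine.
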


\begin{proof}
  We prove the statement for SERPT and \mserpt{},
  as the proof for Gittins and \mgittins{} is analogous.
  Throughout this proof,
  $\y$ and $\z$ refer to $\y[\mserpt{}]$ and~$\z[\mserpt{}]$, respectively.\ifshort{}{
  The illustration in \cref{fig:mserpt_serpt}
  may provide helpful intuition for the following argument.}

  We first show the outer equalities.
  \Cref{def:yz} implies that
  $\rank{\mserpt{}}{}$ is increasing in the intervals
  $(\y - \delta, \y)$ and $(\z, \z + \delta)$ for some $\delta > 0$.
  By \cref{def:mserpt}, for $\rank{\mserpt{}}{}$ to be increasing at age~$a$,
  we must have $\rank{\mserpt{}}{a} = \rank{\serpt{}}{a}$,
  so continuity of $\rank{\mserpt{}}{}$ (\ifshort{\cref{sub:rank_functions}}{\cref{lem:rank_continuous}})
  implies the outer equalities.

  By \cref{eq:yxz} and the monotonicity of~$\rank{\mserpt{}}{}$,
  it remains only to show $\rank{\mserpt{}}{\y} = \rank{\mserpt{}}{\z}$.
  This is immediate if $\y = \z$, and if $\y < \z$,
  then $\rank{\mserpt{}}{}$ is constant over the interval $[\y, \z)$,
  so the result follows by the continuity of $\rank{\mserpt{}}{}$
  (\ifshort{\cref{sub:rank_functions}}{\cref{lem:rank_continuous}}).
\end{proof}

\subsection{Bounds on the \mserpt{} Rank Function}
\label{sub:rank_bounds_mserpt}

In this section we show two bounds
on $\y[\mserpt{}]$ and $\z[\mserpt{}]$,
each subject to a different assumption on the job size distribution.

\begin{theorem}
  \label{thm:yz_bound_mserpt_or}
  If $X \in \OR{1}{\infty}$, then
  \begin{align*}
    \rank{\serpt{}}{a} &= \Theta(a), \ifshort{&}{\\}
    \ifshort{\y[\mserpt{}] &= \Theta(x), \\}{}
    \rank{\mserpt{}}{a} &= \Theta(a), \ifshort{&}{\\}
    \ifshort{}{\y[\mserpt{}] &= \Theta(x), \\}
    \z[\mserpt{}] &= \Theta(x).
  \end{align*}
\end{theorem}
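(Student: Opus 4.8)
The plan is to reduce the entire statement to a single asymptotic estimate, namely $\rank{\serpt{}}{a} = \Theta(a)$ as $a \to \infty$, which follows almost immediately from the $O$-regular variation of $\F{}$. First I would unpack the hypothesis $X \in \OR{1}{\infty}$ via \cref{def:or}: there exist exponents $\beta \geq \alpha > 1$ and constants $C_0, x_0 > 0$ such that $\frac{1}{C_0}(y/x)^{-\beta} \leq \F{y}/\F{x} \leq C_0(y/x)^{-\alpha}$ for all $y \geq x \geq x_0$. Writing $\rank{\serpt{}}{a} = \E{X - a \given X > a} = \int_a^\infty \F{t}\,dt \,/\, \F{a}$, I would bound the numerator above and below for $a \geq x_0$. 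For the upper bound, $\F{t} \leq C_0 \F{a}(t/a)^{-\alpha}$ on $[a,\infty)$ and $\alpha > 1$ give $\int_a^\infty \F{t}\,dt \leq \frac{C_0}{\alpha-1}\,a\,\F{a}$, hence $\rank{\serpt{}}{a} \leq \frac{C_0}{\alpha-1}\,a$. For the lower bound, monotonicity of $\F{}$ plus the tail bound give $\int_a^\infty \F{t}\,dt \geq a\,\F{2a} \geq \frac{2^{-\beta}}{C_0}\,a\,\F{a}$, hence $\rank{\serpt{}}{a} \geq \frac{2^{-\beta}}{C_0}\,a$. This establishes $\rank{\serpt{}}{a} = \Theta(a)$ in the $a \to \infty$ sense, which is all $\Theta$ can mean here since $\rank{\serpt{}}{a}$ does not vanish as $a \to 0$.

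Next I would transfer this to $\rank{\mserpt{}}{a} = \max_{b \in [0,a]} \rank{\serpt{}}{b}$ (\cref{def:mserpt}). The lower bound $\rank{\mserpt{}}{a} \geq \rank{\serpt{}}{a} = \Omega(a)$ is immediate. For the upper bound I would split the maximum over the compact range $[0, x_0]$, on which $\rank{\serpt{}}{}$ is continuous (\cref{lem:rank_continuous}) and hence bounded by some constant $M$, and over $[x_0, a]$, on which $\rank{\serpt{}}{b} \leq \frac{C_0}{\alpha-1}\,b \leq \frac{C_0}{\alpha-1}\,a$. Thus $\rank{\mserpt{}}{a} \leq \max\{M,\, \frac{C_0}{\alpha-1}\,a\} = O(a)$, so $\rank{\mserpt{}}{a} = \Theta(a)$.

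Finally I would pin down the age cutoffs using \cref{eq:yxz} together with the monotonicity of $\rank{\mserpt{}}{}$. The bounds $\y[\mserpt{}] \leq x$ and $\z[\mserpt{}] \geq x$ already give $\y[\mserpt{}] = O(x)$ and $\z[\mserpt{}] = \Omega(x)$. Writing $\rank{\mserpt{}}{a} \geq c_1 a$ for $a \geq x_0$ and $\rank{\mserpt{}}{x} \leq c_2 x$ from the previous paragraph, \cref{def:yz} shows that any $a$ in the set defining $\z[\mserpt{}]$ satisfies $\rank{\mserpt{}}{a} \leq \rank{\mserpt{}}{x}$; since for large $x$ the relevant supremum is attained beyond $x_0$ (because $\z[\mserpt{}] \geq x$), this forces $c_1 a \leq c_2 x$, i.e. $\z[\mserpt{}] \leq (c_2/c_1)\,x = O(x)$. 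Symmetrically, for any $c < c_1/c_2$ and any $a < cx$ with $x$ large, $\rank{\mserpt{}}{a} < \rank{\mserpt{}}{x}$ — checking $a < x_0$ separately, where $\rank{\mserpt{}}{a}$ is bounded by a constant while $\rank{\mserpt{}}{x} \to \infty$ — so $\y[\mserpt{}] \geq cx = \Omega(x)$. Combining the directions yields $\y[\mserpt{}] = \z[\mserpt{}] = \Theta(x)$.

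The only real obstacle is bookkeeping rather than substance: the $O$-regular variation estimates hold only for arguments at least $x_0$, so every step needs a separate (trivial) treatment of the compact range $[0, x_0]$, and one must keep track that $\Theta$ here refers to the $a \to \infty$ (resp. $x \to \infty$) asymptotics, not to uniform equivalence. All the genuine content is concentrated in the elementary SERPT rank estimate of the first paragraph.
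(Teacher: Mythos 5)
Your proposal is correct and follows the same line of argument as the paper: bound $\rank{\serpt{}}{a}=\Theta(a)$ directly from the $O$-regular variation estimates, transfer to $\rank{\mserpt{}}{a}=\Theta(a)$ by taking the running maximum, and then read off the age-cutoff bounds. The only cosmetic difference is in the last step, where the paper invokes \cref{lem:rank_yz} (which states $\rank{\mserpt{}}{\y[\mserpt{}]}=\rank{\mserpt{}}{x}=\rank{\mserpt{}}{\z[\mserpt{}]}$) and then applies the $\Theta(a)$ estimate, whereas you re-derive the same consequence by arguing directly from \cref{def:yz}; both are sound, and your version is a bit more explicit about the compact-range bookkeeping that the paper compresses into ``$\max_{b\in[0,a]}\Theta(b)=\Theta(a)$.''
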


\begin{proof}
  By \cref{def:or}, there exists $\alpha > 1$ such that
  \begin{align*}
    \rank{\serpt{}}{a}
    = \int_a^\infty \frac{\F{t}}{\F{a}} \d{t}
    \leq O(1)\int_a^\infty \gp*{\frac{t}{a}}^{-\alpha} \d{t}
    = O(a),
  \end{align*}
  and $\rank{\serpt{}}{a} = \Omega(a)$ follows similarly.
  This implies
  \begin{align*}
    \rank{\mserpt{}}{a}
    = \max_{b \in [0, a]} \rank{\serpt{}}{b}
    = \max_{b \in [0, a]} \Theta(b)
    = \Theta(a),
  \end{align*}
  so the result follows from \cref{lem:rank_yz}.
\end{proof}

\begin{theorem}
  \label{thm:yz_bound_mserpt_qimrl}
  If $X \in \QDHR \cup \QIMRL$ with exponent~$\gamma$, then
  \begin{align*}
    \y[\mserpt{}] &= \Omega(x^{1/\gamma}), \ifshort{&}{\\}
    \z[\mserpt{}] &= O(x^\gamma).
  \end{align*}
\end{theorem}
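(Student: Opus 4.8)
The plan is to reduce the bound on $\y[\mserpt{}]$ and $\z[\mserpt{}]$ to a bound on the SERPT rank function $\rank{\serpt{}}{}$, using \cref{lem:rank_yz}. By \cref{lem:rank_yz}, we have $\rank{\serpt{}}{\y[\mserpt{}]} = \rank{\serpt{}}{x}_{\text{(understood as }\rank{\mserpt{}}{x}\text{)}} = \rank{\serpt{}}{\z[\mserpt{}]}$, so it suffices to understand how quickly $\rank{\serpt{}}{a} = \E{X - a \given X > a}$ can vary. Concretely, if we can show that $\rank{\serpt{}}{a}$ is ``quasi-monotonic'' in the sense that $\rank{\serpt{}}{a} \leq \rank{\serpt{}}{C_0 a^\gamma}$-type comparisons hold for large $a$, then since $\z[\mserpt{}] \geq x$ and $\rank{\serpt{}}{\z[\mserpt{}]} = \rank{\mserpt{}}{x} \geq \rank{\serpt{}}{x}$, we can invert such a relation to pin down $\z[\mserpt{}] = O(x^\gamma)$, and symmetrically $\y[\mserpt{}] = \Omega(x^{1/\gamma})$ using $\y[\mserpt{}] \leq x$.

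First I would handle the $\QIMRL$ case, which is the more direct of the two. By \cref{def:qimrl}, there is a strictly increasing $m$ with $m(a) \leq \rank{\serpt{}}{a} = \E{X - a \given X > a} \leq m(C_0 a^\gamma)$ for $a \geq x_0$. For $x$ large, \cref{lem:rank_yz} gives $\rank{\serpt{}}{\z[\mserpt{}]} = \rank{\mserpt{}}{x} = \max_{b \leq x} \rank{\serpt{}}{b} \geq \rank{\serpt{}}{x} \geq m(x)$. On the other hand $\rank{\serpt{}}{\z[\mserpt{}]} \leq m(C_0 \z[\mserpt{}]^\gamma)$. Since $m$ is strictly increasing, $m(C_0 \z[\mserpt{}]^\gamma) \geq m(x)$ forces $C_0 \z[\mserpt{}]^\gamma \geq x$, i.e. $\z[\mserpt{}] = \Omega(x^{1/\gamma})$ — wait, that is the wrong direction, so I need the upper bound on $\rank{\mserpt{}}{x}$ as well: $\rank{\mserpt{}}{x} = \max_{b \leq x}\rank{\serpt{}}{b} \leq \max_{b \leq x} m(C_0 b^\gamma) = m(C_0 x^\gamma)$ (using monotonicity of $m$ and that $b \mapsto C_0 b^\gamma$ is increasing). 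Then $\rank{\serpt{}}{\z[\mserpt{}]} \geq m(\z[\mserpt{}])$ combined with $\rank{\serpt{}}{\z[\mserpt{}]} = \rank{\mserpt{}}{x} \leq m(C_0 x^\gamma)$ gives $m(\z[\mserpt{}]) \leq m(C_0 x^\gamma)$, hence $\z[\mserpt{}] \leq C_0 x^\gamma = O(x^\gamma)$. The bound $\y[\mserpt{}] = \Omega(x^{1/\gamma})$ follows symmetrically: $\rank{\serpt{}}{\y[\mserpt{}]} = \rank{\mserpt{}}{x} \geq \rank{\serpt{}}{x} \geq m(x)$ for $x \geq x_0$, while $\rank{\serpt{}}{\y[\mserpt{}]} \leq m(C_0 \y[\mserpt{}]^\gamma)$, so $m(x) \leq m(C_0 \y[\mserpt{}]^\gamma)$ gives $\y[\mserpt{}] \geq (x/C_0)^{1/\gamma}$. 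I would also need to note the case $\y[\mserpt{}] < x_0$ or similar edge conditions are handled by taking $x$ sufficiently large, since the conclusion is asymptotic.

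Next I would handle the $\QDHR$ case. Here the hypothesis controls $1/h(a)$ rather than $\rank{\serpt{}}{a}$ directly, so the key extra ingredient is a comparison between $\rank{\serpt{}}{a} = \E{X - a \given X > a}$ and $1/h(a)$. The natural tool is the identity $\rank{\serpt{}}{a} = \F{a}^{-1}\int_a^\infty \F{t}\,\d{t}$ together with a bound relating $\F{t}/\F{a}$ to $h$; under $\QDHR$, $\int_a^\infty \F{t}/\F{a}\,\d{t}$ should be sandwiched between constant multiples of $m(a)$ and $m(C_0 a^\gamma)$ (possibly with an adjusted constant and exponent), using that $1/h$ is sandwiched this way and that a distribution with roughly-decreasing hazard rate has mean residual life comparable to $1/h$ up to polynomial slack. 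Once $\rank{\serpt{}}{a}$ is shown to lie between $m'(a)$ and $m'(C_0' a^{\gamma'})$ for some strictly increasing $m'$ and some $\gamma' \geq 1$ (absorbing any polynomial blowup from the $h$-to-mean-residual-life conversion into $\gamma'$), the same inversion argument as in the $\QIMRL$ case closes the proof.

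The main obstacle I anticipate is precisely this last step: converting the $\QDHR$ hypothesis on $1/h$ into a usable sandwich on $\rank{\serpt{}}{a}$. Integrating $\F{}$ against a quasi-monotonic hazard rate requires care, because $h$ being only \emph{roughly} decreasing (up to the $m(x) \leq 1/h(x) \leq m(C_0 x^\gamma)$ window) means $\F{t}/\F{a} = \exp(-\int_a^t h(s)\,\d{s})$ need not decay in a clean exponential way; I would need to bound $\int_a^t h(s)\,\d{s}$ from below and above using the $m$-sandwich and argue the tail integral $\int_a^\infty \F{t}\,\d{t}$ is dominated by $t$ of order at most $\mathrm{poly}(a)$ above $a$, which is where the extra polynomial factor in the exponent $\gamma'$ comes from. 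The $\OR$-based argument of \cref{thm:yz_bound_mserpt_or} is not available here since $\QDHR$ distributions can be much lighter-tailed, so this estimate has to be done essentially from scratch, and getting the quantifiers on $\gamma$ and the constants to line up with the theorem statement is the delicate part.
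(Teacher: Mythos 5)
Your $\QIMRL$ argument is essentially the paper's: both use \cref{lem:rank_yz} to equate $\rank{\serpt{}}{\y[\mserpt{}]} = \rank{\mserpt{}}{x} = \rank{\serpt{}}{\z[\mserpt{}]}$, then invert the strictly increasing sandwich $m(a) \leq \rank{\serpt{}}{a} \leq m(C_0 a^\gamma)$ to pin down the cutoffs. The paper just phrases the inversion slightly differently (it shows $\z[\mserpt{}] = O(\y[\mserpt{}]^\gamma)$ and then invokes \cref{eq:yxz}), but that is a cosmetic difference. You are also right that for large enough $x$ the edge cases near $x_0$ are absorbable.

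The $\QDHR$ case is where you have a genuine gap, and you have correctly diagnosed it yourself: you want to convert the hypothesis $m(a) \leq 1/h(a) \leq m(C_0 a^\gamma)$ into a comparable sandwich on $\rank{\serpt{}}{a} = \E{X - a \given X > a}$, and you do not see how to do the upper-bound direction. This is not just a technicality to be ground out; it is the wrong fulcrum. The paper never relates $\rank{\serpt{}}{}$ to $1/h$ at all. Instead it proves the $\QDHR$ case by reduction to \cref{thm:yz_bound_mgittins_qdhr}, the analogous bound for \mgittins{}, via a sandwich from prior work \citep[Eq.~(3.8)]{m-serpt_scully}: $\y[\mgittins{}] \leq \y[\mserpt{}] \leq \z[\mserpt{}] \leq \z[\mgittins{}]$. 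Once $\z[\mgittins{}] = O(x^\gamma)$ and $\y[\mgittins{}] = \Omega(x^{1/\gamma})$ are known, the \mserpt{} cutoffs inherit both bounds immediately.

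The reason the \mgittins{} bound is tractable under $\QDHR$ (and the direct \mserpt{} attack is not) is that the Gittins rank function has exactly the right structural relationship to $1/h$: taking $b \to a$ in \cref{def:eta} gives $\rank{\gittins{}}{a} \leq \eta(a, a) = 1/h(a)$, and writing $1/\eta(a, b)$ as a weighted average of $h$ over $(a, b]$ gives $\rank{\gittins{}}{a} \geq \inf_{b > a} 1/h(b) \geq m(a)$. So $\rank{\gittins{}}{}$ is pinched directly between $\inf_{b>a} 1/h(b)$ and $1/h(a)$, and the $\QDHR$ hypothesis transfers verbatim. The SERPT rank function enjoys no such two-sided relationship to $1/h$; the upper bound you would need, $\rank{\serpt{}}{a} = O(m(C_0' a^{\gamma'}))$, does not obviously follow from $\QDHR$ alone, because the tail integral $\int_a^\infty \F{t}/\F{a}\,\d{t}$ can be dominated by $t$ far above $a$ where $h(t)$ is much smaller than $h(a)$. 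To fix your proof, replace the $\QDHR$ branch entirely with the reduction to \cref{thm:yz_bound_mgittins_qdhr} via the \mgittins{}/\mserpt{} cutoff sandwich.
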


\begin{proof}
  The $\QDHR$ case follows from
  \cref{thm:yz_bound_mgittins_qdhr} (\cref{sub:rank_bounds_mgittins})
  and a result of \citet[Eq.~(3.8)]{m-serpt_scully} stating
  \begin{align*}
    \y[\mgittins{}] \leq \y[\mserpt{}] \leq \z[\mserpt{}] \leq \z[\mgittins{}],
  \end{align*}
  so only the $\QIMRL$ case remains.

  In the rest of this proof,
  $\y$ and $\z$ refer to $\y[\mserpt{}]$ and~$\z[\mserpt{}]$, respectively.
  By \cref{eq:yxz}, it suffices to show $\z = O(\y^\gamma)$.
  Because $X \in \QIMRL$ with exponent~$\gamma$,
  there exists strictly increasing function $m : \R_+ \to \R_+$
  such that for all ages~$a$,
  \begin{align*}
    a \leq m^{-1}\gp[\big]{\rank{\serpt{}}{a}} \leq O(a^\gamma).
  \end{align*}
  The result follows by \ifshort{}{plugging in $a = \y$ and $a = \z$
  and }applying \cref{lem:rank_yz}.
\end{proof}

\subsection{Bounds on the \mgittins{} Rank Function}
\label{sub:rank_bounds_mgittins}

In this section we show two bounds
on $\y[\mgittins{}]$ and $\z[\mgittins{}]$,
each subject to a different assumption on the job size distribution.

\begin{theorem}
  \label{thm:yz_bound_mgittins_or}
  If $X \in \OR{1}{\infty}$, then
  \begin{align*}
    \y[\mgittins{}] &= \Theta(x), \ifshort{&}{\\}
    \z[\mgittins{}] &= \Theta(x).
  \end{align*}
\end{theorem}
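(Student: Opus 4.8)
The plan is to reduce the \mgittins{} bound to the already-proven \mserpt{} bound (\cref{thm:yz_bound_mserpt_or}) via the sandwiching inequality $\y[\mgittins{}] \leq \y[\mserpt{}] \leq \z[\mserpt{}] \leq \z[\mgittins{}]$ cited from \citet[Eq.~(3.8)]{m-serpt_scully}, together with a matching bound in the other direction obtained by controlling the Gittins rank function directly. The \mserpt{} result gives $\y[\mserpt{}] = \Theta(x)$ and $\z[\mserpt{}] = \Theta(x)$, so the sandwich immediately yields $\y[\mgittins{}] = O(x)$ (since $\y[\mgittins{}] \leq \y[\mserpt{}] = O(x)$) and $\z[\mgittins{}] = \Omega(x)$ (since $\z[\mgittins{}] \geq \z[\mserpt{}] = \Omega(x)$). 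What remains is the complementary pair of bounds, $\y[\mgittins{}] = \Omega(x)$ and $\z[\mgittins{}] = O(x)$.

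For those, I would work with $\rank{\gittins{}}{}$ directly. First I would establish $\rank{\gittins{}}{a} = \Theta(a)$ for $X \in \OR{1}{\infty}$. The upper bound $\rank{\gittins{}}{a} \leq \rank{\serpt{}}{a} = O(a)$ is immediate from \cref{def:gittins} (taking $b \to \infty$ in the infimum) combined with \cref{thm:yz_bound_mserpt_or}. For the lower bound $\rank{\gittins{}}{a} = \Omega(a)$, I would use the $O$-regular variation of $\F{}$: for any candidate $b > a$, write
\begin{align*}
  \frac{\int_a^b \F{t}\,\d{t}}{\F{a} - \F{b}}
  \geq \frac{\int_a^b \F{t}\,\d{t}}{\F{a}},
\end{align*}
and then use the lower bound $\F{t}/\F{a} \geq \frac{1}{C_0}(t/a)^{-\beta}$ from \cref{def:or} to show $\int_a^b \F{t}\,\d{t}/\F{a} = \Omega(b - a)$ when $b$ is, say, $2a$; combined with a separate argument handling $b$ close to $a$ (where $\F{a} - \F{b}$ is small but so is the numerator, and one uses that $\F{}$ is differentiable with $\F{}'(a) = -h(a)\F{a}$ and $h(a) = O(1/a)$ by $\OR{1}{\infty}$), this bounds the infimum from below by $\Omega(a)$. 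Then $\rank{\mgittins{}}{a} = \max_{b \in [0,a]} \rank{\gittins{}}{b} = \Theta(a)$ exactly as in the proof of \cref{thm:yz_bound_mserpt_or}, and \cref{lem:rank_yz} (applied to Gittins and \mgittins{}) converts a $\Theta(a)$ rank function into $\y[\mgittins{}] = \Theta(x)$ and $\z[\mgittins{}] = \Theta(x)$.

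In fact, once $\rank{\mgittins{}}{a} = \Theta(a)$ is in hand, the argument is essentially a carbon copy of the \mserpt{} proof: \cref{lem:rank_yz} gives $\rank{\serpt{}}{\y[\mgittins{}]} = \rank{\gittins{}}{\y[\mgittins{}]} = \rank{\mgittins{}}{x} = \Theta(x)$ is replaced by the Gittins analogue $\rank{\gittins{}}{\y[\mgittins{}]} = \rank{\mgittins{}}{x}$, and since $\rank{\gittins{}}{a} = \Theta(a)$ this forces $\y[\mgittins{}] = \Theta(x)$; the same with $\z[\mgittins{}]$. So I would not even need the sandwich inequality — it is available as a sanity check but the direct route through $\rank{\gittins{}}{a} = \Theta(a)$ plus \cref{lem:rank_yz} suffices.

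The main obstacle is the lower bound $\rank{\gittins{}}{a} = \Omega(a)$, because the Gittins rank is an infimum over cutoffs $b$, and one must verify that no choice of $b$ — neither $b$ far from $a$ nor $b$ just barely exceeding $a$ — can make the ratio $\int_a^b \F{t}\,\d{t}/(\F{a} - \F{b})$ small. The far regime is controlled by the Pareto-type lower bound on $\F{}$; the near regime requires a short argument using that $h(a) = O(1/a)$ (a consequence of $\F{} \in \OR{1}{\infty}$, since the upper Matuszewska index being finite bounds $a h(a)$) so that $\F{a} - \F{b} \approx h(a)\F{a}(b-a) \lesssim \F{a}(b-a)/a$ while $\int_a^b \F{t}\,\d{t} \approx \F{a}(b-a)$, giving a ratio $\gtrsim a$. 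Everything else is bookkeeping with the $O$-regular variation inequalities and an appeal to \cref{lem:rank_yz}.
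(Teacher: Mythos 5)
The observation that the sandwich $\y[\mgittins{}] \leq \y[\mserpt{}] \leq \z[\mserpt{}] \leq \z[\mgittins{}]$ gives half the bounds for free is a valid simplification. But your route for the remaining direction, via $\rank{\gittins{}}{a} = \Theta(a)$, has a genuine gap, and this is \emph{not} the paper's approach.

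The gap is in the claim $h(a) = O(1/a)$ and the resulting lower bound $\rank{\gittins{}}{a} = \Omega(a)$. Membership in $\OR{1}{\infty}$ is a condition on the \emph{ratio} $\F{y}/\F{x}$, which translates into bounds on $\int_x^y h(t) \d{t}$ relative to $\int_x^y (1/t) \d{t}$ --- an integral condition. It gives no pointwise control on $h$. One can construct $\F{} \in \OR{1}{\infty}$ with $h(a_n) \, a_n \to \infty$ along a sparse sequence $a_n \to \infty$ by superimposing narrow ``spikes'' in the hazard rate on a Pareto-like tail: if the spike at $a_n$ has height $M_n/a_n$ and width $\Theta(a_n/M_n)$, the multiplicative drop in $\F{}$ across the spike is $O(1)$, so the Matuszewska indices are unaffected, yet $h(a_n) = M_n/a_n$ with $M_n \to \infty$. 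At such an $a_n$, $\rank{\gittins{}}{a_n} \leq \eta(a_n, a_n) = 1/h(a_n) = a_n/M_n = o(a_n)$, so $\rank{\gittins{}}{a} = \Omega(a)$ \emph{fails}. Your ``near regime'' analysis ($b$ close to $a$) is exactly where this breaks down, and it is exactly the regime a direct lower bound on $\inf_b \eta(a,b)$ cannot avoid.

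The paper's proof sidesteps the near regime entirely. It does not claim $\rank{\gittins{}}{a} = \Theta(a)$ globally, only that $\rank{\gittins{}}{\y[\mgittins{}]}$ is simultaneously $O(\y[\mgittins{}])$ and $\Omega(\z[\mgittins{}])$. The key input is that $\y[\mgittins{}]$ and $\z[\mgittins{}]$ are endpoints of a \mgittins{} hill, from which a standard Gittins-index fact (cited as \citet[Lemma~2.2]{book_gittins} in the proof of \cref{lem:eta_bound_easy}) gives $\rank{\gittins{}}{\y[\mgittins{}]} = \eta(\y[\mgittins{}], \z[\mgittins{}])$ exactly, with the infimum attained at $\z[\mgittins{}]$. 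One then lower bounds $\eta$ only over the long interval from $\z[\mgittins{}]/2$ to $\z[\mgittins{}]$ (via \cref{lem:eta_bound_easy, lem:eta_bound_hard}), where $O$-regular variation does give a $\Omega(\z[\mgittins{}])$ bound. This avoids ever having to control $\eta(a, b)$ for $b$ near $a$, so the lack of pointwise hazard-rate control is irrelevant. You would need to replace the near-regime portion of your argument with something along these lines --- fixing $b$ at the hill endpoint $\z[\mgittins{}]$, not infimizing over all $b > a$.
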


\begin{theorem}
  \label{thm:yz_bound_mgittins_qdhr}
  If $X \in \QDHR$ with exponent~$\gamma$, then
  \begin{align*}
    \y[\mgittins{}] &= \Omega(x^{1/\gamma}), \ifshort{&}{\\}
    \z[\mgittins{}] &= O(x^\gamma).
  \end{align*}
\end{theorem}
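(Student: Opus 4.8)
The plan is to show that the ordinary Gittins rank function is sandwiched between two dilations of the reciprocal hazard rate, transfer that sandwich to the \mgittins{} rank function via the defining maximum, and then invert it at the age cutoffs using \cref{lem:rank_yz}. The point of routing through Gittins rather than \mgittins{} directly is that $\rank{\gittins{}}{a}$ has a clean pointwise relationship to $1/h(a)$, whereas $\rank{\mgittins{}}{a}$ does not.

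First I would establish the pointwise bounds
\[
  \inf_{t \geq a} \frac{1}{h(t)} \;\leq\; \rank{\gittins{}}{a} \;\leq\; \frac{1}{h(a)}.
\]
The upper bound holds because, as $b \downarrow a$, the ratio $\frac{\int_a^b \F{t}\,\d{t}}{\F{a} - \F{b}}$ tends to $\F{a}/f(a) = 1/h(a)$ — an L'Hôpital computation, valid since the paper assumes $h$, and hence $f = h\F{}$, is continuous — and $\rank{\gittins{}}{a}$, being the infimum over $b > a$, is at most this limit. The lower bound follows by writing $\F{a} - \F{b} = \int_a^b h(t)\F{t}\,\d{t}$, so that each candidate ratio equals $\big(\int_a^b \F{t}\,\d{t}\big)\big/\big(\int_a^b h(t)\F{t}\,\d{t}\big) \geq 1/\sup_{t\in[a,b]} h(t) = \inf_{t\in[a,b]} 1/h(t)$; taking the infimum over $b$, including $b = \infty$ (finite denominator since $\int_a^\infty \F{t}\,\d{t} \le \E{X} < \infty$), gives the claim. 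Invoking $X \in \QDHR$ with its strictly increasing $m$ and exponent $\gamma$: for $a \geq x_0$ we get $\inf_{t\geq a} 1/h(t) \geq \inf_{t \geq a} m(t) = m(a)$ and $1/h(a) \leq m(C_0 a^\gamma)$, so $m(a) \leq \rank{\gittins{}}{a} \leq m(C_0 a^\gamma)$ for $a \geq x_0$.

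Next I would transfer this to $\rank{\mgittins{}}{}$. Since $\rank{\mgittins{}}{a} = \max_{b \in [0,a]} \rank{\gittins{}}{b} \geq \rank{\gittins{}}{a}$, the lower bound $\rank{\mgittins{}}{a} \geq m(a)$ is immediate for $a \ge x_0$. For the upper bound I split the maximum at $x_0$: the part over $[0, x_0]$ is a finite constant $C_1$ by continuity of $\rank{\gittins{}}{}$ on a compact interval (\cref{lem:rank_continuous}), while the part over $[x_0, a]$ is at most $\max_{b \in [x_0,a]} m(C_0 b^\gamma) = m(C_0 a^\gamma)$ since $b \mapsto m(C_0 b^\gamma)$ is increasing. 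Hence $\rank{\mgittins{}}{a} \leq \max\{C_1, m(C_0 a^\gamma)\}$, which equals $m(C_0 a^\gamma)$ once $a$ is large enough that $m(C_0 a^\gamma) \geq C_1$. So $m(a) \leq \rank{\mgittins{}}{a} \leq m(C_0 a^\gamma)$ for all sufficiently large $a$. Finally, by \cref{lem:rank_yz} we have $\rank{\gittins{}}{\y[\mgittins{}]} = \rank{\mgittins{}}{x} = \rank{\gittins{}}{\z[\mgittins{}]}$, and by \cref{eq:yxz}, $\y[\mgittins{}] \le x \le \z[\mgittins{}]$, so both cutoffs exceed $x_0$ for $x$ large. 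Then $m(\z[\mgittins{}]) \le \rank{\gittins{}}{\z[\mgittins{}]} = \rank{\mgittins{}}{x} \le m(C_0 x^\gamma)$, and strict monotonicity of $m$ gives $\z[\mgittins{}] \le C_0 x^\gamma = O(x^\gamma)$; symmetrically, $m(x) \le \rank{\mgittins{}}{x} = \rank{\gittins{}}{\y[\mgittins{}]} \le 1/h(\y[\mgittins{}]) \le m(C_0 (\y[\mgittins{}])^\gamma)$ gives $x \le C_0 (\y[\mgittins{}])^\gamma$, i.e. $\y[\mgittins{}] \ge (x/C_0)^{1/\gamma} = \Omega(x^{1/\gamma})$.

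The main obstacle I anticipate is the bookkeeping around "for $x$ large enough": specifically verifying $\y[\mgittins{}] \to \infty$ as $x \to \infty$, so that both cutoffs eventually land in the regime $\ge x_0$, and ruling out degenerate cases where $\rank{\mgittins{}}{}$ is eventually constant (which would make $\z[\mgittins{}]$ infinite, as happens under FCFS). Here it helps that $\QDHR$ with a strictly increasing $m$ satisfying $1/h(x) \le m(C_0 x^\gamma)$ for all large $x$ forces $1/h$, and hence $\rank{\gittins{}}{}$, to be genuinely non-degenerate, so this FCFS-like pathology does not arise; I would make that argument precise. A secondary point is confirming that the paper's standing continuity assumptions on the hazard rate (and expected remaining size) suffice for the $b \downarrow a$ limit in the upper bound on $\rank{\gittins{}}{}$; this should be stated carefully but not require new hypotheses.
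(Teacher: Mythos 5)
Your proposal follows the same high-level strategy as the paper's proof: sandwich $\rank{\gittins{}}{a}$ between $\inf_{t \geq a} 1/h(t)$ and $1/h(a)$, invoke the $\QDHR$ sandwich on $1/h$, equate $\rank{\gittins{}}{\y[\mgittins{}]} = \rank{\gittins{}}{\z[\mgittins{}]}$ via \cref{lem:rank_yz}, and invert through the strictly increasing $m$. Both endpoints of your sandwich match the paper's (the paper phrases the upper bound as $\rank{\gittins{}}{a} \leq \eta(a,a) = 1/h(a)$ rather than via L'H\^opital, but this is cosmetic, since $\eta(a,a)$ is defined precisely as that limit; the lower bound is the same weighted-average argument). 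So the mathematics is correct and essentially the paper's.

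The one genuine difference is that you insert a middle stage — transferring the sandwich from $\rank{\gittins{}}{}$ to $\rank{\mgittins{}}{}$ via the split-at-$x_0$ maximum — and this stage is both unnecessary and the source of the obstacle you flag at the end. Your final paragraph already contains the complete argument without it: chain
\begin{align*}
m(\z[\mgittins{}]) \leq \rank{\gittins{}}{\z[\mgittins{}]} = \rank{\gittins{}}{\y[\mgittins{}]} \leq 1/h(\y[\mgittins{}]) \leq m\bigl(C_0 (\y[\mgittins{}])^\gamma\bigr),
\end{align*}
deduce $\z[\mgittins{}] \leq C_0 (\y[\mgittins{}])^\gamma$ by strict monotonicity, and then $\y[\mgittins{}] \leq x \leq \z[\mgittins{}]$ (\cref{eq:yxz}) yields both $\z[\mgittins{}] = O(x^\gamma)$ and $\y[\mgittins{}] = \Omega(x^{1/\gamma})$ simultaneously. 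This is exactly what the paper does: it reduces to $\z[\mgittins{}] = O((\y[\mgittins{}])^\gamma)$ and never bounds $\rank{\mgittins{}}{x}$ at all. Your transfer step, by contrast, requires $m(C_0 a^\gamma)$ to eventually dominate the constant $C_1 = \max_{b \in [0,x_0]} \rank{\gittins{}}{b}$, which is not automatic: $\QDHR$ permits $m$ (and hence $1/h$ on $[x_0,\infty)$) to be bounded, while $1/h$ on $[0,x_0]$ is unconstrained, so $C_1$ could a priori exceed $\sup m$. Dropping the transfer step removes this concern and most of the ``$x$ large enough'' bookkeeping you anticipate, since $\z[\mgittins{}] \geq x$ puts the lower-bound application at $\z[\mgittins{}]$ safely past $x_0$ for $x \geq x_0$, and the resulting inequality $\y[\mgittins{}] \geq (x/C_0)^{1/\gamma}$ then forces $\y[\mgittins{}] \to \infty$ retroactively. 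The paper also does not spell out these boundary details, but its direct route at least never creates them.
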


These bounds are harder to prove than
their \mserpt{} counterparts from \cref{sub:rank_bounds_mserpt}.
The most important component is the following definition,
which helps us better understand the \mgittins{} rank function
and relate it to the simpler \mserpt{} rank function.

\begin{definition}
  \label{def:eta}
  The \emph{time per completion} over\ifshort{}{ an age} interval $(a,b]$ is\footnote{%
    Our time per completion function is the reciprocal of what
    \citet{m/g/1_gittins_aalto, mlps_gittins_aalto} call the
    \emph{efficiency function}.}
  \begin{align*}
    \eta(a, b)
    = \frac{\E{\min\{X, b\} - a \given X > a}}{\P{X < b \given X > a}}
    = \frac{\int_a^b \F{t} \d{t}}{\F{a} - \F{b}}.
  \end{align*}
  We extend this definition to the $b \to a$ and $b \to \infty$ limits:
  \begin{align*}
      \eta(a, a) &= \ifshort{h(a)^{-1}}{\frac{1}{h(a)}}, \ifshort{&}{\\}
    \eta(a, \infty) &= \E{X - a \given X > a}.
  \end{align*}
\end{definition}

We can write the rank functions of
SERPT, \mserpt{}, Gittins, and \mgittins{} in terms of $\eta$ as
\begin{align}[c]
  \label{eq:rank_eta}
  \rank{\serpt{}}{a} &= \eta(a, \infty), \ifshort{&}{\\}
  \rank{\mserpt{}}{a} &= \max_{b \in [0, a]} \eta(b, \infty), \\
  \rank{\gittins{}}{a} &= \min_{b \in [a, \infty]} \eta(a, b), \ifshort{&}{\\}
  \rank{\mgittins{}}{a} &= \max_{b \in [0, a]} \min_{c \in [b, \infty]} \eta(b, c).
\end{align}

Armed with \cref{def:eta, eq:rank_eta}, we are ready to prove
\cref{thm:yz_bound_mgittins_or, thm:yz_bound_mgittins_qdhr}.
The former proof relies on some technical lemmas
that we defer to \cref{sub:rank_bounds_eta}.

\begin{proof}[Proof of \cref{thm:yz_bound_mgittins_or}]
  Throughout this proof, $\y$ and $\z$ refer to $\y[\mgittins{}]$ and~$\z[\mgittins{}]$,
  respectively.
  By \cref{eq:yxz}, it suffices to show
  there exist $C_0, x_0 > 0$ such that \ifshort{$\z \leq C_0 \y$ }{}for all $x \geq x_0$\ifshort{.}{,
  \begin{align*}
    \z \leq C_0 \y.
  \end{align*}}
  We will set $C_0 \geq 2$,
  which covers the $\z \leq 2\y$ case.
  The rest of the proof is thus devoted to the $\z > 2\y$ case.
  Our approach is to show there exist $C_1, C_2$ such that for all $x \geq x_0$,
  \begin{align}
    \label{eq:rank_gittins_goal}
    C_1 \y \geq \rank{\gittins{}}{\y} \geq C_2 \z.
  \end{align}

  We begin with the upper bound on $\rank{\gittins{}}{\y}$.
  By \cref{lem:rank_yz},
  we have $\rank{\gittins{}}{\y} = \rank{\mgittins{}}{\y}$
  for all sizes~$x$,
  and by \cref{eq:rank_eta},
  we have $\rank{\mgittins{}}{a} \leq \rank{\mserpt{}}{a}$
  for all ages~$a$.
  Combining these observations with \cref{thm:yz_bound_mserpt_or}
  implies $\rank{\gittins{}}{\y} = O(\y)$
  and thereby implies the desired upper bound from \cref{eq:rank_gittins_goal}.\footnote{%
    This would be more subtle if $\lim_{x \to \infty} \y$ were finite,
    but \cref{thm:yz_bound_mserpt_or} and a result of
    \citet[Proposition~9]{mlps_gittins_aalto}
    imply $\lim_{x \to \infty} \y = \infty$.}

  We now turn to the lower bound on $\rank{\gittins{}}{\y}$.
  This requires \cref{lem:eta_bound_easy, lem:eta_bound_hard},
  which are facts about $\eta$ that we prove in \cref{sub:rank_bounds_eta}.
  Combining \cref{lem:eta_bound_easy}
  with \cref{eq:rank_eta} and the fact that we are in the $\z > 2\y$ case gives us
  \begin{align*}
      \rank{\gittins{}}{\y} = \eta(\y, \z) \geq \eta\ifshort{(\z/2, \z)}{\gp*{\frac{\z}{2}, \z}}.
  \end{align*}
  By \cref{lem:eta_bound_hard},
  there exist $C_2, x_2$ such that for all~$x$ with $\z/2 > x_2$,
  \begin{align*}
      \eta\ifshort{(\z/2, \z)}{\gp*{\frac{\z}{2}, \z}} \geq C_2 \z,
  \end{align*}
  implying the desired lower bound from \cref{eq:rank_gittins_goal}.
\end{proof}

\begin{proof}[Proof of \cref{thm:yz_bound_mgittins_qdhr}]
  Throughout this proof,
  $\y$ and $\z$ refer to $\y[\mgittins{}]$ and~$\z[\mgittins{}]$, respectively.
  By \cref{eq:yxz}, it suffices to show $\z = O(\y^\gamma)$.
  Because $X \in \QDHR$ with exponent~$\gamma$,
  there exists a strictly increasing function $m : \R_+ \to \R_+$
  such that for all sizes~$x$,
  \begin{align*}
      m(x) \leq \ifshort{h(x)^{-1}}{\frac{1}{h(x)}} \leq m(O(x^\gamma)).
  \end{align*}
  We have $\rank{\gittins{}}{\y} \leq 1/h(\y)$ by~\cref{eq:rank_eta},
  and \cref{lem:rank_yz} implies
  $\rank{\gittins{}}{\z} = \rank{\gittins{}}{\y}$, so
  \ifshort{$\rank{\gittins{}}{\z} \leq m(O(\y^\gamma))$.

  }{%
  \begin{align*}
    \rank{\gittins{}}{\z} \leq m(O(\y^\gamma)).
  \end{align*}}
  It remains only to lower bound $\rank{\gittins{}}{\z}$.
  We do so using the observation that for any age~$a$,
  {\ifshort{}{\useamsalign\allowdisplaybreaks}\begin{align*}
    \rank{\gittins{}}{a}
    &= \min_{b \in [a, \infty]} \eta(a, b) \ifshort{}{\\
    &}= \gp[\Bigg]{
        \max_{b \in [a, \infty]} \frac{\int_a^b \F{t} h(t) \d{t}}{\int_a^b \F{t} \d{t}}
      }^{-1} \\
    &\geq \gp[\big]{\sup_{b > a} h(b)}^{-1} \ifshort{}{\\
      &}= \inf_{b > a} \ifshort{h(b)^{-1}}{\frac{1}{h(b)}} \ifshort{}{\\
    &}\geq m(a),
  \end{align*}}%
  where the first inequality follows from viewing the ratio of integrals
  as a weighted average.
  Plugging in $a = \z$ implies $m(\z) \leq m(O(\y^{\gamma}))$,
  so the result follows because $m$ is strictly increasing.
\end{proof}

\subsection{Time per Completion Lemmas}
\label{sub:rank_bounds_eta}

\begin{lemma}
  \label{lem:eta_bound_easy}
  For all sizes~$x$ and ages~$a$, if $\y < a < \z\esub$, then
  \begin{align*}
    \rank{\gittins{}}{\y} = \eta(\y, \z) \ge \eta(a, \z).
  \end{align*}
\end{lemma}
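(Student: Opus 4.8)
The plan is to isolate the real content of the lemma, namely the identity $\rank{\gittins{}}{\y} = \eta(\y, \z)$, and to derive the inequality $\eta(\y, \z) \ge \eta(a, \z)$ from it using only the elementary ``mediant'' behaviour of $\eta$. Concretely, for ages $a_1 < a_2 < a_3$, writing $\eta$ as a ratio of integrals (as in \cref{def:eta}) and splitting both the numerator $\int_{a_1}^{a_3}\F{t}\d{t}$ and the denominator $\F{a_1}-\F{a_3}$ at $a_2$ shows that $\eta(a_1, a_3)$ is the weighted average
\[
  \eta(a_1, a_3) = \theta\,\eta(a_1, a_2) + (1 - \theta)\,\eta(a_2, a_3),
  \qquad \theta = \frac{\F{a_1} - \F{a_2}}{\F{a_1} - \F{a_3}} \in [0, 1],
\]
so in particular $\eta(a_1, a_3)$ lies between $\eta(a_1, a_2)$ and $\eta(a_2, a_3)$. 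Granting the identity, I would apply this with $(a_1, a_2, a_3) = (\y, a, \z)$: by \cref{eq:rank_eta} we have $\eta(\y, a) \ge \min_{b \in [\y, \infty]}\eta(\y, b) = \rank{\gittins{}}{\y} = \eta(\y, \z)$, and feeding $\eta(\y, a) \ge \eta(\y, \z)$ into the weighted-average identity forces $\eta(a, \z) \le \eta(\y, \z)$, which is the second assertion of the lemma.

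It remains to prove $\rank{\gittins{}}{\y} = \eta(\y, \z)$. By \cref{eq:rank_eta}, $\rank{\gittins{}}{\y} = \min_{b \in [\y, \infty]}\eta(\y, b) \le \eta(\y, \z)$, so the work is to show that $\z$ actually minimizes $\eta(\y, \cdot)$. Write $L = \rank{\gittins{}}{\y}$. Two structural facts drive the argument: first, \cref{lem:rank_yz} (in its Gittins/\mgittins{} form) gives $\rank{\gittins{}}{\z} = L$; second, since $\rank{\mgittins{}}{}$ is the running maximum of $\rank{\gittins{}}{}$ and equals $L$ throughout $[\y, \z]$, we have $\rank{\gittins{}}{b} \le L$ for all $b \in [\y, \z]$. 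The other key tool is the integral form of the Gittins index: from \cref{def:eta}, $\eta(a, b) \ge v$ iff $\int_a^b \F{t}\bigl(1 - v\,h(t)\bigr)\d{t} \ge 0$, whence $\rank{\gittins{}}{a}$ is the largest $v$ for which this holds for every $b \ge a$. Setting $I(b) = \int_\y^b \F{t}\bigl(1 - L\,h(t)\bigr)\d{t}$, the relation $\rank{\gittins{}}{\y} = L$ says $I \ge 0$ on $[\y, \infty]$ with $\inf I = 0$ attained (by continuity); $\rank{\gittins{}}{\z} = L$ says $I(b) \ge I(\z)$ for all $b \ge \z$; and $\rank{\gittins{}}{b} \le L$ for $b \in [\y, \z)$ yields, for each such $b$, some $c \ge b$ with $I(c) \le I(b)$. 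Chasing the largest minimizer $b_0$ of $I$ over $[\y, \infty]$ and using these descent conditions, I would show $b_0 \ge \z$; then $I(\z) = \inf I = 0$ by the monotonicity of $I$ past $\z$, i.e. $\eta(\y, \z) = L$.

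The main obstacle is the case where $\rank{\gittins{}}{}$ touches its running-maximum level $L$ at a point strictly inside $(\y, \z)$: at such a point the descent condition degenerates (the infimum $L$ is only approached, not witnessed by a strict decrease of $I$), so one cannot immediately rule out $b_0 < \z$. Handling this is where the piecewise-monotonicity hypotheses on the hazard rate and expected remaining size come in: $\rank{\gittins{}}{}$ has only finitely many monotone pieces on $[\y, \z]$, and on any piece where it is strictly increasing one has $\rank{\gittins{}}{a} = 1/h(a)$ with $a$ its own Gittins-optimal stopping age --- a fact that itself follows from the mediant identity together with the infimum definition of $\rank{\gittins{}}{}$. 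This reduces the analysis to finitely many sub-cases; alternatively, one may invoke the structural description of the Gittins index and its optimal stopping ages due to \citet{mlps_gittins_aalto}. Finally, I would dispatch the degenerate cases separately: $\y = \z$ makes the hypothesis $\y < a < \z$ vacuous; $\z = \infty$, where $\eta(\y, \z)$ is read as $\rank{\serpt{}}{\y}$ via \cref{def:eta}, is covered by the same argument in the limit $b \to \infty$; and ages at which $\F{}$ is locally constant (so some $\eta$ values are defined by the limits in \cref{def:eta}) require only routine continuity bookkeeping.
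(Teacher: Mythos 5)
Your high-level structure is the same as the paper's: reduce to the identity $\rank{\gittins{}}{\y} = \eta(\y, \z)$, then derive $\eta(\y, \z) \ge \eta(a, \z)$ from it using the fact that $\eta(a_1, a_3)$ is a weighted average of $\eta(a_1, a_2)$ and $\eta(a_2, a_3)$. Your ``mediant'' computation is correct and is exactly what the paper invokes when it cites Eq.~(D.3) of \citet{m-serpt_scully} --- for $d < e < f$, $\eta(d, e) \ge \eta(d, f) \Leftrightarrow \eta(d, f) \ge \eta(e, f)$ --- so that half of your argument is a self-contained replacement for the paper's citation, and it works.

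Where you diverge is the identity itself. The paper simply cites \citet[Lemma~2.2]{book_gittins} and moves on; you attempt a from-scratch proof via the function $I(b) = \int_\y^b \F{t}\bigl(1 - L\, h(t)\bigr) \d{t}$ and a minimizer-chasing argument. Your setup of $I$ is correct (the three structural facts translate into $I \ge 0$ with $\min I = 0$, $I(b) \ge I(\z)$ for $b \ge \z$, and the descent condition on $[\y, \z)$), but as you yourself flag, the descent condition is too weak when $\rank{\gittins{}}{b} = L$ exactly at some $b \in (\y, \z)$: it then only yields some $c \ge b$ with $I(c) \le I(b)$, which is satisfied trivially by $c = b$ and does not force the minimizer past $\z$. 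This is a genuine gap, not merely a technicality --- the valley can contain interior points where the Gittins rank touches $L$ from below, and your argument as stated does not rule out the largest minimizer of $I$ sitting at such a point. The fix you gesture at (piecewise-monotonicity, or the claim that strict increase forces $\rank{\gittins{}}{a} = 1/h(a)$) would itself need careful justification, essentially reproving a chunk of Gittins-index structure theory. Your proposed fallback --- invoke \citet{mlps_gittins_aalto} --- is the pragmatic move and is morally equivalent to what the paper does by citing \citet{book_gittins}. So: you identified the right decomposition and the right difficulty, but the self-contained leg of your argument is incomplete; either finish the free-boundary analysis rigorously or, as the paper does, treat the identity as a known result.
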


\begin{proof}
  A property of the Gittins index
  \citep[Lemma~2.2]{book_gittins}
  implies\ifshort{ $\rank{\gittins{}}{\y} = \eta(\y, \z)$.}{}\footnote{%
    The proof given by \citet{book_gittins} is in a discrete setting,
    but essentially the same proof carries over to our continuous setting.%
  }\ifshort{}{\begin{align*}
    \rank{\gittins{}}{\y} = \eta(\y, \z).
  \end{align*}}
  In particular, \ifshort{%
    $\eta(\y, a) \ge \eta(\y, \z)$.
    From this, the desired result follows from the basic properties of the
    $\eta$ function \citep[Eq.~(D.3)]{m-serpt_scully}.%
  }{%
    for any $a \ne \z$,
        \begin{align}
              \label{eq:eta_subopt}
              \eta(\y, a) \ge \eta(\y, \z).
        \end{align}
        A basic property of the $\eta$ function \citep[Eq.~(D.3)]{m-serpt_scully}
        is that for any $d < e < f$,
        \begin{align*}
              \eta(d, e) \ge \eta(d, f) \,\Leftrightarrow\, \eta(d, f) \ge \eta(e, f).
        \end{align*}
        Plugging in $d=\y$, $e=a$, and $f=\z$
        and applying \cref{eq:eta_subopt}
        yields $\eta(\y, \z) \ge \eta(a, \z)$, as desired.%
    }
\end{proof}

\begin{lemma}
  \label{lem:eta_bound_hard}
  If $X \in \OR{1}{\infty}$,
  then there exist constants $C_0, x_0 > 0$
  such that for all $b > a > x_0\esub$,
  \begin{align*}
    \eta(a, b) \ge C_0 a \gp*{1 - \ifshort{\slashfrac}{\frac}{a}{b}}.
  \end{align*}
\end{lemma}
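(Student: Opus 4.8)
The plan is to bound the numerator and denominator of $\eta(a,b) = \slashfrac{\int_a^b \F{t}\,\d{t}}{(\F{a} - \F{b})}$ separately, using $O$-regular variation for the numerator and the trivial inequality $\F{a} - \F{b} \le \F{a}$ for the denominator. First I would invoke \cref{def:or}: since $X \in \OR{1}{\infty}$, there exist exponents $1 < \alpha \le \beta < \infty$ and constants $C_1, x_0 > 0$ with $\F{t}/\F{a} \ge (1/C_1)(t/a)^{-\beta}$ whenever $t \ge a \ge x_0$. Integrating this pointwise bound over $[a,b]$ and substituting $s = t/a$ gives
\[
  \int_a^b \F{t}\,\d{t}
  \ge \frac{\F{a}}{C_1}\int_a^b \gp*{\frac{t}{a}}^{-\beta}\d{t}
  = \frac{\F{a}\,a}{C_1(\beta - 1)}\gp*{1 - \gp*{\frac{a}{b}}^{\beta - 1}},
\]
where convergence of the integral uses $\beta > 1$. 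Dividing by $\F{a} - \F{b} \le \F{a}$ yields $\eta(a,b) \ge \frac{a}{C_1(\beta-1)}\bigl(1 - (a/b)^{\beta-1}\bigr)$ for all $b > a > x_0$.

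The only remaining step is to replace $1 - (a/b)^{\beta-1}$ by a constant multiple of $1 - a/b$, which I would handle with an elementary inequality: writing $u = a/b \in (0,1)$ and $p = \beta - 1 > 0$, I claim $1 - u^p \ge \min\{1,p\}\,(1-u)$. For $p \ge 1$ this is immediate since $u^p \le u$ on $(0,1)$. For $0 < p < 1$ it follows by noting that $g(u) = 1 - u^p - p(1-u)$ satisfies $g(1) = 0$ and $g'(u) = p(1 - u^{p-1}) < 0$ on $(0,1)$ (because $u^{p-1} > 1$ there), so $g \ge 0$ on $(0,1)$. Combining everything, the lemma holds with $C_0 = \slashfrac{\min\{1,\beta-1\}}{\bigl(C_1(\beta-1)\bigr)}$ and the $x_0$ supplied by \cref{def:or}.

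I do not expect a serious obstacle here; the one point that genuinely needs care is the split into the regimes $p \ge 1$ and $0 < p < 1$ in the elementary inequality, since for small $p$ the naive bound $u^p \le u$ fails and one necessarily loses a factor of $p = \beta - 1$. Everything else is a direct computation from the $O$-regular variation estimate. As a side remark, the same argument extends to $b = \infty$ by monotone convergence, recovering $\rank{\serpt{}}{a} = \eta(a,\infty) = \Omega(a)$ in agreement with \cref{thm:yz_bound_mserpt_or}, although the stated lemma only needs finite~$b$, which is all that is used in the proof of \cref{thm:yz_bound_mgittins_or}.
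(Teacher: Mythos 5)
Your proof is correct and follows essentially the same route as the paper's: the same bound $\eta(a,b) \geq \int_a^b \F{t}/\F{a}\,\d{t}$, the same application of $O$-regular variation, and the same case split at $\beta = 2$, with your elementary inequality $1 - u^p \geq \min\{1,p\}(1-u)$ being just a unified restatement of the two cases the paper handles separately (and for $0 < p < 1$ it is exactly the paper's tangent-line inequality $u^p \leq 1 + p(u-1)$, derived via monotonicity rather than stated directly).
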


\begin{proof}
  We can write $\eta(a, b)$ as
  \begin{align*}
    \eta(a, b)
    = \frac{
        \int_a^b \F{t}/\F{a} \d{t}
      }{
        1 - \F{b}/\F{a}
      }
    \geq \int_a^b \frac{\F{t}}{\F{a}} \d{t}.
  \end{align*}
  Because $X \in \OR{1}{\infty}$,
  there exist $\beta > 1$ and $C_1, x_1 > 0$
  such that \ifshort{\(}{for all $t > a > x_1$,
  \begin{align*}}
    \ifshort{\slashfrac}{\frac}{\F{t}}{\F{a}}
    \geq C_1\gp*{\ifshort{\slashfrac}{\frac}{t}{a}}^{-\beta}\ifshort{}{.}
  \ifshort{\) for all $t > a > x_1$.}{\end{align*}}
  For all $b > a > x_1$,\ifshort{}{ we have}
  \begin{align*}
    \eta(a, b)
    \geq C_1\int_a^b \gp*{\frac{t}{a}}^{-\beta} \d{t}
    = \frac{C_1 a}{\beta - 1} \gp[\Big]{1-\gp[\Big]{\frac{b}{a}}^{-(\beta-1)}}.
  \end{align*}
  We now consider two cases:
  $\beta \geq 2$ or $1 < \beta < 2$.
  \ifshort{%
    If $\beta \geq 2$, then $(b/a)^{-(\beta - 1)} \leq a/b$.
    If $1 < \beta < 2$,
    then $(b/a)^{-(\beta - 1)} \le 1 + (\beta - 1)(a/b - 1)$.
    In either case, the lemma holds.%
  }{%
    If $\beta \geq 2$, then $(b/a)^{-(\beta - 1)} \leq a/b$ and therefore
    \begin{align}
          \label{eq:eta_or}
          \eta(a, b)
          \geq \frac{C_1a}{\beta-1}\gp[\Big]{1-\frac{a}{b}},
    \end{align}
    so setting $C_0 = C_1/(\beta - 1)$ and $x_0 = x_1$ suffices.
    If $1 < \beta < 2$, we use the fact that for all $u > 0$,
    \begin{align*}
          u^{\beta - 1} \leq 1 + (\beta - 1)(u - 1).
    \end{align*}
    Substituting $u = a/b$ and combining this with \cref{eq:eta_or} yields
    \begin{align*}
          \eta(a,b)
          \geq C_1 a\gp[\Big]{1 - \frac{a}{b}},
    \end{align*}
    so setting $C_0 = C_1$ and $x_0 = x_1$ suffices.%
  }%
\end{proof}

\section{Heavy-Traffic Scaling of \mg{1} Waiting and Residence Times}
\label{sec:heavy}

In this section we characterize the heavy-traffic scaling
of mean waiting, residence, and inflated residence times,
which are the \mg{1} quantities that appear \cref{thm:mgk_response}.
Because \mserpt{} is a simpler policy than \mgittins{},
our approach is to first study \mserpt{}'s heavy-traffic scaling
(\cref{sub:heavy_infinite_variance, sub:heavy_finite_variance})
then show that the results extend to \mgittins{} (\cref{sub:heavy_mgittins}).

\subsection{Key Parts of Waiting and Residence Time}

Before starting the heavy-traffic analyses of \mgittins{} and \mserpt{},
we introduce some new notation.
Let
\begin{align*}
  \H{x} = \ifshort{\slashfrac}{\frac}{\F{x}}{\coload{x}}.
\end{align*}

\begin{definition}
  \useamsalign\allowdisplaybreaks
  The \emph{key \mg{1} response time quantities}, or simply ``key quantities'',
  of a monotonic SOAP policy \generic{} are the following:
  \begin{align*}
    \Qa[\generic{}]
    &= \int_0^\infty
      \gp[\big]{\H{\y[\generic{}]} + \H{\z[\generic{}]}}
      \frac{\lambda \excess{\z[\generic{}]} \F{x}}{\coload{x}^2} \d{x}, \\
    \Qb[\generic{}]
    &= \int_0^\infty \lambda x \H{\y[\generic{}]}^2 \cdot \frac{\F{x}}{\F{\y[\generic{}]}} \d{x}, \\
    \Rb[\generic{}]
      &=\ifshort{ \Sb[\generic{}] =}{}
      \int_0^\infty \lambda \z[\generic{}] \H{\y[\generic{}]}\H{\z[\generic{}]} \cdot \frac{\F{x}}{\F{\y[\generic{}]}} \d{x}, \\
    \Rc[\generic{}]
      &= \int_0^\infty \H{\y[\generic{}]} \cdot \frac{\F{x}}{\F{\y[\generic{}]}} \d{x}, \ifshort{}{\\}
    \ifshort{\qquad}{\Sb[\generic{}]
            &= \Rb[\generic{}], \\}
    \Sc[\generic{}]
    \ifshort{}{&}= \int_0^\infty \H{\y[\generic{}]} \d{x}.
  \end{align*}\ifshort{}{
  When the policy in question is clear,
  we drop the superscript~\generic{}.}
\end{definition}

In \cref{thm:waiting_monotonic, thm:residence_monotonic, thm:sesidence_monotonic}
(\cref{app:new_formulas})
we show that for any monotonic SOAP policy~\generic{},
\ifshort{%
  $\E{\waiting{\generic{}}{}} = \Qa[\generic{}] + \Qb[\generic{}]$,
  $\E{\residence{\generic{}}{}} = \Rb[\generic{}] + \Rc[\generic{}]$,
  and $\E{\sesidence{\generic{}}{}} = \Sb[\generic{}] + \Sc[\generic{}]$.
}{%
\begin{align*}
  \E{\waiting{\generic{}}{}} &= \Qa[\generic{}] + \Qb[\generic{}], \\
  \E{\residence{\generic{}}{}} &= \Rb[\generic{}] + \Rc[\generic{}], \\
  \E{\sesidence{\generic{}}{}} &= \Sb[\generic{}] + \Sc[\generic{}].
\end{align*}
}%
Bounding mean waiting, residence, and inflated residence times
thus amounts to bounding the key quantities.

For the most of the rest of this section
we focus on the case where \generic{} is \mserpt{},
deferring the \mgittins{} case to \cref{sub:heavy_mgittins}.
Until then, $\y$, $\z$, and the key quantities
are understood to have an implicit superscript \mserpt{}.

The most important step of bounding the key quantities is
bounding $\H{\y}$ and $\H{\z}$.
As a first step, we bound $\H{x}$.
Let
\begin{align}
  \label{eq:G}
  \G{x} = \frac{1}{\E{X}} \int_x^\infty \F{t} \d{t}
\end{align}
be the tail of the excess of~$X$.
We can write $\coload{x}$ as
\begin{align}
  \label{eq:coload_G}
  \coload{x} = (1 - \rho) + \rho\G{x}.
\end{align}
This means that for all $\epsilon \in [0, 1]$, we have
\begin{align}
  \label{eq:Hx_bound}
  \H{x}
  \leq \ifshort{}{\frac}{\F{x}}{\max\{1 - \rho, \rho\G{x}\}}
  \leq \ifshort{}{\frac}{\F{x}}{(1 - \rho)^{\ifshort{-}{}\epsilon}(\rho\G{x})^{\ifshort{-\gp}{}{1 - \epsilon}}}
  = \ifshort{}{\frac}{\F{x}^\epsilon\H[1]{x}^{1 - \epsilon}}{(1 - \rho)^{\ifshort{-}{}\epsilon}\rho^{\ifshort{-\gp}{}{1 - \epsilon}}},
\end{align}
where $\H[1]{x} = \F{x}/\G{x} = \lim_{\rho \to 1} \H{x}$.
\ifshort{}{%
This bound is useful because it separates $\H{x}$'s dependence on $x$ and~$\rho$:
the numerator depends only on~$x$,
and the denominator depends only on~$\rho$.}
We will typically choose $\epsilon$ to be either $0$ or arbitrarily small.

Having bounded $\H{x}$ in~\cref{eq:Hx_bound},
we now turn to bounding $\H{\y}$ and $\H{\z}$.
Recalling the definition of $\rank{\serpt{}}{}$ (\cref{def:serpt}),
\begin{align*}
  \H[1]{x} = \ifshort{\slashfrac}{\frac}{\F{x}}{\G{x}} = \ifshort{\slashfrac}{\frac}{\E{X}}{\rank{\serpt{}}{x}},
\end{align*}
so \cref{lem:rank_yz} and the monotonicity of $\rank{\mserpt{}}{}$ imply
\begin{align}
  \label{eq:H1yz_rank}
  \H[1]{\y} = \H[1]{\z} = \ifshort{\slashfrac}{\frac}{\E{X}}{\rank{\mserpt{}}{x}} = O(1).
\end{align}
Combining this with \cref{eq:Hx_bound} yields
bounds on $\H{\y}$ and $\H{\z}$,
though the bounds still have $\F{\y}$ and $\F{\z}$ terms.
To better understand $\H{\y}$ and $\H{\z}$,
we need to use our results from \cref{sec:rank_bounds}
in arguments that depend on what class of distributions contains~$X$.
We do this over the course of \cref{sub:heavy_infinite_variance, sub:heavy_finite_variance}.

\subsection{Infinite-Variance Job Size Distributions}
\label{sub:heavy_infinite_variance}

In this section we study the heavy-traffic scaling of \mserpt{}\ifshort{}{'s
waiting, residence, and inflated residence times}
for infinite-variance job size distributions, specifically those in $\OR{1}{2}$.
\ifshort{M}{With that said, m}any of the intermediate results we prove
will also be useful for the finite-variance $\OR{2}{\infty}$ case
(\cref{sub:heavy_finite_variance}).

\ifshort{}{Suppose that $X \in \OR{1}{\infty}$.
}Combining \cref{thm:yz_bound_mserpt_or, eq:H1yz_rank} gives us
\begin{align}[c]
  \label{eq:H1yz_bounds_or}
  \y, \z &= \Theta(x), \ifshort{&}{\\}
  \H[1]{\y}, \H[1]{\z} &= \Theta\ifshort{(x^{-1})}{\gp[\Big]{\frac{1}{x}}}.
\end{align}
This alone is enough to bound all of the key quantities except~$\Qa$.

\begin{lemma}
  \label{lem:small_terms_or}
  Under \mserpt{}, if $X \in \OR{1}{\infty}$, then
  \begin{align*}
    \Qb, \Rb, \Rc, \Sb, \Sc = O\ifshort{(-\log(1 - \rho))}{\gp*{\log\frac{1}{1 - \rho}}}.
  \end{align*}
\end{lemma}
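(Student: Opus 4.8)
The plan is to bound each of the five key quantities $\Qb$, $\Rb$, $\Rc$, $\Sb$, $\Sc$ separately, using \cref{eq:H1yz_bounds_or} together with \cref{eq:Hx_bound} to control the $\H{\y}$ and $\H{\z}$ factors, and then integrating. Recall that $\Sb = \Rb$, so there are really only four distinct integrals. The key recurring move is that, since $X \in \OR{1}{\infty}$, the tail $\F{}$ satisfies a power-law upper bound $\F{t}/\F{s} \leq C_0 (t/s)^{-\alpha}$ with $\alpha > 1$; combined with $\y, \z = \Theta(x)$ this gives $\F{\y}, \F{\z} = \Theta(\F{x})$, so the $\F{x}/\F{\y[\generic{}]}$ type ratios appearing in the definitions are all $\Theta(1)$.

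First I would handle $\Rc = \int_0^\infty \H{\y} \cdot \F{x}/\F{\y} \d{x}$. Using $\F{x}/\F{\y} = \Theta(1)$ and the $\epsilon = 0$ case of \cref{eq:Hx_bound}, namely $\H{\y} \leq \H[1]{\y}/\rho = \Theta(1/x)/\rho$, the integrand near infinity is $O(1/x)$, while near $0$ the integrand is bounded (since $\H{}$ is bounded and $\y \to 0$ as $x \to 0$). The divergence is logarithmic, but cut off by the tail: more precisely, split the integral at $x = \Ginv{1-\rho}$ (the point where $\coload{x}$ transitions from being $\Theta(1-\rho)$ to $\Theta(\G{x})$). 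On the lower part, $\H{x} \leq \F{x}/(1-\rho)$, and $\int_0^{\Ginv{1-\rho}} \F{x}/(1-\rho) \,\d{x}$ is $O(1)$ times something logarithmic; on the upper part, $\H{x} \leq \H[1]{x} = \Theta(1/x)$, and $\int_{\Ginv{1-\rho}}^\infty \Theta(1/x)\,\d x$ contributes $O(\log(1/(1-\rho)))$ after bounding the upper limit of integration using $\F{} \in \OR{1}{\infty}$. This gives $\Rc = O(-\log(1-\rho))$. The bound for $\Sc = \int_0^\infty \H{\y} \d{x}$ is identical since the $\F{x}/\F{\y}$ factor was already $\Theta(1)$. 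For $\Rb = \Sb = \int_0^\infty \lambda \z \H{\y}\H{\z} \cdot \F{x}/\F{\y} \d x$, I would use $\z = \Theta(x)$ and $\H{\y}, \H{\z} \le \H[1]{\y}/\rho, \H[1]{\z}/\rho = \Theta(1/x)$ together with $\H{\z} \le \F{\z}/(1-\rho)$ on the lower range; then $\lambda \z \H{\y}\H{\z} = \Theta(x) \cdot \Theta(1/x) \cdot \H{\z} = \Theta(\H{\z})$, reducing this to the same logarithmic integral as before. For $\Qb = \int_0^\infty \lambda x \H{\y}^2 \cdot \F{x}/\F{\y} \d x$, we have $\lambda x \H{\y}^2 = \Theta(x) \cdot \H{\y} \cdot \Theta(1/x) = \Theta(\H{\y})$, again the same integral.

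So the plan reduces to one master estimate: $\int_0^\infty \H{\y} \,\d x = O(-\log(1-\rho))$ when $X \in \OR{1}{\infty}$, plus the observation that the other four integrands are all $\Theta(\H{\y})$ or $\Theta(\H{\z})$ under the $\OR{1}{\infty}$ hypothesis. The main obstacle — and the step requiring the most care — is the upper tail of this master integral: I need the upper limit of integration to be effectively cut off so that $\int \Theta(1/x)\,\d x$ over $[\Ginv{1-\rho}, \cdot]$ is genuinely $O(\log(1/(1-\rho)))$ rather than diverging. This uses $\F{} \in \OR{1}{\infty}$ (the $\alpha > 1$ exponent) to show $\Ginv{}$ and the relevant tail masses are polynomially related to $1-\rho$, so that $\log(1/\Ginv{1-\rho}) = O(\log(1/(1-\rho)))$; this is where the infinite-variance restriction is irrelevant (the same bound will be reused in \cref{sub:heavy_finite_variance}, which is why the lemma is stated for $\OR{1}{\infty}$ rather than $\OR{1}{2}$). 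A secondary technical point is handling $x$ near $0$, where $\y \to 0$; there $\H{\y}$ stays bounded and the interval has finite length determined by where $\rank{\mserpt{}}{}$ first exceeds its minimum, so this region contributes only $O(1)$. Once the master integral is in hand, assembling the five bounds is routine algebra using \cref{eq:H1yz_bounds_or,eq:Hx_bound}.
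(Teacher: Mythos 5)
Your high-level plan --- reduce each integrand to $O(\H{x})$ and then bound $\int_0^\infty \H{x} \d{x}$ --- is exactly the paper's, and your per-integrand reductions are sound. The gap is in the ``master integral'' argument: you have the two bounds on the wrong sides of the split at $\Ginv{1-\rho}$, and as stated neither side gives the claimed estimate. For $x \le \Ginv{1-\rho}$ one has $\G{x} \ge 1-\rho$, so $\coload{x} = (1-\rho) + \rho\G{x} = \Theta(\G{x})$, not $\Theta(1-\rho)$; and the bound $\H{x} \le \F{x}/(1-\rho)$ integrated over this region gives at most $\E{X}/(1-\rho) = \Theta(1/(1-\rho))$, which is polynomially larger than $\log\frac{1}{1-\rho}$. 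On $[\Ginv{1-\rho}, \infty)$, the bound $\H{x} \le \H[1]{x} = \Theta(1/x)$ yields a divergent integral for any unbounded $X \in \OR{1}{\infty}$ (e.g.\ Pareto), so there is no upper cutoff to exploit. Swapping the two bounds fixes it: $\H{x} \le \F{x}/(\rho\G{x})$ on the lower part integrates to $(\E{X}/\rho)\log\frac{1}{1-\rho}$, while $\H{x} \le \F{x}/(1-\rho)$ on the upper part integrates to $\E{X}\G{\Ginv{1-\rho}}/(1-\rho) = \E{X}$.

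In fact no split is needed, and this is how the paper proceeds: \cref{eq:coload_excess} gives $\coload{x} = 1 - \int_0^x \lambda\F{t} \d{t}$, so $\F{x}/\coload{x}$ is $-1/\lambda$ times the logarithmic derivative of $\coload{x}$, and $\int_0^\infty \H{x} \d{x} = \frac{\E{X}}{\rho}\log\frac{1}{1-\rho}$ exactly (\cref{eq:Hx_integral}). The $\OR{1}{\infty}$ hypothesis plays no role in that computation; it enters only through $\F{\y[\mserpt{}]} = \Theta(\F{x})$ and $\H[1]{\y[\mserpt{}]}, \H[1]{\z[\mserpt{}]} = O(1/x)$, which are what make the five integrands $O(\H{x})$.
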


\begin{proof}
  Our approach is to use the fact that, by \cref{eq:coload_excess},
  \begin{align}
    \label{eq:Hx_integral}
    \int_0^\infty \H{x} \d{x}
    = \int_0^\infty \frac{\F{x}}{\coload{x}} \d{x}
    = \frac{\E{X}}{\rho}\log\frac{1}{1 - \rho}.
  \end{align}
  Because $\Rb = \Sb$ and $\Rc \leq \Sc$,
  it suffices to show that the integrands of $\Qb$, $\Sb$, and $\Sc$
  are all $O(\H{x})$.

  We begin by showing that $\Sc$'s integrand is $O(\H{x})$.
  By \cref{eq:H1yz_bounds_or} and the fact that $X \in \OR{1}{\infty}$, we have
  \ifshort{$\F{\y} = \Theta(\F{x})$, so}{\begin{align*}
    \F{\y} = \F{\Theta(x)} = \Theta(\F{x}),
  \end{align*}
  which yields}
  \begin{align}
    \label{eq:Hy_Hx}
    \H{\y}
    = \ifshort{\slashfrac}{\frac}{\F{\y}}{\coload{\y}}
    \leq \ifshort{\slashfrac}{\frac}{\F{\y}}{\coload{x}}
    = \ifshort{\slashfrac}{\frac}{O(\F{x})}{\coload{x}}
    = O(\H{x}).
  \end{align}
  This implies the desired bound for $\Sc$ and~$\Rc$.

  We show $\Sb$'s integrand is $O(\H{x})$
  by applying \cref{eq:Hx_bound}\ifshort{}{ with $\epsilon = 0$},
  \cref{eq:H1yz_bounds_or}, and~\cref{eq:Hy_Hx}:
  \begin{align*}
    \lambda \z \H{\y}\H{\z}
    \leq \lambda \z \H{\y}\H[1]{\z}
    = O(\H{x}).
  \end{align*}
  This implies the desired bound for $\Sb$ and~$\Rb$.
  Similarly,
  \begin{align*}
    \lambda x \H{\y}^2 \cdot \frac{\F{x}}{\F{\y}}
    \leq \lambda x \H{\y}\H[1]{\y}
    = O(\H{x}),
  \end{align*}
  implying the bound for~$\Qb$.
\end{proof}

It remains only to characterize the heavy-traffic scaling of~$\Qa$.
Treating the $\OR{2}{\infty}$ case requires some additional care,
so we defer it to \cref{sub:heavy_finite_variance},
focusing on the $\OR{1}{2}$ case for now.
The first step is to bound~$\excess{x}$.

\begin{lemma}
  \label{lem:excess_bound_or}
  If $X \in \OR{1}{2}$, then
  \ifshort{\(}{\begin{align*}}
    \excess{x} = \Theta(x^2\F{x})\ifshort{}{.}
  \ifshort{\).}{\end{align*}}
\end{lemma}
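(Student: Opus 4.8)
The plan is to show $\excess{x} = \Theta(x^2\F{x})$ for $X \in \OR{1}{2}$ by bounding the integral $\excess{x} = \int_0^x \lambda t \F{t} \d{t}$ from both sides using the $O$-regular variation of $\F{}$. Recall from \cref{eq:coload_excess} that $\excess{x} = \int_0^x \lambda t \F{t} \d{t}$. The upper bound is the easier direction: since $X \in \OR{1}{2}$, there is an exponent $\alpha$ with $1 < \alpha < 2$ and constants $C_0, x_0 > 0$ such that $\F{t}/\F{x} \leq C_0 (t/x)^{-\alpha}$ for $t \geq x \geq x_0$; equivalently, reading the bound in the other direction, $\F{t} \leq C_0 (t/x_0)^{\alpha_-}\F{x_0}$-style control lets me bound $\F{t}$ on $[x_0, x]$. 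More directly, using the lower Matuszewska bound $\F{t} \leq C_0 (t/x)^{-\alpha}\F{x}$ is the wrong direction for the region $t \le x$, so I instead use the upper bound $\F{t}/\F{x} \le C_0(t/x)^{-\alpha}$ with roles swapped: for $t \le x$, $\F{t}/\F{x} \ge (1/C_0)(t/x)^{-\beta}$ gives a lower bound on $\F{t}$, and $\F{t}/\F{x} \le C_0 (t/x)^{-\alpha}$ with $t \le x$ gives... actually the cleanest route is to split the integral at $x_0$ and use, for $x_0 \le t \le x$, the inequality $\F{t} \le C_0 (x/t)^{\beta}\F{x} \cdot$ — hmm, let me state it carefully in the proof: for $t \le x$ both $x_0 \le t \le x$, O-regular variation gives $(1/C_0)(x/t)^{\alpha} \le \F{t}/\F{x} \le C_0 (x/t)^{\beta}$ where $\alpha_0 < \alpha \le \beta < \beta_0$, here with $\alpha_0 = 1$, $\beta_0 = 2$. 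So $\F{t} \le C_0 (x/t)^\beta \F{x}$ with $\beta < 2$.

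For the upper bound, I would write $\excess{x} = \lambda\int_0^{x_0} t\F{t}\d{t} + \lambda\int_{x_0}^x t\F{t}\d{t}$. The first term is a constant. For the second, $\int_{x_0}^x t \F{t}\d{t} \le C_0 \F{x} x^\beta \int_{x_0}^x t^{1-\beta}\d{t} = C_0\F{x}x^\beta \cdot \frac{x^{2-\beta} - x_0^{2-\beta}}{2 - \beta} \le \frac{C_0}{2-\beta}x^2\F{x}$, using $2 - \beta > 0$. Combined with the constant first term, which is $O(x^2\F{x})$ since $x^2\F{x} \to \infty$ (as $\F{x} = \Omega(x^{-\beta})$ with $\beta < 2$ by the lower O-regular bound, so $x^2\F{x} = \Omega(x^{2-\beta}) \to \infty$), we get $\excess{x} = O(x^2\F{x})$. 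For the lower bound, I restrict the integration to $[x/2, x]$: $\excess{x} \ge \lambda\int_{x/2}^x t\F{t}\d{t} \ge \lambda \cdot \frac{x}{2} \cdot \frac{x}{2} \cdot \inf_{t \in [x/2,x]}\F{t} \ge \lambda \frac{x^2}{4} \cdot \frac{1}{C_0} 2^{-\beta}\F{x}$ for $x/2 \ge x_0$, using the lower O-regular bound $\F{t}/\F{x} \ge (1/C_0)(x/t)^\alpha \ge (1/C_0)$ for $t \le x$, or more simply monotonicity of the tail gives $\F{t} \ge \F{x}$ for $t \le x$, so $\excess{x} \ge \lambda \frac{x^2}{4}\cdot\frac{1}{4}\F{x} \cdot$ wait — $\int_{x/2}^x t\F{t}\d{t} \ge \F{x}\int_{x/2}^x t\d{t} = \F{x}\cdot\frac{3x^2}{8}$. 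So $\excess{x} \ge \frac{3\lambda}{8}x^2\F{x}$ for $x \ge 2x_0$, giving $\excess{x} = \Omega(x^2\F{x})$.

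The main obstacle — really a mild one — is handling the small-$x$ regime cleanly: the $O$-regular variation bounds only hold for $x \geq x_0$, so I must confirm that the $\Theta(x^2\F{x})$ claim, which is an asymptotic statement as $x \to \infty$, is not disturbed by the bounded contribution from $[0, x_0]$. This requires noting $x^2\F{x} \to \infty$, which follows from the lower $O$-regular bound $\F{x} \ge (1/C_0)(x/x_0)^{-\beta}\F{x_0}$ with $\beta < 2$, so that the constant $\int_0^{x_0}t\F{t}\d{t}$ is negligible relative to $x^2\F{x}$. The only other subtlety is choosing whether to phrase the tail-ratio bounds with the pair $(\alpha,\beta)$ in the "$t \le x$" orientation; I will simply invoke \cref{def:or} with $y = x$, $x = t$ to get $(1/C_0)(x/t)^\alpha \le \F{t}/\F{x} \le C_0(x/t)^\beta$ for $x_0 \le t \le x$, which is exactly what both bounds need.
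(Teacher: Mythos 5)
Your proof is correct and follows essentially the same approach as the paper: write $\excess{x} = \int_0^x \lambda t \F{t} \d{t}$, bound $\F{t}/\F{x}$ via the $O$-regular-variation inequality from \cref{def:or}, and integrate. The only minor divergence is in the lower bound, where you restrict to $[x/2, x]$ and use monotonicity of $\F{}$ rather than the $\alpha$-exponent lower bound from \cref{def:or} — a small simplification — and you are somewhat more explicit than the paper about why the contribution from $[0, x_0]$ is negligible, which hinges on $x^2\F{x} \to \infty$ (a consequence of $\beta < 2$).
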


\begin{proof}
  By \cref{def:or}, there exists $\beta \in (1, 2)$ such that
  \begin{align*}
    \frac{\excess{x}}{\F{x}}
    = \int_0^x \frac{\lambda t\F{t}}{\F{x}} \d{t}
    \leq O(1) \int_0^x t\gp*{\frac{t}{x}}^{-\beta} \d{t}
    = O(x^2),
  \end{align*}
  and similarly for the lower bound.
\end{proof}

\ifshort{}{We now have bounds on every term in $\Qa$'s integrand,
allowing us to bound $\Qa$ and thereby mean response time.}

\begin{theorem}
  \label{thm:heavy_mserpt_or_iv}
  If $X \in \OR{1}{2}$, then in the $\rho \to 1$ limit,\ifshort{
    \begin{align*}
      \E{\waiting{\mserpt{1}}{}}, \E{\waiting{\mserpt{1}}{}}, \E{\response{\mserpt{1}}{}}
      = O(-\log(1 - \rho)).
    \end{align*}
  }{
  \begin{align*}
    \E{\waiting{\mserpt{1}}{}}
    &= O\gp*{\log\frac{1}{1 - \rho}}, \ifshort{&}{\\}
    \E{\residence{\mserpt{1}}{}}
    &= O\gp*{\log\frac{1}{1 - \rho}},
  \end{align*}
  and therefore
  \begin{align*}
    \E{\response{\mserpt{1}}{}} = O\gp*{\log\frac{1}{1 - \rho}}.
  \end{align*}}
\end{theorem}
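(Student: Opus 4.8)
The plan is to reduce the theorem to bounding a single key quantity, $\Qa$. By \cref{lem:small_terms_or} (which applies since $\OR{1}{2} \subseteq \OR{1}{\infty}$), the remaining key quantities of \mserpt{} satisfy $\Qb, \Rb, \Rc = O\gp*{\log\frac{1}{1-\rho}}$. Using the identities $\E{\waiting{\mserpt{1}}{}} = \Qa + \Qb$ and $\E{\residence{\mserpt{1}}{}} = \Rb + \Rc$ (\cref{thm:waiting_monotonic,thm:residence_monotonic}), together with $\E{\response{\mserpt{1}}{}} = \E{\waiting{\mserpt{1}}{}} + \E{\residence{\mserpt{1}}{}}$, all three bounds in the theorem follow once we show $\Qa = O\gp*{\log\frac{1}{1-\rho}}$.

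To bound $\Qa = \int_0^\infty \gp[\big]{\H{\y} + \H{\z}} \frac{\lambda \excess{\z} \F{x}}{\coload{x}^2}\,\d{x}$, I would show its integrand is $O(\H{x})$ pointwise, using three ingredients valid for $X \in \OR{1}{2}$. First, $\H{\y} = O(\H{x})$ by \cref{eq:Hy_Hx}, and the analogous bound $\H{\z} = O(\H{x})$ holds the same way: by \cref{thm:yz_bound_mserpt_or} we have $\z = \Theta(x)$, so $\F{\z} = \Theta(\F{x})$ and $\coload{\z} = \Theta(\coload{x})$ by $O$-regular variation. Second, \cref{lem:excess_bound_or} applied at age $\z$, combined with $\z = \Theta(x)$ and $\F{\z} = \Theta(\F{x})$, gives $\excess{\z} = \Theta(x^2 \F{x})$. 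Third, $x\F{x} = O(\coload{x})$ uniformly in $x$ and in $\rho$ near $1$: since $\coload{x} \geq \rho\G{x}$ by \cref{eq:coload_G}, it suffices that $\G{x} = \Theta(x\F{x})$, which holds for $X \in \OR{1}{2}$ because the tail has $O$-regular-variation index strictly between $-2$ and $-1$, making $\int_x^\infty \F{t}\,\d{t} = \Theta(x\F{x})$. Combining these, the integrand of $\Qa$ is $O(\H{x}) \cdot \frac{\lambda x^2 \F{x}\,\F{x}}{\coload{x}^2} = O(\H{x}) \cdot \gp[\big]{\frac{x\F{x}}{\coload{x}}}^2 = O(\H{x})$, the last step by the third ingredient. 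Hence $\Qa = O\gp[\big]{\int_0^\infty \H{x}\,\d{x}}$, and \cref{eq:Hx_integral} gives $\int_0^\infty \H{x}\,\d{x} = \frac{\E{X}}{\rho}\log\frac{1}{1-\rho} = O\gp*{\log\frac{1}{1-\rho}}$.

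The main obstacle is the third ingredient, $x\F{x} = O(\coload{x})$, resting on $\G{x} = \Theta(x\F{x})$. This is exactly where the restriction to $\OR{1}{2}$ rather than all of $\OR{1}{\infty}$ is essential: for lighter tails the tail integral $\int_x^\infty \F{t}\,\d{t}$ is genuinely smaller than $x\F{x}$, so this step fails and $\Qa$ must be handled differently, as in \cref{sub:heavy_finite_variance}. The remaining care is routine: checking that the $O$-regular-variation constants are uniform in $x$, and that the $\lambda$-dependent constant in $\excess$ stays bounded, which it does since $\lambda \to 1/\E{X}$ as $\rho \to 1$.
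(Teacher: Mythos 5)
Your proof is correct and follows essentially the same route as the paper's: reduce to bounding $\Qa$ via \cref{lem:small_terms_or}, show $\Qa$'s integrand is $O(\H{x})$ using $\excess{\z} = \Theta(\z^2\F{\z})$ and $O$-regular-variation estimates of $\H{}$ at $\y$, $\z$, and $x$, and integrate via \cref{eq:Hx_integral}. The paper organizes the integrand bound slightly differently --- bounding $\H{\y}$, $\H{\z}$, and $\H{x}$ all by $\H[1]{}\cdot = \Theta(1/x)$ directly via \cref{eq:Hx_bound} with $\epsilon = 0$, rather than bounding $\H{\y}, \H{\z} = O(\H{x})$ and factoring out $(x\F{x}/\coload{x})^2 = O(1)$ --- which lets it avoid the auxiliary claim $\coload{\z} = \Theta(\coload{x})$, but the calculations are equivalent.

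One conceptual slip in your closing discussion: the relation $\G{x} = \Theta(x\F{x})$, and hence $x\F{x} = O(\coload{x})$, is \emph{not} where $\OR{1}{2}$ enters; that relation holds for all of $\OR{1}{\infty}$, since both $\int_x^\infty (t/x)^{-\alpha}\,\d{t} < \infty$ (from $\alpha > 1$) and $\int_x^\infty (t/x)^{-\beta}\,\d{t} = \Omega(x)$ (from $\beta < \infty$) need only the indices to lie in $(-\infty, -1)$. The upper index $\beta < 2$ is genuinely used only in \cref{lem:excess_bound_or}, where the integral $\int_0^x t(t/x)^{-\beta}\,\d{t}$ must converge to conclude $\excess{\z} = \Theta(\z^2\F{\z})$; for $X \in \OR{2}{\infty}$ one instead has $\excess{\z} = \Theta(1)$, which is why $\Qa$ must be treated differently there.
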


\begin{proof}
  \ifshort{%
    By \cref{lem:small_terms_or}, it suffices to show $\Qa = O(-\log(1 - \rho))$.
    After using \cref{lem:excess_bound_or} to bound $\excess{\z}$,
    this follows by a computation similar to that in
    the proof of \cref{lem:small_terms_or}.%
  }{%
    \useamsalign\allowdisplaybreaks
    By \cref{lem:small_terms_or}, it suffices to upper bound~$\Qa$.
    We compute
    \begin{align*}
      \TwoColEqMoveLeft{
        \gp[\big]{\H{\y} + \H{\z}}
        \frac{\lambda \excess{\z} \F{x}}{\coload{x}^2}
      }
      &\leq \gp[\big]{\H[1]{\y} + \H[1]{\z}}
        \frac{\lambda \excess{\z} \H[1]{x}}{\coload{x}}
        \byref{eq:Hx_bound} \\
      &= \gp[\big]{\H[1]{\y} + \H[1]{\z}} \frac{O(\z^2 \F{\z}) \cdot \H[1]{x}}{\coload{x}}
        \byref{lem:excess_bound_or} \\
      &= \frac{O(\F{x})}{\coload{x}}
      \ifshort{= O(\H{x}),}{}
        \byref{eq:H1yz_bounds_or} \ifshort{}{\\*
      &= O(\H{x}),}
    \end{align*}
    so \cref{eq:Hx_integral} implies the desired bound.%
  }
\end{proof}

\subsection{Finite-Variance Job Size Distributions}
\label{sub:heavy_finite_variance}

We now turn to finite-variance job size distributions,
specifically those in $\OR{2}{\infty}$, $\Gumbel$, and $\ENBUE$.
We begin with the simplest case, which is $\ENBUE$.

\begin{theorem}
  \label{thm:heavy_mserpt_enbue}
  If $X \in \ENBUE$, then in the $\rho \to 1$ limit,
  \begin{align*}
    \E{\waiting{\mserpt{1}}{}} &= \Theta\ifshort{((1 - \rho)^{-1})}{\gp*{\frac{1}{1 - \rho}}}, \ifshort{&}{\\}
    \E{\residence{\mserpt{1}}{}} &= \Theta(1),
  \end{align*}
  and therefore
  \ifshort{$\E{\response{\mserpt{1}}{}} = \Theta((1 - \rho)^{-1})$.}{%
  \begin{align*}
    \E{\response{\mserpt{1}}{}} = \Theta\gp*{\frac{1}{1 - \rho}}.
  \end{align*}}
  If additionally $X \in \Bounded$, then\ifshort{
  $\E{\sesidence{\mserpt{1}}{}} = \Theta(1)$}{}
  in the $\rho \to 1$ limit\ifshort{.}{,
  \begin{align*}
    \E{\sesidence{\mserpt{1}}{}} = \Theta(1).
  \end{align*}}
\end{theorem}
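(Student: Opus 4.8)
The plan is to reduce everything to bounding the five key quantities $\Qa$, $\Qb$, $\Rb$, $\Rc$, $\Sb$, $\Sc$ for \mserpt{}, using the identities $\E{\waiting{\mserpt{1}}{}} = \Qa + \Qb$, $\E{\residence{\mserpt{1}}{}} = \Rb + \Rc$, and $\E{\sesidence{\mserpt{1}}{}} = \Sb + \Sc$. The starting observation is that for $X \in \ENBUE$, \cref{def:enbue} gives a finite age $a_*$ maximizing expected remaining size, so $\rank{\serpt{}}{a} = \E{X - a \given X > a} = \Theta(1)$, hence $\rank{\mserpt{}}{a} = \max_{b \in [0,a]} \rank{\serpt{}}{b} = \Theta(1)$ as well, and it is bounded below by a positive constant for $a$ large. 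By \cref{eq:H1yz_rank}, this means $\H[1]{\y} = \H[1]{\z} = \E{X}/\rank{\mserpt{}}{x} = \Theta(1)$. So unlike the $\OR{}{}$ cases, $\H[1]{}$ at the cutoffs does \emph{not} decay; it is bounded above and below. The $x \to \infty$ behavior of $\y$ and $\z$ is not directly needed beyond the fact (from \cref{def:enbue} and continuity) that $\rank{\mserpt{}}{}$ is eventually constant at a global maximum, so once $x$ is past the relevant hill, $\z$ can be taken to be $\infty$-like or at least the cutoffs stabilize; what matters is that $\F{x}/\F{\y}$ and $\H{\z}/\H[1]{\z}$ stay controlled.

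First I would handle the residence and waiting pieces. For $\Rc = \int_0^\infty \H{\y} \cdot \F{x}/\F{\y} \, \d{x}$: since $\H{\y} \le \H[1]{\y} = O(1)$ and $\F{x}/\F{\y} \le 1$ (as $\y \le x$), and additionally $\F{\y}$ is bounded below by a positive constant for $x$ bounded while for large $x$ we are in the eventually-constant regime where $\y$ is fixed, the integrand is $O(1)$ with the right tail behaviour, giving $\Rc = \Theta(1)$; the lower bound $\Omega(1)$ is immediate from a single fixed value of $x$. For $\Qb = \int_0^\infty \lambda x \H{\y}^2 \cdot \F{x}/\F{\y} \, \d{x}$: here $\lambda \to 1/\E{X}$, $\H{\y} \le \H[1]{\y} = O(1)$, but we need $\H{\y}^2 \cdot x \cdot \F{x}/\F{\y}$ integrable; use $\H{\y} = \F{\y}/\coload{\y} \le \F{\y}/\coload{x}$ to rewrite $\Qb \le \int \lambda x \F{\y}\F{x}/(\coload{x}^2) \, \d{x}$, bound $\F{\y} = O(1)$, and since $X$ has finite variance (an $\ENBUE$ distribution with finite mean; actually we should invoke finite second moment, which $\ENBUE$ need not give — see the obstacle below) conclude $\Qb = O((1-\rho)^{-1})$ via a Pollaczek--Khinchine-type estimate. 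Then $\Qa$ is the dominant term: $\Qa = \int_0^\infty (\H{\y} + \H{\z}) \lambda \excess{\z} \F{x}/\coload{x}^2 \, \d{x}$, and since $\excess{\z} \le \excess{\infty} = \lambda \E{X^2}/2$ is a constant (again modulo the finite-variance caveat) and $\H{\y}+\H{\z} = O(1)$, we get $\Qa = O(1) \int \F{x}/\coload{x}^2 \, \d{x} = O(1) \cdot \Theta((1-\rho)^{-1})$ using $\int_0^\infty \F{x}/\coload{x}^2 \d{x} = \E{X}/(\rho(1-\rho))$. The matching lower bound $\Qa = \Omega((1-\rho)^{-1})$ follows by bounding the integrand below over a fixed range of $x$ using the positive lower bounds on $\H{\y}$, $\H{\z}$, and $\excess{\z}$.

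For the $\Bounded$ addendum, if $X \in \Bounded$ with $\F{x_{\max}} = 0$, then $\z[\mserpt{}]$ is bounded by $x_{\max}$ uniformly in $x$, so $\Sb = \int_0^\infty \lambda \z \H{\y}\H{\z} \cdot \F{x}/\F{\y} \, \d{x} = O(1) \int_0^{x_{\max}} \d{x} \cdot \text{(bounded integrand)} = \Theta(1)$, and $\Sc = \int_0^\infty \H{\y} \, \d{x} = \int_0^{x_{\max}} O(1) \, \d{x} = \Theta(1)$, since the integrals are over the compact support. Hence $\E{\sesidence{\mserpt{1}}{}} = \Sb + \Sc = \Theta(1)$. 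Note that $\Sb$ and $\Sc$ genuinely require boundedness: for a general $\ENBUE$ distribution $\z$ can be infinite, making these integrals diverge, which is why the theorem only claims the $\sesidence{}$ bound under $\Bounded$.

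The main obstacle is the interplay between $\ENBUE$ and finiteness of the second moment. The statement $\excess{\z} = O(1)$ and the $\Qb$ bound implicitly use $\E{X^2} < \infty$, but $\ENBUE$ as defined in \cref{def:enbue} only asserts a finite global maximum of expected remaining size, which is equivalent to $\sup_a \E{X - a \given X > a} < \infty$, and this does \emph{imply} $\E{X^2} < \infty$ (since $\E{X^2}/2 = \int_0^\infty \E{X-a \given X>a} \F{a}\,\d a \le (\sup_a \E{X-a\given X>a}) \cdot \int_0^\infty \F{a}\,\d a = O(1)\cdot\E{X} < \infty$) — so one must spell out this implication carefully, as it is the crux that makes $\excess{\z}$ and $\Qb$ behave. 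The second delicate point is confirming that $\F{\y}$ stays bounded away from $0$ over the range of $x$ that contributes to the integrals, which requires using that $\rank{\mserpt{}}{}$ reaches its global max at a \emph{finite} age $a_*$, so that for all large $x$ the cutoff $\y$ equals the fixed previous-hill-age associated with that plateau; the argument there should cite \cref{lem:rank_yz} together with \cref{def:enbue}. Once those two points are nailed down, the rest is the routine integral estimates sketched above.
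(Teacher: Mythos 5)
Your proposal is correct in outline but takes a genuinely different route from the paper, and it contains one error that would undermine the residence-time claim as you have set it up. The paper does not go through the key quantities $\Qa,\Qb,\Rb,\Rc,\Sb,\Sc$ at all: it works directly with the pre-integration-by-parts formulas $\E{\waiting{\mserpt{1}}{}} = \int_0^\infty \excess{\z}/(\coload{\y}\coload{\z})\dF{x}$ and $\E{\sesidence{\mserpt{1}}{}} = \int_0^\infty \z/\coload{\y}\dF{x}$, observing that $\y \le a_*$ for all $x$ (so $\coload{\y} \ge \coload{a_*} = \Theta(1)$), and that for $x > a_*$ one has $\z = x_{\max}$, $\coload{\z} = 1 - \rho$, and $\excess{\z} = \Theta(1)$. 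This makes the integrals essentially trivial, split at $a_*$. Your route through $\Qa,\ldots,\Sc$ works but requires more care at the boundary. Both are valid, and your explicit observation that $\ENBUE$ implies $\E{X^2} < \infty$ (needed for $\excess{\z} = \Theta(1)$) is a genuine point that the paper leaves implicit; your derivation of it is correct and worth spelling out.

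The error is your claim that ``$\Sb$ and $\Sc$ genuinely require boundedness.'' Only $\Sc$ does. Since $\Rb = \Sb$ and the theorem asserts $\E{\residence{\mserpt{1}}{}} = \Rb + \Rc = \Theta(1)$ for all of $\ENBUE$, not just $\Bounded$, the residence-time bound would fail in your framework if $\Sb$ really diverged for unbounded $\ENBUE$ distributions. In fact it does not: $\Sb$'s integrand contains the factor $\z \cdot \H{\z} = \z\F{\z}/\coload{\z}$, and for unbounded $\ENBUE$ and $x > a_*$ one has $\z \to \infty$ with $\z\F{\z} \to 0$ (a consequence of $\E{X} < \infty$ together with $\F{}$ being nonincreasing), so $\z\H{\z} \to 0$ and the integrand vanishes beyond $a_*$. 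The divergence for unbounded $\ENBUE$ comes only from $\Sc = \int_0^\infty \H{\y}\,\d{x}$, where $\H{\y} = \H{a_*}$ is a positive constant on $(a_*,\infty)$, so the Lebesgue integral diverges unless the effective range of integration is cut off at a finite $x_{\max}$. With that correction, your plan for all parts of the theorem goes through.
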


\begin{proof}
  Let $x_{\max}$ be the supremum of $X$'s support,
  so we may have $x_{\max} = \infty$.
  Because $X \in \ENBUE$,
  there exists age $a_* < x_{\max}$ such that
  \begin{itemize}
  \item
    $\rank{\mserpt{}}{a} < \rank{\mserpt{}}{a_*}$
    for all $a < a_*$, and
  \item
    $\rank{\mserpt{}}{a} = \rank{\mserpt{}}{a_*}$
    for all $a \geq a_*$.
  \end{itemize}
  This means
  \begin{itemize}
  \item
    $\y \leq a_*$ for all sizes~$x$,
  \item
    $\z \leq a_*$ for all sizes $x \leq a_*$, and
  \item
    $\z = x_{\max}$ for all sizes $x > a_*$.
  \end{itemize}
  Because
  \ifshort{\(}{\begin{align*}}
    \coload{a_*} < \coload{x_{\max}} = 1 - \rho\ifshort{}{,}
  \ifshort{\),}{\end{align*}}
  applying \cref{eq:waiting_residence_monotonic} yields
  \begin{align*}
    \E{\waiting{\mserpt{1}}{}}
    &= \Theta(1)
      + \int_{a_*}^\infty \frac{\excess{x_{\max}}}{\coload{a_*} \cdot (1 - \rho)} \dF{x}
    = \Theta\ifshort{((1 - \rho)^{-1})}{\gp*{\frac{1}{1 - \rho}}},
    \\
    \E{\residence{\mserpt{1}}{}}
    &= \Theta(1)
      + \int_{a_*}^\infty \frac{x}{\coload{a_*}} \dF{x}
    = \Theta(1).
  \end{align*}

  If additionally $X \in \Bounded$, then $x_{\max} < \infty$, so
  \begin{align*}
    \E{\sesidence{\mserpt{1}}{}}
    = \Theta(1) + \int_{a_*}^\infty \frac{x_{\max}}{\coload{a_*}} \dF{x}
    = \Theta(1).
    \qedhere
  \end{align*}
\end{proof}

We now turn to the $\OR{2}{\infty}$ and $\Gumbel$ cases,
which require the following technical lemma.

\begin{lemma}
  \label{lem:H_Ginv_or}
  Let \ifshort{\(}{\begin{align*}}
    L^{\generic{}}(u)
    = \ifshort{1/}{\frac{1}}{\rank{\generic{}}[\big]{\Ginv{1/u}}}\ifshort{}{,}
  \ifshort{\),}{\end{align*}}
  where \generic{} is SERPT or \mserpt{}.
  If $X \in \OR{2}{\infty}$, then
  \begin{align*}
    L^{\serpt{}}, L^{\mserpt{}} \in \OR{0}{1},
  \end{align*}
  and if $X \in \Gumbel$, then
  \begin{align*}
    L^{\serpt{}}, L^{\mserpt{}} \in \ORname(-\epsilon, \epsilon) \text{ for all } \epsilon > 0.
  \end{align*}
\end{lemma}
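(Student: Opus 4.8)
The plan is to analyze $L^{\generic{}}(u) = 1/\rank{\generic{}}{\Ginv{1/u}}$ by unpacking the composition $u \mapsto \Ginv{1/u} \mapsto \rank{\generic{}}{\cdot} \mapsto 1/(\cdot)$ and tracking how membership in an $O$-regular variation class propagates through each step. The key observation is that $\ORname$ classes are closed under composition with power-law-bounded maps and under inversion in a quantifiable way (via Matuszewska indices; see \citet{book_bingham}), so the whole argument reduces to controlling the regular-variation behavior of each of the three pieces.

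First I would handle the $X \in \OR{2}{\infty}$ case. Since $\F{} \in \OR{2}{\infty}$, a standard computation shows the excess tail $\G{x} = \frac{1}{\E{X}}\int_x^\infty \F{t}\,\d{t}$ lies in $\OR{1}{\infty}$ (integration raises the exponent by one), hence its inverse $\Ginv{}$ is $O$-regularly varying with exponents in $(0,1)$ --- more precisely $\Ginv{1/u} = u^{\Theta(1)}$ with the exponent bounded strictly between $0$ and $1$. Next, \cref{thm:yz_bound_mserpt_or} gives $\rank{\serpt{}}{a} = \Theta(a)$ and $\rank{\mserpt{}}{a} = \Theta(a)$ for $X \in \OR{1}{\infty} \supseteq \OR{2}{\infty}$, so up to constants both rank functions act like the identity on the relevant range. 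Composing, $\rank{\generic{}}{\Ginv{1/u}} = \Theta(\Ginv{1/u})$, which is itself $O$-regularly varying with exponents in $(0,1)$; taking the reciprocal flips the sign of the exponents, landing us in $\OR{0}{1}$. The main subtlety here is that \cref{thm:yz_bound_mserpt_or} is only an asymptotic ($\Theta$) statement, so one must check that replacing $\rank{\generic{}}{}$ by $\Theta(\cdot)$ is legitimate inside the $\ORname$ definition --- this is fine because the defining inequalities in \cref{def:or} are stable under multiplication by bounded-above-and-below factors, but it is worth stating explicitly.

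For the $X \in \Gumbel$ case, the structure is the same but the power-law exponents degenerate to zero. By \cref{lem:F_gumbel}, $\F{x} = o(x^{-\alpha})$ for all $\alpha > 0$, and the known extreme-value characterization of $\Gumbel$ \citep{book_resnick} gives that $\G{}$ and hence $\Ginv{}$ are ``rapidly varying'' --- $\Ginv{1/u}$ grows slower than any positive power of $u$ but faster than any negative power, i.e.\ $\Ginv{1/u} \in \ORname(-\epsilon, \epsilon)$ for all $\epsilon > 0$. Since $X \in \Gumbel$ we cannot invoke \cref{thm:yz_bound_mserpt_or}, but we do not need a sharp bound on $\rank{\generic{}}{}$: it suffices that $\rank{\serpt{}}{a}$ and $\rank{\mserpt{}}{a}$ are sandwiched between $a^{-\epsilon}$ and $a^{1+\epsilon}$ for large $a$ (after the $\o(\cdot)$-type estimates on $\rank{\serpt{}}{a} = o(a)$ from \citet{book_resnick} and the trivial lower bound), and then composition with the rapidly varying $\Ginv{}$ absorbs everything into $\ORname(-\epsilon,\epsilon)$ after taking reciprocals.

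The main obstacle I anticipate is bookkeeping the composition and inversion rules for $O$-regular variation precisely enough: the Matuszewska-index calculus in \citet[Section~2.2]{book_bingham} gives that composition adds indices and inversion reciprocates them, but one has to be careful that these rules apply to the (non-monotone, only piecewise-monotone) rank functions in play, and that the error factors accumulated at each step remain within the slack allowed by the strict inequalities $\alpha_0 < \alpha \le \beta < \beta_0$ in \cref{def:or}. I would isolate a short auxiliary claim --- ``if $f \in \OR{a}{b}$ and $g(x) = \Theta(x^c)$ with appropriate monotonicity, then $f \circ g$ and $1/f$ are in the expected classes'' --- prove it once from \citet{book_bingham}, and then apply it mechanically to both the $\OR{2}{\infty}$ and $\Gumbel$ cases. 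The $\Gumbel$ case is slightly more delicate because one is working with the ``limiting'' index $0$, but rapid variation of $\Ginv{}$ makes the conclusion robust to any polynomially-bounded perturbation of the rank function, so no sharp rank estimate is needed there.
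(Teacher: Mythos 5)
Your argument splits cleanly into two cases, and they fare very differently.

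For the $\OR{2}{\infty}$ case, your approach is correct and genuinely different from the paper's. You derive the result explicitly from \cref{thm:yz_bound_mserpt_or} together with the composition/inversion calculus for Matuszewska indices and the (correct) observation that $\ORname$ membership is stable under $\Theta$-equivalence. The paper instead simply cites closure lemmas from \citet{fb_heavy_zwart}. Your version is more self-contained and makes the mechanism transparent, at the cost of more bookkeeping. One thing the paper does that you don't is first reduce the \mserpt{} claim to the \serpt{} claim via the observation that $L^{\mserpt{}}$ is the nonincreasing envelope of $L^{\serpt{}}$ (since $\rank{\mserpt{}}{a} = \max_{b \le a}\rank{\serpt{}}{b}$ and $\Ginv{1/u}$ is increasing in~$u$); this cuts the number of cases in half and is worth noting even in your approach.

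For the $\Gumbel$ case, there is a genuine gap. You argue that since $\Ginv{1/u}$ is slowly varying and $\rank{\serpt{}}{a}$ is ``sandwiched between $a^{-\epsilon}$ and $a^{1+\epsilon}$'', the composition $u \mapsto \rank{\serpt{}}{\Ginv{1/u}}$ lands in $\ORname(-\epsilon,\epsilon)$. But $\ORname$ membership is a \emph{ratio} condition --- $f(y)/f(x)$ must be controlled uniformly --- and pointwise size bounds on $\rank{\serpt{}}{}$ give no control at all on the ratio $\rank{\serpt{}}{\Ginv{1/v}}/\rank{\serpt{}}{\Ginv{1/u}}$. A function confined to $[a^{-\epsilon},a^{1+\epsilon}]$ can still oscillate so wildly that the ratio is unbounded over compact multiplicative intervals, and then the composition is not $O$-regularly varying. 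Moreover, in the $\Gumbel$ case you cannot invoke \cref{thm:yz_bound_mserpt_or} (which requires $X\in\OR{1}{\infty}$), so you have no ratio bound on the rank function to work with. The paper sidesteps this by citing \citet[Section~4.2.2]{fb_heavy_zwart}, which shows directly that $L^{\serpt{}}$ is slowly varying when $X\in\Gumbel$; that result exploits the precise structure of the Gumbel domain of attraction (the mean residual life $\rank{\serpt{}}{}$ is the auxiliary function, which is Beurling slowly varying, and $\F{}$, $\G{}$, and $\rank{\serpt{}}{}$ are linked by the von Mises representation). There is no purely ``bookkeeping'' route around that structural fact; you would need some version of that EVT input to close the gap.
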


\begin{proof}
  Because $L^{\mserpt{}}$ is the nonincreasing envelope of $L^{\serpt{}}$,
  it suffices to prove the result for $L^{\serpt{}}$.
  The $\OR{2}{\infty}$ case follows from
  closure properties of Matuszewska indices
  \citep[Lemmas~4.5 and~4.6]{fb_heavy_zwart}.
  The $\Gumbel$ case follows from a result of
  \citet[Section~4.2.2]{fb_heavy_zwart}
  which states that if $X \in \Gumbel$,
  then $L^{\serpt{}}$ is \emph{slowly varying},
  a property implying
  $L^{\serpt{}} \in \ORname(-\epsilon, \epsilon)$ for all $\epsilon > 0$
  \citep{book_bingham}.
\end{proof}

One implication of \cref{lem:H_Ginv_or} is that
if $X \in \Gumbel$, then
\begin{align}
  \label{eq:H_G_gumbel}
  \H[1]{x} = O(\F{x}^{-\epsilon}) \text{ for all } \epsilon > 0.
\end{align}

We are now ready to tackle the $\OR{2}{\infty}$ and $\Gumbel$ cases.
As in \cref{sub:heavy_infinite_variance},
we begin by bounding the five key quantities other than~$\Qa$.
\Cref{lem:small_terms_or} does so for $\OR{2}{\infty}$,
and the following lemma does so for $\Gumbel$.

\begin{lemma}
  \label{lem:small_terms_gumbel}
  Under \mserpt{}, if $X \in \Gumbel$, then
  \begin{align*}
    \Qb, \Rb, \Rc, \Sb = O\ifshort{((1 - \rho)^{-\epsilon})}{\gp*{\frac{1}{(1 - \rho)^\epsilon}}} \text{ for all } \epsilon > 0.
  \end{align*}
  If additionally $X \in \Gumbel \cap (\QDHR \cup \QIMRL)$, then
  \begin{align*}
    \Sc = O\ifshort{((1 - \rho)^{-\epsilon})}{\gp*{\frac{1}{(1 - \rho)^\epsilon}}} \text{ for all } \epsilon > 0.
  \end{align*}
\end{lemma}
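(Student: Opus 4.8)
The plan is to bound the integrand of each key quantity, viewed as a function of~$x$, by $O((1-\rho)^{-\epsilon})$ times a function whose integral is either $\rho$-independent or at worst $O(\log\frac{1}{1-\rho})$; since $\epsilon > 0$ is arbitrary, this yields the claimed bounds. The factors of $(1-\rho)^{-\epsilon}$ come from \cref{eq:Hx_bound}, which lets us replace each occurrence of $\H{x}$, $\H{\y}$, or $\H{\z}$ by $\F{x}^\epsilon\H[1]{x}^{1-\epsilon}/((1-\rho)^\epsilon\rho^{1-\epsilon})$ and similarly for $\y,\z$. Everything else is controlled using three facts valid for $X \in \Gumbel$: $\H[1]{\y} = \H[1]{\z} = O(1)$ (\cref{eq:H1yz_rank}); $\H[1]{x} = O(\F{x}^{-\epsilon})$ for all $\epsilon > 0$ (\cref{eq:H_G_gumbel}); and $\F{x} = o(x^{-\alpha})$ for all $\alpha > 0$ (\cref{lem:F_gumbel}), the last of which implies $\int_0^\infty x\F{x}^\delta\,\d{x} < \infty$ for every $\delta > 0$.

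Two elementary identities drive the four easy cases. First, since $\y \le x$ (\cref{eq:yxz}) and $\coload{}$ is nonincreasing, $\H{\y}\cdot\F{x}/\F{\y} = \F{x}/\coload{\y} \le \F{x}/\coload{x} = \H{x}$; this immediately gives $\Rc \le \int_0^\infty \H{x}\,\d{x} = O(\log\frac{1}{1-\rho})$ by \cref{eq:Hx_integral}. Second, $\H{\y}^2 \cdot \F{x}/\F{\y} = \H{\y} \cdot \F{x}/\coload{\y}$: for $\Qb$ I would bound the surviving $\H{\y}$ by $O((1-\rho)^{-\epsilon})$ (using $\H[1]{\y} = O(1)$ and $\F{\y} \le 1$ in \cref{eq:Hx_bound}), and bound $\F{x}/\coload{\y} \le \H{x} \le O((1-\rho)^{-\epsilon})\,\lambda\F{x}^\delta$ for a suitable $\delta > 0$ — applying \cref{eq:Hx_bound} to $\H{x}$ and then \cref{eq:H_G_gumbel} to absorb $\H[1]{x}$ into a small positive power of $\F{x}$, with $\epsilon$ chosen small enough that $\delta$ stays positive — so that $\Qb = O((1-\rho)^{-2\epsilon})$. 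For $\Rb = \Sb$, the same $\F{x}/\coload{\y} \le \H{x}$ bound handles the $\H{\y}\cdot\F{x}/\F{\y}$ factor, while $\lambda\z\H{\z}$ is bounded by $O((1-\rho)^{-\epsilon})$ using $\H[1]{\z} = O(1)$ in \cref{eq:Hx_bound} together with the observation that $\sup_{w \ge 0} w\F{w}^\epsilon < \infty$ (again by \cref{lem:F_gumbel}), so $\z\F{\z}^\epsilon$ is bounded uniformly in~$x$; the degenerate case $\z = \infty$ is harmless since then $\H{\z} = 0$.

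For $\Sc = \int_0^\infty \H{\y}\,\d{x}$ there is no missing $\F{x}/\F{\y}$ factor to absorb, and this is precisely where the extra hypothesis $X \in \QDHR \cup \QIMRL$ enters: \cref{thm:yz_bound_mserpt_qimrl} gives $\y = \Omega(x^{1/\gamma})$, so by \cref{lem:F_gumbel} $\F{\y}^\epsilon = o(x^{-2})$ and hence $\int_0^\infty \F{\y}^\epsilon\,\d{x} = O(1)$; combined with $\H{\y} \le O((1-\rho)^{-\epsilon})\F{\y}^\epsilon$ from \cref{eq:Hx_bound} and $\H[1]{\y} = O(1)$, this gives $\Sc = O((1-\rho)^{-\epsilon})$.

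I expect the main obstacle to be the bookkeeping in the trade-off between the $(1-\rho)^{-\epsilon}$ factors extracted from \cref{eq:Hx_bound} and the compensating power $\F{x}^\delta$ needed for convergence: one must verify that the exponents can be chosen so that $\delta > 0$ while the aggregate power of $(1-\rho)^{-1}$ remains an arbitrarily small $\epsilon$, and that every implied constant is genuinely uniform in~$\rho$ (in particular the $\rho$-independence of $\int_0^\infty x\F{x}^\delta\,\d{x}$ and of $\sup_w w\F{w}^\epsilon$). The $\Sc$ estimate is the most delicate, as it is the only one of the five that relies essentially on the polynomial lower bound $\y = \Omega(x^{1/\gamma})$ — which is exactly why $\Sc$, unlike $\Qb, \Rb, \Rc, \Sb$, requires the additional $\QDHR \cup \QIMRL$ hypothesis.
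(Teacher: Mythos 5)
Your proposal is correct and follows essentially the same approach as the paper's proof: extract $(1-\rho)^{-\epsilon}$ factors from $\H{\cdot}$ via \cref{eq:Hx_bound}, control $\H[1]{\y}, \H[1]{\z}$ by \cref{eq:H1yz_rank}, trade positive powers of $\F{\cdot}$ for arbitrary negative powers of $x$ via \cref{lem:F_gumbel}, and for $\Sc$ alone invoke $\y = \Omega(x^{1/\gamma})$ from \cref{thm:yz_bound_mserpt_qimrl}. The only differences are bookkeeping: you take $\epsilon > 0$ for both factors in $\Qb$ and $\Sb$ where the paper takes $\epsilon = 0$ for $\H{\y}$, and for $\Sb$ you absorb $\H{\y}\cdot\F{x}/\F{\y}$ into $\H{x}$ and use $\sup_w w\F{w}^\epsilon < \infty$ where the paper uses the cruder bound $\H{\y}\cdot\F{x}/\F{\y} \leq \H{\y}$ and $\z \geq x$ — both routes are valid and yield the same conclusion.
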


\begin{proof}
  \useamsalign\allowdisplaybreaks
  Our overall approach is to use
  \cref{eq:Hx_bound} on each key quantity
  to bound it by an expression of the form
  $(1 - \rho)^{-\epsilon} \cdot \int_0^\infty \Phi(\epsilon, x) \d{x}$,
  where $\Phi(\epsilon, x)$ does not depend on~$\rho$.
  The challenge is then to show that the integral converges
  for arbitrarily small $\epsilon > 0$.

  We begin with two bounds on $\H{\y} \cdot \F{x}/\F{\y}$,
  a term which appears in the integrands of several key quantities.
  By \cref{eq:yxz},
  \begin{align}
    \label{eq:Hy_bound_simple}
    \H{\y} \cdot \ifshort{\slashfrac}{\frac}{\F{x}}{\F{\y}} &\leq \H{\y}, \\
    \label{eq:Hy_bound}
    \H{\y} \cdot \ifshort{\slashfrac}{\frac}{\F{x}}{\F{\y}}
    &= \ifshort{\slashfrac}{\frac}{\F{x}}{\coload{\y}}
    \leq \ifshort{\slashfrac}{\frac}{\F{x}}{\coload{x}}
    = \H{x}.
  \end{align}
  Combining \cref{eq:Hy_bound} with \cref{eq:Hx_integral}
  implies the desired bound for~$\Rc$.

  We now bound~$\Qb$.
  To do so, we apply \cref{eq:Hx_bound} twice,
  choosing $\epsilon = 0$ for $\H{\y}$
  and arbitrarily small $\epsilon > 0$ for $\H{x}$:
  \begin{align*}
    \Qb
    &\leq \int_0^\infty \lambda x \H{\y} \H{x} \d{x}
      \byref{eq:Hy_bound} \\
    &\leq \frac{1}{(1 - \rho)^\epsilon}
      \int_0^\infty \lambda x \F{x}^\epsilon \H[1]{\y} \H[1]{x}^{1 - \epsilon} \d{x}
      \byref{eq:Hx_bound} \\
    &\leq \frac{O(1)}{(1 - \rho)^\epsilon}
      \int_0^\infty x \F{x}^\epsilon \F{x}^{-\epsilon(1 - \epsilon)} \d{x}
      \byref{eq:H1yz_rank, eq:H_G_gumbel} \\
    &\leq \frac{O(1)}{(1 - \rho)^\epsilon}
      \int_0^\infty x^{1 - \alpha\epsilon^2} \d{x},
      \byref{lem:F_gumbel}
  \end{align*}
  where we may choose $\alpha > 0$ arbitrarily large.
  Choosing $\alpha > 2/\epsilon^2$ makes the integral converge,
  so $\Qb = O((1 - \rho)^{-\epsilon})$.
  The computation for~$\Sb$ is similar:
  \begin{align*}
    \Sb
    &\leq \frac{1}{(1 - \rho)^\epsilon}
      \int_0^\infty \lambda \z \F{\z}^\epsilon \H[1]{\y} \H[1]{\z}^{1 - \epsilon} \d{x}
      \byref{eq:Hy_bound_simple, eq:Hx_bound} \\
    &\leq \frac{O(1)}{(1 - \rho)^\epsilon}
      \int_0^\infty \z^{1 - \alpha\epsilon} \d{x}.
      \byref{eq:H1yz_rank, lem:F_gumbel}
  \end{align*}
  Because $\z \geq x$, the integral converges if we choose $\alpha > 2/\epsilon$,
  so $\Sb = O((1 - \rho)^{-\epsilon})$.
  This also covers $\Rb$ because $\Rb = \Sb$.

  If additionally $X \in \Gumbel \cap (\QDHR \cup \QIMRL)$ with exponent~$\gamma$,
  then we can similarly bound~$\Sc$:
  \begin{align*}
    \Sc
    &\leq \frac{1}{(1 - \rho)^\epsilon}
      \int_0^\infty \F{\y}^{\epsilon} \H[1]{\y}^{1 - \epsilon} \d{x}
      \byref{eq:Hx_bound} \\
    &\leq \frac{O(1)}{(1 - \rho)^\epsilon}
      \int_0^\infty \y^{-\alpha\epsilon} \d{x}
      \byref{eq:H1yz_rank, lem:F_gumbel} \\
    &\leq \frac{O(1)}{(1 - \rho)^\epsilon}
      \int_0^\infty x^{-\alpha\epsilon/\gamma} \d{x},
      \byref{thm:yz_bound_mserpt_qimrl}
  \end{align*}
  so choosing $\alpha > \gamma/\epsilon$ shows that
  $\Sc = O((1 - \rho)^{-\epsilon})$.
\end{proof}

\ifshort{}{It remains only to characterize the heavy-traffic scaling of~$\Qa$.}

\begin{lemma}
  \label{lem:big_term_nfv}
  Under \mserpt{}, if $X \in \OR{2}{\infty} \cup \Gumbel$, then
  \begin{align*}
    \Qa = \ifshort{\gp[\Big]}{\gp*}{\ifshort{\gp[\Big]}{\frac{1}}{(1 - \rho) \cdot \rank{\mserpt{}}[\big]{\Ginv{1 - \rho}}}\ifshort{\sp{-1}}{}}.
  \end{align*}
\end{lemma}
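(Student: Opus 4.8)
The plan is to estimate $\Qa$ by collapsing the integral that defines it into a one-dimensional integral of the function $L^{\mserpt{}}$ from \cref{lem:H_Ginv_or}, and then invoking Karamata-type theorems for $O$-regular variation. First I would bound the integrand of $\Qa$ pointwise: \cref{eq:Hx_bound} with $\epsilon = 0$ together with \cref{eq:H1yz_rank} gives $\H{\y}, \H{\z} \le \H[1]{\y}/\rho = \E{X}\big/\gp*{\rho\,\rank{\mserpt{}}{x}}$, and since $\OR{2}{\infty}$ and $\Gumbel$ distributions have finite second moment, $\excess{\z} \le \lim_{a \to \infty}\excess{a} = \tfrac{\lambda}{2}\E{X^2} = \Theta(1)$. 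Hence $\Qa \le \Theta(1)\int_0^\infty \F{x}\big/\gp*{\rank{\mserpt{}}{x}\coload{x}^2}\,\d{x}$. I would then substitute $v = \G{x}$ (so $\F{x}\,\d{x} = -\E{X}\,\d{v}$ and $\coload{x} = (1 - \rho) + \rho v$ by \cref{eq:coload_G}), followed by $w = 1/v$, to obtain
\begin{align*}
  \int_0^\infty \frac{\F{x}}{\rank{\mserpt{}}{x}\,\coload{x}^2}\,\d{x}
  = \E{X}\int_1^\infty \frac{L^{\mserpt{}}(w)}{\gp*{(1 - \rho)w + \rho}^2}\,\d{w},
\end{align*}
where $L^{\mserpt{}}(w) = 1/\rank{\mserpt{}}{\Ginv{1/w}}$.

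For the upper bound I would split this integral at $w = 1/(1 - \rho)$, using $\gp*{(1 - \rho)w + \rho}^2 = \Theta(1)$ on $[1, 1/(1-\rho)]$ and $\gp*{(1 - \rho)w + \rho}^2 = \Theta\gp*{(1-\rho)^2 w^2}$ on $[1/(1-\rho), \infty)$. This reduces the estimate to $\int_1^{1/(1-\rho)} L^{\mserpt{}}(w)\,\d{w}$ and $(1-\rho)^{-2}\int_{1/(1-\rho)}^\infty L^{\mserpt{}}(w)/w^2\,\d{w}$. By \cref{lem:H_Ginv_or}, $L^{\mserpt{}} \in \OR{0}{1}$ when $X \in \OR{2}{\infty}$ and $L^{\mserpt{}}$ is slowly varying when $X \in \Gumbel$; in either case the $O$-regular-variation form of Karamata's theorem \citep{book_bingham} shows both pieces are $\Theta\gp*{(1-\rho)^{-1} L^{\mserpt{}}(1/(1-\rho))} = \Theta\gp*{\frac{1}{(1 - \rho)\,\rank{\mserpt{}}{\Ginv{1 - \rho}}}}$, which is the claimed order.

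For the lower bound I would restrict $\Qa$ to $x \in [\Ginv{2(1 - \rho)}, \Ginv{1 - \rho}]$, which is nonempty once $\rho \ge 1/2$. On this range $\G{x} \in [1 - \rho, 2(1 - \rho)]$, so $\coload{x} = \Theta(1 - \rho)$; also $x \to \infty$ as $\rho \to 1$, so $\excess{\z} \ge \excess{x} \to \tfrac{\lambda}{2}\E{X^2} = \Theta(1)$; and since $\y \le x \le \Ginv{1 - \rho}$ we get $\G{\y} \ge 1 - \rho \ge (1-\rho)/(2-\rho)$, which forces $\coload{\y} \le 2\G{\y}$ and hence $\H{\y} \ge \tfrac12\H[1]{\y} = \E{X}\big/\gp*{2\rank{\mserpt{}}{x}} \ge \E{X}\big/\gp*{2\rank{\mserpt{}}{\Ginv{1 - \rho}}}$ by \cref{eq:H1yz_rank} and the monotonicity of $\rank{\mserpt{}}{}$. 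Combining these with $\int_{\Ginv{2(1 - \rho)}}^{\Ginv{1 - \rho}}\F{x}\,\d{x} = \E{X}(1 - \rho)$ yields $\Qa \ge \Theta\gp*{\frac{1}{(1 - \rho)\,\rank{\mserpt{}}{\Ginv{1 - \rho}}}}$, matching the upper bound. This argument uses only finite variance and \cref{eq:H1yz_rank}, so it handles $\OR{2}{\infty}$ and $\Gumbel$ uniformly, without the age-cutoff bounds of \cref{sec:rank_bounds}.

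The main obstacle I anticipate is the bookkeeping around the two substitutions and the Karamata estimates: confirming that the two halves of the split integral contribute at the same asymptotic order (so the sum is genuinely $\Theta$ of each), checking the $O$-Karamata hypotheses via \cref{lem:H_Ginv_or}, and verifying that the contribution from small $x$ — where $\coload{x} = \Theta(1)$ rather than $\Theta(1-\rho)$ — is $O(1)$ and hence negligible, which holds because $\rank{\mserpt{}}{x} \ge \E{X}$ for all $x$. A secondary point to watch is that the answer must involve $\rank{\mserpt{}}{}$ rather than $\rank{\serpt{}}{}$; this is exactly why \cref{eq:H1yz_rank}, which ties $\H[1]{\y}$ and $\H[1]{\z}$ to $\rank{\mserpt{}}{x}$ (not $\rank{\serpt{}}{x}$), is the decisive input distinguishing this bound from the analogous FB result of \citet{fb_heavy_zwart}.
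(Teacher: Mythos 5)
Your proposal is correct and follows essentially the same route as the paper's proof: bound the integrand of $\Qa$ using \cref{eq:Hx_bound, eq:H1yz_rank} and $\excess{\z} = O(1)$, reduce the integral by the change of variable $w = 1/\G{x}$ to an integral of $L^{\mserpt{}}$, split at $w = 1/(1-\rho)$, and control each piece by the $O$-regular-variation form of Karamata's theorem using $L^{\mserpt{}} \in \OR{0}{1}$ (resp.\ slowly varying) from \cref{lem:H_Ginv_or}.

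The minor differences are worth noting. For the upper bound, the paper does not change variables on the tail piece $\int_{\Ginv{1-\rho}}^\infty$; it bounds $\coload{x} \geq (1-\rho)$ below and $\rank{\mserpt{}}{x} \geq \rank{\mserpt{}}{\Ginv{1-\rho}}$ below, then integrates $\F{x}/\coload{x}^2$ in closed form. Your version pushes both pieces through the same substitution and invokes Karamata twice (once with $L(w)$, once with $L(w)/w^2$); this is cleaner in the sense that both halves are handled symmetrically, at the cost of an extra Karamata invocation, and it requires checking that the upper Matuszewska index of $L^{\mserpt{}}$ is $< 1$, which the lemma does give you. For the lower bound, the paper integrates over $[0, \Ginv{1-\rho}]$ and asserts $\Omega(1)$ for the $(\H{\y}+\H{\z})\excess{\z}$ factor somewhat informally; restricting to $[\Ginv{2(1-\rho)}, \Ginv{1-\rho}]$, as you do, cleanly dispenses with the small-$x$ boundary issue (where $\excess{\z}$ can be small) and makes $\coload{x} = \Theta(1-\rho)$ and $\excess{\z} = \Omega(1)$ hold uniformly on the integration window, so your lower-bound argument is actually a bit more careful than the paper's. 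Your closing observation — that \cref{eq:H1yz_rank} is what forces $\rank{\mserpt{}}{}$ rather than $\rank{\serpt{}}{}$ into the answer — is exactly the right thing to emphasize.
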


\begin{proof}
  \useamsalign\allowdisplaybreaks
  \ifshort{}{Because }$\E{X^2} < \infty$\ifshort{ and thus}{, we have} $\excess{x} = \Theta(1)$,
  so by \cref{eq:Hx_bound, eq:H1yz_rank},
  \begin{equation*}
    \Qa
    = \int_0^\infty
        \frac{\Theta(1)}{\rank{\mserpt{}}{x}}
        \cdot \frac{\F{x}}{\coload{x}^2}
      \d{x}.
  \end{equation*}

  For the lower bound,
  we integrate up to $\Ginv{1 - \rho}$ instead of~$\infty$.
  For $x \leq \Ginv{1 - \rho}$, we have $\G{x} \geq 1 - \rho$,
  so \cref{eq:coload_G} implies
  \begin{equation*}
    \rho \G{x} \leq \coload{x} \leq (1 + \rho) \G{x}.
  \end{equation*}
  Using this fact along with the monotonicity of $\rank{\mserpt{}}{}$
  yields
  \begin{align*}
    \Qa
    &\geq \frac{\Omega(1)}{\rank{\mserpt{}}[\big]{\Ginv{1 - \rho}}}
      \int_0^{\Ginv{1 - \rho}} \frac{\F{x}}{\G{x}^2} \d{x} \\
    &= \frac{\Omega(1)}{\rank{\mserpt{}}[\big]{\Ginv{1 - \rho}}}
      \gp*{\frac{1}{\G[\big]{\Ginv{1 - \rho}}} - 1}
      \byref{eq:G} \\
    &= \Omega\ifshort{\gp[\Big]}{\gp*}{\ifshort{\gp[\Big]}{\frac{1}}{(1 - \rho) \cdot \rank{\mserpt{}}[\big]{\Ginv{1 - \rho}}}\ifshort{\sp{-1}}{}}.
  \end{align*}

  For the upper bound,
  we split the integration region at $\Ginv{1 - \rho}$:
  \begin{align}
    \label{eq:integral_parts}
    \Qa
    &= \int_0^{\Ginv{1 - \rho}}
          \frac{O(1)}{\rank{\mserpt{}}{x}}
          \cdot \frac{\F{x}}{\coload{x}^2}
        \d{x}
      \TwoColEqBreak
      + \int_{\Ginv{1 - \rho}}^\infty
          \frac{O(1)}{\rank{\mserpt{}}{x}}
          \cdot \frac{\F{x}}{\coload{x}^2}
        \d{x}.
  \end{align}
  The second integral in \cref{eq:integral_parts} is simple to bound
  using the monotonicity of $\rank{\mserpt{}}{}$:
  \begin{align*}
    \EqMoveLeft{
      \int_{\Ginv{1 - \rho}}^\infty
          \frac{O(1)}{\rank{\mserpt{}}{x}}
          \cdot \frac{\F{x}}{\coload{x}^2}
        \d{x}
    }
    &\leq \frac{O(1)}{\rank{\mserpt{}}[\big]{\Ginv{1 - \rho}}}
      \int_{\Ginv{1 - \rho}}^\infty \frac{\F{x}}{\coload{x}^2} \d{x} \\
    &\leq \frac{O(1)}{\rank{\mserpt{}}[\big]{\Ginv{1 - \rho}}}
      \gp*{\frac{1}{1 - \rho} - \frac{1}{1 - \rho + \rho\Ginv{1 - \rho}}}
      \TwoColEqBreak[\hspace{20em}]
      \byref{eq:coload_excess, eq:coload_G}
    \ifshort{\\[-\baselineskip]}{\\}
    &= O\ifshort{\gp[\Big]}{\gp*}{\ifshort{\gp[\Big]}{\frac{1}}{(1 - \rho) \cdot \rank{\mserpt{}}[\big]{\Ginv{1 - \rho}}}\ifshort{\sp{-1}}{}}.
  \end{align*}
  To bound the first integral in~\cref{eq:integral_parts},
  we change variables to $u = 1/\G{x}$:
  \begin{align*}
    \TwoColEqMoveLeft{
      \int_0^{\Ginv{1 - \rho}}
        \frac{O(1)}{\rank{\mserpt{}}{x}}
        \cdot \frac{\F{x}}{\coload{x}^2}
      \d{x}
    }
    &\leq \int_0^{\Ginv{1 - \rho}}
        \frac{O(1)}{\rank{\mserpt{}}{x}}
        \cdot \frac{\F{x}}{\G{x}^2}
      \d{x}
      \byref{eq:coload_G} \\
      &= \int_1^{\ifshort{\frac{1}{1 - \rho}}{1/(1 - \rho)}}
        \frac{O(1)}{\rank{\mserpt{}}[\big]{\Ginv{1/u}}}
      \d{u} \ifshort{}{\\ &}
      = O(1)\int_1^{\ifshort{\frac{1}{1 - \rho}}{1/(1 - \rho)}} L^{\mserpt{}}(u) \d{u},
  \end{align*}
  where $L^{\mserpt{}}$ is as in \cref{lem:H_Ginv_or}.
  By \cref{lem:H_Ginv_or},
  we have $L^{\mserpt{}} \in \ORname(-1, \infty)$,
  so a result in Karamata theory \citep[Theorem~2.6.1]{book_bingham} implies
  \begin{equation*}
    \int_1^v L^{\mserpt{}}(u) \d{u} = O(vL^{\mserpt{}}(v))
  \end{equation*}
  in the $v \to \infty$ limit.
  Letting $v = 1/(1 - \rho)$ yields the desired bound.
\end{proof}

\ifshort{}{Having characterized the heavy-traffic scaling of all the key quantities,
the main heavy-traffic results for $\OR{2}{\infty}$ and $\Gumbel$ follow easily.}

\begin{theorem}
  \label{thm:heavy_mserpt_or_fv}
  If $X \in \OR{2}{\infty}$, then in the $\rho \to 1$ limit,
  \begin{align*}
    \E{\waiting{\mserpt{1}}{}}
    &= \Theta\ifshort{\gp[\Big]}{\gp*}{\ifshort{\gp[\Big]}{\frac{1}}{(1 - \rho) \cdot \rank{\mserpt{}}[\big]{\Ginv{1 - \rho}}}\ifshort{\sp{-1}}{}} \\
      &= \Omega\ifshort{((1 - \rho)^{-\delta})}{\gp*{\frac{1}{(1 - \rho)^\delta}}}
      \text{ for some } \delta > 0,
    \\
    \E{\residence{\mserpt{1}}{}}
    & \leq \E{\sesidence{\mserpt{1}}{}} \ifshort{}{\\
    &}= \Theta\gp*{\ifshort{-\log(1 - \rho)}{\log\frac{1}{1 - \rho}}},
  \end{align*}
  and therefore\ifshort{ $\E{\response{\mserpt{1}}{}} = \E{\waiting{\mserpt{1}}{}} + o(\E{\waiting{\mserpt{1}}{}})$.}{
  \begin{align*}
    \E{\response{\mserpt{1}}{}}
    = \Theta\gp*{\frac{1}{(1 - \rho) \cdot \rank{\mserpt{}}[\big]{\Ginv{1 - \rho}}}}.
  \end{align*}}
\end{theorem}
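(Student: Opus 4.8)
The plan is to read off all three displayed claims from the decompositions $\E{\waiting{\mserpt{1}}{}} = \Qa + \Qb$, $\E{\residence{\mserpt{1}}{}} = \Rb + \Rc$, and $\E{\sesidence{\mserpt{1}}{}} = \Sb + \Sc$ (which we may take from \cref{app:new_formulas}) combined with the per-term estimates already established. Since $\OR{2}{\infty} \subset \OR{1}{\infty}$, \cref{lem:small_terms_or} gives $\Qb, \Rb, \Rc, \Sb, \Sc = O\gp*{\log\frac{1}{1-\rho}}$, and \cref{lem:big_term_nfv} gives $\Qa = \Theta\gp*{\frac{1}{(1-\rho) \cdot \rank{\mserpt{}}[\big]{\Ginv{1-\rho}}}}$. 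What remains is (i) to show the $\Qa$ term dominates $\Qb$, (ii) to convert the $\Qa$ estimate into the $\Omega((1-\rho)^{-\delta})$ bound, and (iii) to upgrade the $O$-bound on $\E{\sesidence{\mserpt{1}}{}}$ to a $\Theta$-bound.

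For (i) and (ii) I would invoke \cref{lem:H_Ginv_or}: for $X \in \OR{2}{\infty}$ the function $L^{\mserpt{}}(u) = 1/\rank{\mserpt{}}[\big]{\Ginv{1/u}}$ lies in $\OR{0}{1}$, so unwinding \cref{def:or} (fixing the lower argument at $x_0$) yields $L^{\mserpt{}}(u) = \Omega(u^{-\beta})$ for some $\beta \in (0,1)$. Taking $u = 1/(1-\rho)$ and using \cref{lem:big_term_nfv}, $\Qa = \Theta\gp*{\frac{1}{1-\rho} L^{\mserpt{}}\gp*{\frac{1}{1-\rho}}} = \Omega\gp*{(1-\rho)^{-(1-\beta)}}$, so $\delta := 1 - \beta > 0$ works. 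Since $(1-\rho)^{-\delta}$ dominates $\log\frac{1}{1-\rho}$ as $\rho \to 1$, we get $\Qb = o(\Qa)$, hence $\E{\waiting{\mserpt{1}}{}} = \Qa + o(\Qa) = \Theta(\Qa)$, which is the first displayed line (including its $\Omega((1-\rho)^{-\delta})$ form).

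For (iii), the inequality $\E{\residence{\mserpt{1}}{}} \le \E{\sesidence{\mserpt{1}}{}}$ follows from $\Rb = \Sb$ and $\Rc \le \Sc$ (equivalently, from the remark after \cref{eq:sesidence}), and the upper bound $\E{\sesidence{\mserpt{1}}{}} = O\gp*{\log\frac{1}{1-\rho}}$ is \cref{lem:small_terms_or}. For the matching lower bound I would bound $\E{\sesidence{\mserpt{1}}{}} \ge \Sc = \int_0^\infty \H{\y} \d{x}$ and argue $\H{\y} = \Theta(\H{x})$ pointwise in $x$: \cref{thm:yz_bound_mserpt_or} gives $\y = \Theta(x)$, whence $O$-regular variation of $\F{}$ gives $\F{\y} = \Theta(\F{x})$; and since $\coload{}$ is nonincreasing and, via \cref{eq:coload_G}, is controlled by the ($O$-regularly varying) tail of the excess, $\coload{\y} = \Theta(\coload{x})$. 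Thus $\Sc = \Theta\gp*{\int_0^\infty \H{x} \d{x}} = \Theta\gp*{\log\frac{1}{1-\rho}}$ by \cref{eq:Hx_integral}, giving the second displayed line. The final line then follows from $\E{\response{\mserpt{1}}{}} = \E{\waiting{\mserpt{1}}{}} + \E{\residence{\mserpt{1}}{}}$, since the residence term is $O\gp*{\log\frac{1}{1-\rho}} = o\gp*{(1-\rho)^{-\delta}} = o\gp*{\E{\waiting{\mserpt{1}}{}}}$.

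I expect the one spot needing real care to be step (iii)'s claim $\H{\y} = \Theta(\H{x})$: because $\H{x} = \F{x}/\coload{x}$ is not monotone in $x$, the lower bound $\Sc = \Omega\gp*{\log\frac{1}{1-\rho}}$ genuinely relies on pushing $O$-regular variation through both the tail and the $\coload{}$ factor, not on a crude monotonicity argument. Step (ii)'s extraction of a polynomial rate from membership in $\OR{0}{1}$ is routine but is the other place one must keep straight which exponent in \cref{def:or} gives the lower versus the upper bound. Everything else is bookkeeping on top of \cref{lem:small_terms_or,lem:big_term_nfv,lem:H_Ginv_or}.
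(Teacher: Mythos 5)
Your proposal is correct and follows the same route as the paper's proof: decompose into the key quantities, apply \cref{lem:small_terms_or,lem:big_term_nfv} for the $\Qb, \Rb, \Rc, \Sb, \Sc$ bounds and the $\Qa$ estimate, then use \cref{lem:H_Ginv_or} to extract a polynomial rate $\Qa = \Omega((1-\rho)^{-\delta})$ from $L^{\mserpt{}} \in \OR{0}{1}$. Steps (i) and (ii) are exactly what the paper does (the paper is just terser about noting $\Qb = o(\Qa)$).

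Where you go beyond the paper is step (iii). The paper's proof reads ``After applying \cref{lem:small_terms_or,lem:big_term_nfv}, it remains only to show $\Qa = \Omega((1-\rho)^{-\delta})$,'' which, strictly speaking, only establishes the \emph{upper} bound $\E{\sesidence{\mserpt{1}}{}} = O(\log\frac{1}{1-\rho})$; the $\Omega(\log\frac{1}{1-\rho})$ half of the claimed $\Theta$ is left implicit. (It is not used downstream in \cref{thm:mgittins_opt,thm:mserpt_approx}, which only need the $O$-bound, but it is asserted in the theorem statement.) Your argument fills this gap: you bound $\E{\sesidence{\mserpt{1}}{}} \geq \Sc$ and show $\H{\y[\mserpt{}]} = \Theta(\H{x})$ pointwise for $x \geq x_0$ via $\y[\mserpt{}] = \Theta(x)$ (\cref{thm:yz_bound_mserpt_or}) and $O$-regular variation of both $\F{}$ and $\G{}$ — and you correctly flag that a naive monotonicity argument cannot work here, since $\H{x}$ is not monotone in $x$. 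Two small points worth tightening if you write this up: (a) the one-line justification ``$\coload{}$ is nonincreasing'' only gives $\coload{\y[\mserpt{}]} = \Omega(\coload{x})$ for free; the reverse inequality $\coload{\y[\mserpt{}]} = O(\coload{x})$ requires pushing $O$-regular variation of $\G{}$ through \cref{eq:coload_G} as you indicate, and is worth spelling out; and (b) the pointwise $\Theta$ only holds for $x$ above the $O$-regular variation threshold $x_0$, so you should note that the $[0,x_0]$ contribution to both $\Sc$ and $\int_0^\infty \H{x}\,\d{x}$ is $O(1)$, uniformly in $\rho$, before invoking \cref{eq:Hx_integral}.
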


\begin{proof}
  After applying \cref{lem:small_terms_or, lem:big_term_nfv},
  it remains only to show $\Qa = \Omega((1 - \rho)^{-\delta})$.
  Using $L^{\mserpt{}}$ from \cref{lem:H_Ginv_or},
  we can rewrite \cref{lem:big_term_nfv} as
  \begin{align}
    \label{eq:Qa_H_Ginv}
    \Qa
    = \Theta\ifshort{
      \gp[\big]{(1 - \rho)^{-1} L^{\mserpt{}}\gp{(1 - \rho)^{-1}}}
    }{
      \gp[\bigg]{
        \frac{1}{1 - \rho}
        L^{\mserpt{}}\gp[\bigg]{\frac{1}{1 - \rho}}
      }
    }.
  \end{align}
  By \cref{lem:H_Ginv_or}, we have $L \in \OR{0}{1}$,
  which means there exists $\beta \in (0, 1)$
  such that $L(u) = \Omega(u^{-\beta})$ in the $u \to \infty$ limit.
  Letting $\delta = 1 - \beta$
  and $u = 1/(1 - \rho)$ yields the desired bound.
\end{proof}

\begin{theorem}
  \label{thm:heavy_mserpt_gumbel}
  If $X \in \Gumbel$, then in the $\rho \to 1$ limit,
  \begin{align*}
    \E{\waiting{\mserpt{1}}{}}
    &= \Theta\ifshort{\gp[\Big]}{\gp*}{\ifshort{\gp[\Big]}{\frac{1}}{(1 - \rho) \cdot \rank{\mserpt{}}[\big]{\Ginv{1 - \rho}}}\ifshort{\sp{-1}}{}} \\
    &= \Omega\ifshort{((1 - \rho)^{-(1 - \epsilon)})}{\gp*{\frac{1}{(1 - \rho)^{1 - \epsilon}}}}
      \text{ for all } \epsilon > 0,
    \\
    \E{\residence{\mserpt{1}}{}}
      &= O\ifshort{((1 - \rho)^{-\epsilon})}{\gp*{\frac{1}{(1 - \rho)^\epsilon}}}
      \text{ for all } \epsilon > 0,
  \end{align*}
  and therefore \ifshort{ $\E{\response{\mserpt{1}}{}} = \E{\waiting{\mserpt{1}}{}} + o(\E{\waiting{\mserpt{1}}{}})$.}{
  \begin{align*}
    \E{\response{\mserpt{1}}{}}
    = \Theta\gp*{\frac{1}{(1 - \rho) \cdot \rank{\mserpt{}}[\big]{\Ginv{1 - \rho}}}}.
  \end{align*}}
  If additionally $X \in \Gumbel \cap (\QDHR \cup \QIMRL)$, then
  \begin{align*}
    \E{\sesidence{\mserpt{1}}{}}
      = O\ifshort{((1 - \rho)^{-\epsilon})}{\gp*{\frac{1}{(1 - \rho)^\epsilon}}}
      \text{ for all } \epsilon > 0.
  \end{align*}
\end{theorem}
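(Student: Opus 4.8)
The plan is to assemble the theorem from the three decompositions $\E{\waiting{\mserpt{1}}{}} = \Qa + \Qb$, $\E{\residence{\mserpt{1}}{}} = \Rb + \Rc$, and $\E{\sesidence{\mserpt{1}}{}} = \Sb + \Sc$ (from \cref{thm:waiting_monotonic, thm:residence_monotonic, thm:sesidence_monotonic}), together with the estimates already established in this section. \Cref{lem:big_term_nfv} handles the dominant term: $\Qa = \Theta\gp*{\frac{1}{(1 - \rho) \cdot \rank{\mserpt{}}{\Ginv{1 - \rho}}}}$. \Cref{lem:small_terms_gumbel} handles the rest: $\Qb, \Rb, \Rc, \Sb = O\gp*{\frac{1}{(1 - \rho)^\epsilon}}$ for every $\epsilon > 0$, with $\Sc$ also $O\gp*{\frac{1}{(1 - \rho)^\epsilon}}$ when we additionally assume $X \in \QDHR \cup \QIMRL$. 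So the only remaining work is (i) to verify that $\Qa$ is genuinely polynomially large, namely $\Qa = \Omega\gp*{\frac{1}{(1 - \rho)^{1 - \epsilon}}}$ for all $\epsilon > 0$, and (ii) to check that the smaller terms are negligible beside it.

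For (i), I would follow the argument used in the proof of \cref{thm:heavy_mserpt_or_fv}. Rewriting \cref{lem:big_term_nfv} in terms of the function $L^{\mserpt{}}$ of \cref{lem:H_Ginv_or} gives $\Qa = \Theta\gp*{\frac{1}{1 - \rho} L^{\mserpt{}}\gp*{\frac{1}{1 - \rho}}}$. Since $X \in \Gumbel$, \cref{lem:H_Ginv_or} states that $L^{\mserpt{}} \in \ORname(-\epsilon, \epsilon)$ for all $\epsilon > 0$ — equivalently, $L^{\mserpt{}}$ is slowly varying — so for each fixed $\epsilon > 0$ we have $u^{-\epsilon} = O\gp{L^{\mserpt{}}(u)}$ in the $u \to \infty$ limit. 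Setting $u = 1/(1 - \rho)$ yields $\Qa = \Omega\gp*{\frac{1}{(1 - \rho)^{1 - \epsilon}}}$ for all $\epsilon > 0$, as needed.

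For (ii), \cref{lem:small_terms_gumbel} gives, say, $\Qb = O\gp*{\frac{1}{(1 - \rho)^{1/4}}}$, while step (i) gives $\Qa = \Omega\gp*{\frac{1}{(1 - \rho)^{3/4}}}$, so $\Qb = o(\Qa)$ and hence $\E{\waiting{\mserpt{1}}{}} = \Qa + \Qb = \Theta(\Qa)$, which is exactly the pair of displayed expressions for mean waiting time. The residence bound $\E{\residence{\mserpt{1}}{}} = \Rb + \Rc = O\gp*{\frac{1}{(1 - \rho)^\epsilon}}$ for all $\epsilon > 0$ is immediate from \cref{lem:small_terms_gumbel}. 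Since $\E{\response{\mserpt{1}}{}} = \E{\waiting{\mserpt{1}}{}} + \E{\residence{\mserpt{1}}{}}$ and the residence term is $o\gp*{\frac{1}{(1 - \rho)^{1 - \epsilon}}} = o\gp{\E{\waiting{\mserpt{1}}{}}}$, we conclude $\E{\response{\mserpt{1}}{}} = \Theta\gp{\E{\waiting{\mserpt{1}}{}}}$, which is the claimed expression. Finally, under the extra assumption $X \in \QDHR \cup \QIMRL$, \cref{lem:small_terms_gumbel} bounds both $\Sb$ and $\Sc$ by $O\gp*{\frac{1}{(1 - \rho)^\epsilon}}$ for all $\epsilon > 0$, so $\E{\sesidence{\mserpt{1}}{}} = \Sb + \Sc = O\gp*{\frac{1}{(1 - \rho)^\epsilon}}$ for all $\epsilon > 0$.

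There is essentially no new analytic content here beyond \cref{lem:big_term_nfv, lem:small_terms_gumbel} and the slow variation of $L^{\mserpt{}}$; this result is the $\Gumbel$ analogue of \cref{thm:heavy_mserpt_or_fv}, with the Matuszewska-index bound $L^{\mserpt{}} \in \OR{0}{1}$ replaced by $L^{\mserpt{}} \in \ORname(-\epsilon, \epsilon)$. The only mild obstacle is managing the several $\epsilon$'s — the lower bound on $\Qa$, the upper bounds on the small terms, and the $o(\cdot)$ comparisons each carry their own parameter — so one must instantiate them in the right order (for instance, bounding the small terms with an exponent strictly smaller than $\Qa$'s exponent deficit) to obtain the clean ``for all $\epsilon > 0$'' statements in the theorem.
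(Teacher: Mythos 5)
Your proposal is correct and takes essentially the same route as the paper. The paper's own proof is terse—it just cites \cref{lem:small_terms_gumbel, lem:big_term_nfv} and then says ``$\Qa = \Omega((1-\rho)^{-(1-\epsilon)})$ follows from \cref{eq:Qa_H_Ginv, lem:H_Ginv_or}, similarly to the proof of \cref{thm:heavy_mserpt_or_fv}''—and what you have written is exactly the expansion of that sketch: rewrite $\Qa = \Theta\gp{u L^{\mserpt{}}(u)}$ with $u = 1/(1-\rho)$, use $L^{\mserpt{}} \in \ORname(-\epsilon,\epsilon)$ from \cref{lem:H_Ginv_or} to get $u^{-\epsilon} = O(L^{\mserpt{}}(u))$, and then compare exponents to see that the small terms from \cref{lem:small_terms_gumbel} are dominated. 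One cosmetic imprecision: ``$L^{\mserpt{}} \in \ORname(-\epsilon,\epsilon)$ for all $\epsilon > 0$'' is a consequence of slow variation, not an equivalence to it (there are $O$-regularly varying functions with both Matuszewska indices equal to zero that are not slowly varying). This doesn't affect your argument, since you only use the $\ORname(-\epsilon,\epsilon)$ containment and never the stronger slow-variation property, but you should drop the word ``equivalently.''
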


\begin{proof}
  After applying \cref{lem:small_terms_gumbel, lem:big_term_nfv},
  it remains only to show $\Qa = \Omega((1 - \rho)^{-(1 - \epsilon)})$.
  This follows from \cref{eq:Qa_H_Ginv, lem:H_Ginv_or},
  similarly to the proof of \cref{thm:heavy_mserpt_or_fv}.
\end{proof}

\subsection{Extending Heavy-Traffic Analysis from \mserpt{}
  to \gittins{} and \mgittins{}}
\label{sub:heavy_mgittins}

Having characterized heavy-traffic scaling under \mserpt{},
we now do the same for \gittins{} and \mgittins{}.
Our first result shows that the mean waiting and residence times
of \gittins{} and \mgittins{} have
the same heavy-traffic scaling as that of \mserpt{}.
Note that the precondition holds for all of the job size distributions
we consider in \cref{sub:heavy_finite_variance}.\footnote{%
  With some extra effort, one can show it also holds for $X \in \OR{1}{2}$.}

\begin{theorem}
  \label{thm:heavy_mgittins_response}
  In the $\rho \to 1$ limit,
  \begin{align*}
    \E{\residence{\gittins{1}}{}}, \E{\residence{\mgittins{1}}{}}
    = O(\E{\residence{\mserpt{1}}{}}),
  \end{align*}
  and if $\E{\residence{\mserpt{1}}{}} = O(\E{\waiting{\mserpt{1}}{}})$, then
  \begin{align*}
    \E{\waiting{\gittins{1}}{}}, \E{\waiting{\mgittins{1}}{}}
    = \Theta(\E{\waiting{\mserpt{1}}{}}).
  \end{align*}
\end{theorem}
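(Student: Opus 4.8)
The plan is to prove the two claims separately: the waiting-time comparison is a short assembly of known single-server facts together with the stated hypothesis, while the residence-time comparison requires handling \gittins{}'s nonmonotonic rank function. Throughout, I use that \gittins{1} minimizes \mg{1} mean response time \citep{m/g/1_gittins_aalto, mlps_gittins_aalto, book_gittins}, the identity $\E{\response{\pi{1}}{}} = \E{\waiting{\pi{1}}{}} + \E{\residence{\pi{1}}{}}$ for monotonic SOAP policies, and the cutoff chain $\y[\mgittins{}] \leq \y[\mserpt{}] \leq \z[\mserpt{}] \leq \z[\mgittins{}]$ from \citet[Eq.~(3.8)]{m-serpt_scully}.

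For the waiting-time claim, the lower bound is immediate: \cref{eq:mgittins_gittins} gives $\E{\waiting{\mgittins{1}}{}} \leq \E{\waiting{\gittins{1}}{}}$, and \citet[Lemma~5.6]{m-serpt_scully} gives $\E{\waiting{\mserpt{1}}{}} \leq 2\E{\waiting{\mgittins{1}}{}}$, so both $\E{\waiting{\mgittins{1}}{}}$ and $\E{\waiting{\gittins{1}}{}}$ are $\Omega(\E{\waiting{\mserpt{1}}{}})$. For the matching upper bound I would use optimality of \gittins{1}:
\[
  \E{\waiting{\gittins{1}}{}}
  \leq \E{\response{\gittins{1}}{}}
  \leq \E{\response{\mserpt{1}}{}}
  = \E{\waiting{\mserpt{1}}{}} + \E{\residence{\mserpt{1}}{}}.
\]
Under the hypothesis $\E{\residence{\mserpt{1}}{}} = O(\E{\waiting{\mserpt{1}}{}})$ the right-hand side is $O(\E{\waiting{\mserpt{1}}{}})$, and since $\E{\waiting{\mgittins{1}}{}} \leq \E{\waiting{\gittins{1}}{}}$ the same bound holds for \mgittins{1}. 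Combining the two directions gives $\E{\waiting{\gittins{1}}{}}, \E{\waiting{\mgittins{1}}{}} = \Theta(\E{\waiting{\mserpt{1}}{}})$.

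For the residence-time claim, the \mgittins{1} case is elementary: \mgittins{} is monotonic, so \cref{eq:waiting_residence_monotonic} gives $\E{\residence{\mgittins{1}}{}} = \int_0^\infty \frac{x}{\coload{\y[\mgittins{}]}} \dF{x}$, and since $\y[\mgittins{}] \leq \y[\mserpt{}]$ and $\coload{}$ is nonincreasing we get $\coload{\y[\mgittins{}]} \geq \coload{\y[\mserpt{}]}$, hence $\E{\residence{\mgittins{1}}{}} \leq \E{\residence{\mserpt{1}}{}}$. The \gittins{1} case is the main obstacle, because \cref{eq:waiting_residence_monotonic} does not apply to a nonmonotonic rank function: in the general SOAP residence-time expression of \citet{soap_scully}, a new job stays relevant to a size-$x$ tagged job until its rank \emph{permanently} exceeds the tagged job's worst future rank, and under \gittins{} the rank can dip back below that threshold after rising above it, so the effective new-job age need not be bounded by $\y[\gittins{}]$ or even by $\z[\mgittins{}]$. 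The plan here is to bound \gittins{}'s size-conditional residence time using that a size-$x$ tagged job under \gittins{} has worst future rank at most $\rank{\mgittins{}}{x} = \rank{\gittins{}}{\z[\mgittins{}]}$ (\cref{lem:rank_yz}) together with the rank-function bounds of \cref{sec:rank_bounds}, showing the total new relevant work is $O(\frac{x}{\coload{\y[\mserpt{}]}})$ up to lower-order terms, and then integrating over $x$; alternatively, this residence comparison can be inherited directly from \citet{m-serpt_scully}, where an analogous bound on $\E{\residence{\mserpt{1}}{}}$ versus $\E{\residence{\gittins{1}}{}}$ is established as an ingredient of the \mserpt{} approximation-ratio proof. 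I expect this nonmonotonic residence-time analysis — controlling "dip-back" contributions of new jobs under \gittins{} — to be the technically hardest step; the waiting-time part and the \mgittins{} residence part are routine given the earlier results.
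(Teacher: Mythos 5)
Your proposal is essentially correct and follows the same route as the paper. The waiting-time argument is identical: the lower bound from \cref{eq:mgittins_gittins} plus \citet[Lemma~5.6]{m-serpt_scully}, the upper bound via Gittins's \mg{1} optimality and the hypothesis $\E{\residence{\mserpt{1}}{}} = O(\E{\waiting{\mserpt{1}}{}})$. Your direct argument for the \mgittins{} residence bound (monotonicity of $\coload{}$ plus $\y[\mgittins{}] \leq \y[\mserpt{}]$) is correct and is exactly what the paper's citation to \citet[Eq.~(3.8) and Proposition~4.8]{m-serpt_scully} amounts to.

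The one place you go astray is in treating the Gittins residence comparison as the ``main obstacle'' because of supposed ``dip-back'' contributions. This concern is unfounded. Consider a new job $J'$ at the moment its rank $\rank{\gittins{}}{a'}$ first reaches the tagged job's worst future rank $\worst{\gittins{}}{x}(a) = \max_{a \le b \le x} \rank{\gittins{}}{b}$. From that moment on, $J'$'s rank is at least the tagged job's worst future rank, which in turn is at least the tagged job's current rank; so by the FCFS tiebreak the tagged job is prioritized over $J'$, $J'$ receives no further service, and $J'$'s rank stays frozen. Meanwhile $\worst{\gittins{}}{x}(a)$ is nonincreasing in $a$. So $J'$ cannot dip back below the threshold, and a new job's relevant age under \gittins{} is bounded by the first age at which $\rank{\gittins{}}{}$ reaches $\worst{\gittins{}}{x}(0) = \rank{\mgittins{}}{x}$, which is exactly $\y[\mgittins{}]$. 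This gives the intuition for $\E{\residence{\gittins{1}}{}} \le \E{\residence{\mgittins{1}}{}}$, which the paper obtains as a direct citation to \citet[Eq.~(3.8) and Proposition~4.8]{m-serpt_scully}. You do correctly flag this citation as an ``alternative,'' but it is in fact the straightforward resolution; no new nonmonotonic residence analysis is needed here.
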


\begin{proof}
  The residence time result follows immediately from
  results of \citet[Eq.~(3.8) and Proposition~4.8]{m-serpt_scully},
  which imply
  \begin{align*}
    \E{\residence{\gittins{1}}{}}
    \leq \E{\residence{\mgittins{1}}{}}
    \leq \E{\residence{\mserpt{1}}{}}.
  \end{align*}
  For waiting time, we first invoke further results of
  \citet[Proposition~4.7 and Lemma~5.6]{m-serpt_scully},
  which imply
  \begin{align*}
    \E{\waiting{\gittins{1}}{}}
    \geq \E{\waiting{\mgittins{1}}{}}
    \geq \ifshort{\slashfrac}{\frac}{\E{\waiting{\mserpt{1}}{}}}{2}.
  \end{align*}
  It thus suffices to show
  $\E{\waiting{\gittins{1}}{}} = O(\E{\waiting{\mserpt{1}}{}})$.
  Because Gittins minimizes mean response time
  \citep{m/g/1_gittins_aalto, mlps_gittins_aalto, book_gittins},
  we have
  \begin{align*}
    \E{\waiting{\gittins{1}}{}}
    \leq \E{\response{\gittins{1}}{}}
    \leq \E{\response{\mserpt{1}}{}}
    = \E{\waiting{\mserpt{1}}{}} + \E{\residence{\mserpt{1}}{}},
  \end{align*}
  so the result follows from
  the $\E{\residence{\mserpt{1}}{}} = O(\E{\waiting{\mserpt{1}}{}})$ precondition.
\end{proof}

Our final heavy-traffic result shows that
for certain job size distributions, under \mgittins{},
mean waiting time dominates mean inflated residence time.
The conditions are the same as those shown for \mserpt{}
over the course of \cref{sub:heavy_finite_variance},
except $\QDHR \cup \QIMRL$ is replaced by $\QDHR$.

\begin{theorem}
  \label{thm:heavy_mgittins_sesidence}
  If
  \ifshort{$X$ is in $\OR{2}{\infty}$, $\Gumbel \cap \QDHR$, or $\Bounded$,}{\begin{align*}
    X \in \OR{2}{\infty} \cup (\Gumbel \cap \QDHR) \cup \Bounded,
  \end{align*}}
  then \ifshort{}{in the $\rho \to 1$ limit,
  \begin{align*}
    \E{\sesidence{\mgittins{1}}{}} = o(\E{\waiting{\mgittins{1}}{}}).
  \end{align*}
  More specifically, }$\E{\sesidence{\mgittins{1}}{}}$ obeys
  the same scaling bounds as shown for $\E{\sesidence{\mserpt{1}}{}}$
  in \cref{thm:heavy_mserpt_or_fv, thm:heavy_mserpt_gumbel, thm:heavy_mserpt_enbue}.
\end{theorem}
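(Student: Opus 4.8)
The plan is to prove \cref{thm:heavy_mgittins_sesidence} by reducing the bound on $\E{\sesidence{\mgittins{1}}{}}$ to the corresponding bound on $\E{\sesidence{\mserpt{1}}{}}$, which is already established in \cref{thm:heavy_mserpt_or_fv, thm:heavy_mserpt_gumbel, thm:heavy_mserpt_enbue}. Recall from \cref{eq:sesidence} that
\begin{align*}
  \E{\sesidence{\generic{1}}{}}
  = \int_0^\infty \frac{\z[\generic{}]}{\coload{\y[\generic{}]}} \dF{x},
\end{align*}
so the two key quantities governing $\E{\sesidence{\mgittins{1}}{}}$ are the old job age cutoff $\z[\mgittins{}]$ appearing in the numerator and the new job age cutoff $\y[\mgittins{}]$ appearing, via $\coload{}$, in the denominator. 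The crucial structural fact I would invoke is the sandwich inequality $\y[\mgittins{}] \leq \y[\mserpt{}] \leq \z[\mserpt{}] \leq \z[\mgittins{}]$ from \citet[Eq.~(3.8)]{m-serpt_scully} (also used in the proof of \cref{thm:yz_bound_mserpt_qimrl}). This tells us $\coload{\y[\mgittins{}]} \leq \coload{\y[\mserpt{}]}$, since $\coload{}$ is nonincreasing, and hence the denominator is only smaller for \mgittins{}, which goes the wrong way. So the reduction is not quite immediate: I cannot simply bound $\E{\sesidence{\mgittins{1}}{}}$ by $\E{\sesidence{\mserpt{1}}{}}$ term-by-term.

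Instead, I would mimic the structure of \cref{sec:heavy}: express $\E{\sesidence{\mgittins{1}}{}} = \Sb[\mgittins{}] + \Sc[\mgittins{}]$ using the key \mg{1} response time quantities, and bound $\Sb[\mgittins{}]$ and $\Sc[\mgittins{}]$ directly, reusing the $\eta$-based machinery of \cref{sec:rank_bounds} but now for the \mgittins{} age cutoffs. The case analysis would proceed as in \cref{sub:heavy_finite_variance}. For $X \in \OR{2}{\infty}$, \cref{thm:yz_bound_mgittins_or} gives $\y[\mgittins{}], \z[\mgittins{}] = \Theta(x)$, exactly as \cref{thm:yz_bound_mserpt_or} does for \mserpt{}, so $\F{\y[\mgittins{}]} = \Theta(\F{x})$ and the computation from \cref{lem:small_terms_or} carries over verbatim with \mserpt{} replaced by \mgittins{}, yielding $\E{\sesidence{\mgittins{1}}{}} = O(-\log(1 - \rho))$. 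For $X \in \Gumbel \cap \QDHR$, \cref{thm:yz_bound_mgittins_qdhr} supplies $\y[\mgittins{}] = \Omega(x^{1/\gamma})$ and $\z[\mgittins{}] = O(x^\gamma)$ for some $\gamma \geq 1$ --- the same polynomial bounds \cref{thm:yz_bound_mserpt_qimrl} gives for \mserpt{} --- so the bounds on $\Sb$ and $\Sc$ in the proof of \cref{lem:small_terms_gumbel} (which use only \cref{eq:Hx_bound}, \cref{eq:H1yz_rank} adapted to \mgittins{} via \cref{lem:rank_yz}, \cref{lem:F_gumbel}, and these polynomial cutoff bounds) go through to give $O((1 - \rho)^{-\epsilon})$ for all $\epsilon > 0$. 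For $X \in \Bounded$, the argument in the proof of \cref{thm:heavy_mserpt_enbue} used only the fact that $\rank{\mserpt{}}{}$ becomes constant past some age $a_*$ with $\z \leq x_{\max} < \infty$ thereafter; since $\rank{\mgittins{}}{}$ is likewise monotonic and $X \in \ENBUE$ forces it to plateau, the identical computation yields $\E{\sesidence{\mgittins{1}}{}} = \Theta(1)$.

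The main obstacle I anticipate is the $\Gumbel \cap \QDHR$ case, specifically verifying that $\H[1]{\y[\mgittins{}]} = \H[1]{\z[\mgittins{}]} = O(1)$, which is the \mgittins{} analogue of \cref{eq:H1yz_rank}. For \mserpt{} this followed from $\H[1]{x} = \E{X}/\rank{\serpt{}}{x}$ together with \cref{lem:rank_yz} and monotonicity of $\rank{\mserpt{}}{}$. For \mgittins{} the same reasoning applies --- \cref{lem:rank_yz} equates $\rank{\gittins{}}{\y[\mgittins{}]} = \rank{\gittins{}}{\z[\mgittins{}]} = \rank{\mgittins{}}{x}$ --- but $\H[1]{x}$ is expressed via $\rank{\serpt{}}{x}$, not $\rank{\gittins{}}{x}$, so I need $\rank{\gittins{}}{\y[\mgittins{}]} = \Theta(\rank{\serpt{}}{\y[\mgittins{}]})$ or some comparable relation. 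This is where I would lean on \cref{eq:rank_eta} together with the bound $\rank{\mgittins{}}{a} \leq \rank{\mserpt{}}{a}$ (used already in the proof of \cref{thm:yz_bound_mgittins_or}) for the easy direction, and on the $\QDHR$ hazard-rate control plus \cref{lem:eta_bound_hard} for the matching lower bound, much as in the proof of \cref{thm:yz_bound_mgittins_qdhr}. Once $\H[1]{\y[\mgittins{}]}, \H[1]{\z[\mgittins{}]} = O(1)$ is in hand, the remaining integral estimates are routine adaptations of \cref{lem:small_terms_gumbel}. Finally, the clause ``$\E{\sesidence{\mgittins{1}}{}} = o(\E{\waiting{\mgittins{1}}{}})$'' follows by combining these bounds with \cref{thm:heavy_mgittins_response} and the heavy-traffic lower bounds on $\E{\waiting{\mserpt{1}}{}}$ from \cref{thm:heavy_mserpt_or_fv, thm:heavy_mserpt_gumbel, thm:heavy_mserpt_enbue}, exactly as the main theorems do.
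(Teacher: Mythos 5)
Your proposal follows the same route as the paper's proof: substitute the \mgittins{} cutoff bounds from \cref{thm:yz_bound_mgittins_or, thm:yz_bound_mgittins_qdhr} into the \mserpt{} arguments of \cref{lem:small_terms_or, lem:small_terms_gumbel, thm:heavy_mserpt_enbue}, and for the $\Bounded$ case justify the plateau age $a_*$ of $\rank{\mgittins{}}{}$ via an $\ENBUE$ argument (the paper cites \citet[Proposition~9]{mlps_gittins_aalto} for this). Your instinct to check the \mgittins{} analogue of \cref{eq:H1yz_rank} is a good one, since that identity is stated for \mserpt{} and is the one place where the substitution is not purely mechanical, but you overcomplicate the fix.

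The bound you actually need is $\H[1]{\y[\mgittins{}]} = \E{X}/\rank{\serpt{}}{\y[\mgittins{}]} = O(1)$ (and likewise for $\z[\mgittins{}]$), i.e.\ that $\rank{\serpt{}}{\y[\mgittins{}]}$ is bounded away from $0$. That follows in one line from the pointwise inequality $\rank{\gittins{}}{a} \leq \rank{\serpt{}}{a}$ (immediate from \cref{eq:rank_eta}: the Gittins rank is a minimum over stopping ages $b$, while SERPT is the single choice $b = \infty$): combining it with \cref{lem:rank_yz} for \mgittins{} and monotonicity gives
\begin{align*}
  \rank{\serpt{}}{\y[\mgittins{}]}
  \geq \rank{\gittins{}}{\y[\mgittins{}]}
  = \rank{\mgittins{}}{x}
  \geq \rank{\mgittins{}}{0}
  > 0,
\end{align*}
hence $\H[1]{\y[\mgittins{}]} = O(1)$ (and the same chain works with $\z[\mgittins{}]$ in place of $\y[\mgittins{}]$). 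The inequality $\rank{\mgittins{}}{a} \leq \rank{\mserpt{}}{a}$ you invoke is the envelope version of the wrong comparison for this purpose, and the ``matching lower bound'' via $\QDHR$ hazard-rate control and \cref{lem:eta_bound_hard} is unnecessary: you only need the one-sided bound $\H[1]{\y[\mgittins{}]} = O(1)$, not a two-sided $\Theta$ comparison between $\rank{\gittins{}}{}$ and $\rank{\serpt{}}{}$. Note also that in the $\OR{2}{\infty}$ case this detour is moot, since there $\rank{\serpt{}}{a} = \Theta(a)$ directly bounds $\H[1]{\y[\mgittins{}]} = \Theta(1/x)$ without any Gittins--SERPT comparison.
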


\begin{proof}
  The proof is very similar to the proofs of analogous results for \mserpt{}
  (\cref{thm:heavy_mserpt_or_fv, thm:heavy_mserpt_gumbel, thm:heavy_mserpt_enbue}),
  so we just describe the differences.
  \begin{itemize}
  \item
    If $X \in \OR{2}{\infty}$,
    we follow the same proof as \cref{thm:heavy_mserpt_or_fv}
    and the lemmas it requires,
    except we use \cref{thm:yz_bound_mgittins_or} to bound
    $\y[\mgittins{}]$ and $\z[\mgittins{}]$.
  \item
    If $X \in \Gumbel \cap \QDHR$,
    we follow the same proof as \cref{thm:heavy_mserpt_gumbel}
    and the lemmas it requires,
    except we use \cref{thm:yz_bound_mgittins_qdhr} to bound
    $\y[\mgittins{}]$ and $\z[\mgittins{}]$.
  \item
    If $X \in \Bounded$,
    we follow the same proof as \cref{thm:heavy_mserpt_enbue},
    except we use a result of \citet[Proposition~9]{mlps_gittins_aalto}
    to justify the existence of the critical age~$a_*$.
    \qedhere
  \end{itemize}
\end{proof}

\section{Conclusion}
\label{sec:conclusion}

We study optimal scheduling in the \mg{k} to minimize mean response time.
This problem is solved by the Gittins policy for the single-server $k = 1$ case
but was previously open for the much more difficult multiserver case.
We introduce a new variant of Gittins called \emph{\mgittins{}}
(\cref{def:mgittins})
and show that it minimizes mean response time in the heavy-traffic \mg{k}
for a large class of \ifshort{}{finite-variance }job size distributions
(\cref{thm:mgittins_opt}).
We also show that the simple and practical \mserpt{} policy
is a $2$\=/approximation for mean response time in the heavy-traffic \mg{k}
under similar conditions
(\cref{thm:mserpt_approx}).
As a byproduct of our \mg{k} study,
we obtain results characterizing the heavy-traffic scaling of \mg{1} mean response time
under Gittins, \mgittins{}, and \mserpt{}
(\cref{thm:heavy}).

A natural question to ask is whether the conditions under which we prove
\mgittins{}'s optimality can be relaxed,
particularly the $\QDHR$ and $\Bounded$ assumptions.
The difficulty lies in the fact that
for some job size distributions,
the bound in \cref{thm:mgk_response} is not strong enough
because inflated residence time is infinite.
It is possible that the techniques used by
\citet{fcfs_heavy_dist_kollerstrom, fcfs_heavy_mean_kollerstrom}
to analyze the heavy-traffic \mg{k} under FCFS could be helpful,
seeing as FCFS has infinite inflated residence time.

Another major open question is analyzing the performance of \mgittins{}
outside of the heavy-traffic limit.
In the single-server case,
one can generalize the techniques of \citet{m-serpt_scully}
to show that \mgittins{} is a $3$\=/approximation for \mg{1} mean response time
at all loads.
However, the multiserver case remains open.

\section*{Acknowledgments}

This work was supported by NSF grants
CMMI\=/1938909, XPS\=/1629444, and CSR\=/1763701;
and a Google 2020 Faculty Research Award.

\ifacmart{}{\section*{References}}

\bibliographystyle{ACM-Reference-Format}
\bibliography{refs}

\appendix
\renewcommand{\thesection}{\Alph{section}}

\ifshort{}{%
\section{Difficulty of \mg{k} Analysis for Nonmonotonic Rank Functions}
\label{app:mgk_nonmonotonic}

In this appendix we explain why \cref{thm:mgk_response}
does not readily generalize to SOAP policies with nonmonotonic rank functions.

Recall that the proof of \cref{thm:mgk_response} considers
a tagged job~$J$ of size~$x$
and considers several categories of work
completed while $J$ is in the system.
Our focus here is on relevant work,
which is work on jobs that are prioritized ahead of~$J$.
Let $\arel$ be the maximum age at which a new job,
namely one that arrives after~$J$,
can contribute relevant work under \generic{k}.
When \generic{} is monotonic, $\arel$ does not depend on the number of servers~$k$.
Specifically, we have $\arel = \y[\generic{}]$.
The problem for nonmonotonic SOAP policies~$\generic{}$ is that, as we show below,
we can have $\arel > \arel[1]$ when $k \geq 2$.

The following discussion
uses definitions of $\y[\generic{}]$ and $\z[\generic{}]$
generalized to all SOAP policies~\generic{}.
\begin{itemize}
\item
  If \generic{} is monotonic,
  then $\y[\generic{}]$ and $\z[\generic{}]$ are given by \cref{def:yz}.
\item
  If \generic{} is nonmonotonic, we can define $\y[\generic{}]$ and $\z[\generic{}]$
  in terms of a monotonic SOAP policy related to \generic{} \citep{m-serpt_scully}.
  Specifically, letting \mgeneric{} be the monotonic SOAP policy with rank function
  \begin{align*}
    \rank{\mgeneric{}}{a} = \max_{b \in [0, a]} \rank{\generic{}}{b},
  \end{align*}
  we define $\y[\generic{}] = \y[\mgeneric{}]$ and $\z[\generic{}] = \z[\mgeneric{}]$.
\end{itemize}

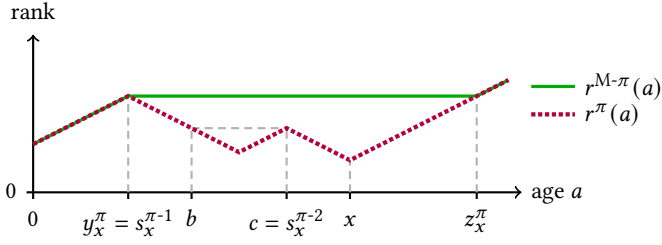
\begin{figure}
  \centering
  \begin{tikzpicture}[figure]
  \axes{15}{9}{$0$}{age~$a$}{$0$}{rank}

  \xguide[{$\y[\generic{}] = \arel[1]$}]{3}{6}
  \xguide[$b$]{5}{4}
  \xguide[{$c = \arel[2]$}]{8}{4}
  \xguide[$x$]{10}{2}
  \xguide[{$\z[\generic{}]$}]{14}{6}
  \draw[guide] (5, 4) -- (8, 4);

  \draw[primary]
  (0, 3) -- (3, 6) -- (14, 6) -- (15, 7);

  \draw[secondary]
  (0, 3) -- (3, 6) -- (6.5, 2.5) -- (8, 4) -- (10, 2) -- (15, 7);

  \iftwocol{\newcommand{\xlegend}{16.9}}{\newcommand{\xlegend}{17.1}}
  \draw[primary] (\xlegend, 6.6) ++(-1.38, 0) -- ++(1.38, 0);
  \node[right] at (\xlegend, 6.6) {$\rank{\mgeneric{}}{a}$};
  \draw[secondary] (\xlegend, 4.86) ++(-1.38, 0) -- ++(1.38, 0);
  \node[right] at (\xlegend, 4.86) {$\rank{\generic{}}{a}$};
\end{tikzpicture}

  \caption{Age Cutoffs for Nonmonotonic Rank Functions}
  \label{fig:yz_nonmonotonic}
\end{figure}

Consider the example SOAP policy~\generic{} and tagged job size~$x$
shown in \cref{fig:yz_nonmonotonic}.
In the single-server $k = 1$ case, we have $\arel[1] = \y[\generic{}]$.
To see why, consider the moment a new job $J'$ reaches age~$\y[\generic{}]$
while the tagged job~$J$ is still in the system.
For this to occur, it must be that $J$ is also at age~$\y[\generic{}]$,
because otherwise $J$ would have priority over~$J'$.
With both $J$ and $J'$ at the same rank, the FCFS tiebreaker prioritizes~$J$.
Thereafter, $J$ never has rank worse than $\rank{\generic{}}{\y[\generic{}]}$,
so $J'$ remains stuck at age~$\y[\generic{}]$
and is never prioritized over~$J$.

We now reconsider the same example from \cref{fig:yz_nonmonotonic}
but with $k \geq 2$ servers.
The key difference is that because there are multiple servers,
$J'$ can receive service even while $J$ has better rank
because $J$ and $J'$ can occupy different servers simultaneously.
This means $J'$ no longer gets stuck at age~$\y[\generic{}]$.
In particular, if $J$ reaches age~$c$ and $J'$ passes age~$b$,
then $J'$ contributes relevant work between ages $b$ and~$c$.
Therefore, $\arel = c > \arel[1]$ for $k \geq 2$.

The bound in \cref{thm:mgk_response} follows from assuming that
every new job~$J'$ will contribute relevant work
until it completes or reaches age~$\arel$.
This is a worst-case estimate,
because the tagged job~$J$ might complete
before $J'$ completes or reaches age~$\arel$.
When \generic{} is monotonic, we have $\arel = \arel[1]$,
so this overestimate is tight enough
to compare the mean response times under \generic{k} and \generic{1}.
However, when \generic{} is nonmonotonic,
it may be that $\arel > \arel[1]$, as explained above,
so we do not obtain a tight comparison between the \generic{k} and \generic{1} systems.
This suggests generalizing \cref{thm:mgk_response} to nonmonotonic SOAP policies
requires not relying as heavily on worst-case quantities like~$\arel$.
}

\section{New Formulas for Mean Waiting and Residence Times}
\label{app:new_formulas}

In this appendix we prove the following new formulas
for mean waiting, residence, and inflated residence times.

\begin{theorem}
  \label{thm:waiting_monotonic}
  Under any monotonic SOAP policy~\generic{},
  \begin{align*}
    \E{\waiting{\generic{1}}{}}
    = \int_0^\infty \gp[\Bigg]{
        \gp[\bigg]{
          \frac{\F{\y[\generic{}]}}{\coload{\y[\generic{}]}} + \frac{\F{\z[\generic{}]}}{\coload{\z[\generic{}]}}
        } \frac{\lambda \excess{\z[\generic{}]} \F{x}}{\coload{x}^2}
        + \frac{\lambda x \F{\y[\generic{}]} \F{x}}{\coload{\y[\generic{}]}^2}
      } \d{x}.
  \end{align*}
\end{theorem}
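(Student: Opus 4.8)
The plan is to derive the formula from \cref{eq:waiting_residence_monotonic}, which says $\E{\waiting{\generic{1}}{}} = \int_0^\infty G(x)\,\dF{x}$ with $G(x) = \excess{\z[\generic{}]}/(\coload{\y[\generic{}]}\coload{\z[\generic{}]})$, by converting this Stieltjes integral against the size distribution into an ordinary integral against $\d{x}$ via integration by parts. Expanding $\H{x} = \F{x}/\coload{x}$, the claimed right-hand side is exactly $\Qa[\generic{}] + \Qb[\generic{}]$, so the entire proof reduces to showing $\int_0^\infty G(x)\,\dF{x} = \Qa[\generic{}] + \Qb[\generic{}]$.

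First I would pin down the structure of $G$ as a function of $x$. Since \generic{} is monotonic with a continuous, piecewise-monotonic rank function (\cref{lem:rank_continuous}), on any compact interval the rank function consists of finitely many maximal plateaus interspersed with strictly increasing arcs. On a strictly increasing arc we have $\y[\generic{}] = x = \z[\generic{}]$ by \cref{def:yz}, so there $G(x) = \excess{x}/\coload{x}^2$, which is $C^1$; using that $\coload{a}$ has derivative $-\lambda\F{a}$ and $\excess{a}$ has derivative $\lambda a\F{a}$, a one-line computation shows $G'(x)\F{x}$ equals the sum of the $\Qa[\generic{}]$ and $\Qb[\generic{}]$ integrands on that arc. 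On a plateau $(a,b)$ we have $\y[\generic{}]=a$ and $\z[\generic{}]=b$ for every $x$ in the plateau, so $G$ is constant there, but $\z[\generic{}]$ jumps from $a$ up to $b$ at the left endpoint and $\y[\generic{}]$ jumps from $a$ up to $b$ at the right endpoint, so $G$ has an upward jump at each plateau endpoint. Hence $G$ is nondecreasing and piecewise-$C^1$ with only jump discontinuities, so it is a legitimate Stieltjes integrator whose measure $\d{G}$ splits into the absolutely continuous part $G'(x)\,\d{x}$ on the increasing arcs plus point masses at the plateau endpoints.

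Then, since the size distribution $F$ is continuous, Stieltjes integration by parts gives $\int_{[0,\infty)} G\,\dF{} = G(0) - \lim_{x\to\infty} G(x)\F{x} + \int_{(0,\infty)} \F{x}\,\d{G}(x)$. The limit term is $0$ whenever $\E{\waiting{\generic{1}}{}}$ is finite (and both sides of the claimed identity are $+\infty$ otherwise), and the term $G(0)$ is nonzero only when the rank function has an initial plateau, in which case it combines with that plateau's contribution. It remains to check that $G(0) + \int_{(0,\infty)}\F{x}\,\d{G}(x)$ equals $\int_0^\infty$ of the $\Qa[\generic{}]+\Qb[\generic{}]$ integrand. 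The absolutely continuous piece already matches on the increasing arcs by the computation above, so the content is a single per-plateau identity: for a plateau $(a,b)$, the two point masses $\F{a}\,\Delta G(a) + \F{b}\,\Delta G(b)$ must equal $\int_a^b$ of the $\Qa[\generic{}]+\Qb[\generic{}]$ integrand. Evaluating the latter with $\int_a^b \lambda\F{x}/\coload{x}^2\,\d{x} = 1/\coload{b} - 1/\coload{a}$ and $\int_a^b \lambda x\F{x}\,\d{x} = \excess{b}-\excess{a}$, both sides collapse after routine algebra to $\F{a}\excess{b}/(\coload{a}\coload{b}) - \F{a}\excess{a}/\coload{a}^2 + (\F{b}\excess{b}/\coload{b})(1/\coload{b}-1/\coload{a})$, and the same computation with $a=0$ absorbs the boundary term. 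Summing over all arcs and plateaus (the series converging whenever the left side is finite, since $G$ is monotone) completes the proof; the companion formulas \cref{thm:residence_monotonic,thm:sesidence_monotonic} follow by applying the identical manipulation to \cref{eq:waiting_residence_monotonic,eq:sesidence}.

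I expect the main obstacle to be the bookkeeping around the non-smoothness of $G$: correctly identifying that $G$ jumps at \emph{both} endpoints of each plateau (one jump from $\z[\generic{}]$, one from $\y[\generic{}]$), computing the jump sizes, and making sure the boundary term $G(0)$ and the topmost plateau — which may run to $\infty$, as for $\ENBUE$ distributions — are handled without dropping or double-counting a contribution. The per-plateau algebraic identity is short once the jumps are in hand, but getting the Stieltjes decomposition of $\d{G}$ exactly right is where care is needed.
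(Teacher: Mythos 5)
Your proposal is correct. The mechanism is the same as the paper's — integrate by parts, split the domain into hills (where $\y=x=\z$) and valleys (where $\y,\z$ are the plateau endpoints), and account separately for the absolutely continuous part on hills and the jump contributions at valley endpoints. The difference is in packaging. The paper develops a small calculus around a ``difference ratio'' operator $\Diff$ (\cref{def:Diff}), proves a general hill-valley integration-by-parts lemma (\cref{lem:yz_integrate_by_parts}) for arbitrary $\Phi(\y,x,\z)$, and then derives \cref{thm:waiting_monotonic,thm:residence_monotonic,thm:sesidence_monotonic} as three short applications of that lemma (plus the chain-rule analogues \cref{lem:Diff_split,lem:yz_Diff_diff,lem:yz_Diff_diff_chain}). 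You instead apply classical Lebesgue--Stieltjes integration by parts to $\int G\,\dF{x}$ with $G(x) = \excess{\z}/(\coload{\y}\coload{\z})$, identify the jump measure of $G$ at both endpoints of each valley, and verify the resulting per-plateau algebraic identity directly. Your route is more elementary and self-contained for this one theorem; the paper's abstraction is heavier up front but amortizes over all three formulas, and the $\Diff$ operator lets one treat the jump and the smooth part uniformly without case analysis. One small thing worth stating more explicitly in your write-up: $G$ jumps at both endpoints of each valley precisely because $\z$ jumps up at the \emph{left} endpoint (from $u$ to $v$) while $\y$ jumps up at the \emph{right} endpoint (from $u$ to $v$), so the two jumps have different algebraic form even though both are upward — you do note this in passing, but it's the step where a sign or index slip would be easiest to make.
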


\begin{theorem}
  \label{thm:residence_monotonic}
  Under any monotonic SOAP policy~\generic{},
  \begin{align*}
    \E{\residence{\generic{1}}{}}
    = \int_0^\infty \gp[\bigg]{
        \frac{\lambda \z[\generic{}] \F{x} \F{\z[\generic{}]}}{\coload{\y[\generic{}]} \coload{\z[\generic{}]}}
        + \frac{\F{x}}{\coload{\y[\generic{}]}}
      } \d{x}.
  \end{align*}
\end{theorem}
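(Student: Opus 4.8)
The plan is to start from the closed‑form expression for mean residence time already recorded in \cref{eq:waiting_residence_monotonic},
\[
  \E{\residence{\generic{1}}{}} = \int_0^\infty \frac{x}{\coload{\y[\generic{}]}}\,\dF{x},
\]
which is a result of \citet[Propositions~4.7 and~4.8]{m-serpt_scully} (and also follows from the busy‑period representation $\residence{\generic{1}}{x} =_\st B_{\ol{\y}}(x)$ together with $\E{B_{\ol{\y}}(x)} = x/\coload{\y[\generic{}]}$, since the transformed \mg{1} has load $\lambda\E{\min\{X,\y[\generic{}]\}} = 1 - \coload{\y[\generic{}]}$ by \cref{eq:coload_excess}), and to transform the $\dF{x}$ into $\d{x}$. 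One checks directly, using $\H{x} = \F{x}/\coload{x}$, that the right‑hand side of \cref{thm:residence_monotonic} equals $\Rb[\generic{}] + \Rc[\generic{}]$, so this argument simultaneously proves $\E{\residence{\generic{1}}{}} = \Rb[\generic{}] + \Rc[\generic{}]$.

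The main tool is Stieltjes integration by parts. Since $X$ has a continuous distribution, $\F{}$ is continuous, and since $\coload{\y[\generic{}]} \ge 1 - \rho > 0$ while $x\F{x} \to 0$ (as $\E{X} < \infty$), the boundary terms vanish and
\[
  \E{\residence{\generic{1}}{}}
  = -\int_0^\infty \frac{x}{\coload{\y[\generic{}]}}\,\d\F{x}
  = \int_0^\infty \F{x}\,\d\!\left(\frac{x}{\coload{\y[\generic{}]}}\right).
\]
Because $x \mapsto x$ is continuous, the Stieltjes product rule gives $\d\!\left(x/\coload{\y[\generic{}]}\right) = \coload{\y[\generic{}]}^{-1}\,\d{x} + x\,\d\!\left(\coload{\y[\generic{}]}^{-1}\right)$, so the first piece contributes exactly $\int_0^\infty \F{x}/\coload{\y[\generic{}]}\,\d{x} = \Rc[\generic{}]$, and it remains to show
\[
  \int_0^\infty x\F{x}\,\d\!\left(\frac{1}{\coload{\y[\generic{}]}}\right)
  = \int_0^\infty \frac{\lambda\z[\generic{}]\F{\z[\generic{}]}\F{x}}{\coload{\y[\generic{}]}\coload{\z[\generic{}]}}\,\d{x}
  = \Rb[\generic{}].
\]

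To evaluate the left‑hand side I would use the structure of monotonic rank functions. Since $\rank{\generic{}}{}$ is nondecreasing and piecewise‑monotonic (\cref{lem:rank_continuous}), $[0,\infty)$ decomposes, with only finitely many pieces in any bounded interval, into strictly increasing regions, on which $\y[\generic{}] = x = \z[\generic{}]$, and maximal constant regions $[a_0,b_0)$, on which $\y[\generic{}] \equiv a_0$ and $\z[\generic{}] \equiv b_0$; moreover $\y[\generic{}]$ is nondecreasing in $x$, hence $1/\coload{\y[\generic{}]}$ is a bounded nondecreasing function of $x$, so $\d\!\left(1/\coload{\y[\generic{}]}\right)$ is a genuine positive measure. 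On an increasing region $1/\coload{\y[\generic{}]} = 1/\coload{x}$ is absolutely continuous with derivative $\lambda\F{x}/\coload{x}^2$ (using that $\coload{a}$ has derivative $-\lambda\F{a}$ by \cref{eq:coload_excess}), so there the integrand is $\lambda x\F{x}^2/\coload{x}^2$, which is exactly $\lambda\z[\generic{}]\F{\z[\generic{}]}\F{x}/(\coload{\y[\generic{}]}\coload{\z[\generic{}]})$ since $\y[\generic{}] = x = \z[\generic{}]$ there. On a constant region the function is flat, contributing nothing in the interior; its only discontinuity is at the right endpoint $b_0$, where $\y[\generic{}]$ jumps from $a_0$ up to $b_0$ (one checks there is no jump at the left endpoint), producing an atom of mass $1/\coload{b_0} - 1/\coload{a_0} = \lambda\int_{a_0}^{b_0}\F{t}\,\d{t}\,/(\coload{a_0}\coload{b_0})$. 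Weighted by the value of $x\F{x}$ at $x = b_0$, this atom contributes $b_0\F{b_0}\cdot\lambda\int_{a_0}^{b_0}\F{t}\,\d{t}\,/(\coload{a_0}\coload{b_0}) = \int_{a_0}^{b_0}\lambda b_0\F{b_0}\F{x}/(\coload{a_0}\coload{b_0})\,\d{x}$, which is precisely the integral of the $\Rb[\generic{}]$ integrand over that constant region (where $\y[\generic{}] = a_0$, $\z[\generic{}] = b_0$). Summing these contributions over all increasing and constant regions yields $\Rb[\generic{}]$, completing the proof; the Gittins and \mserpt{} cases are immediate instances, as both have monotonic rank functions.

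The step I expect to be the main obstacle is the Stieltjes/atom bookkeeping at the right endpoints of the constant regions: identifying the jump of $\y[\generic{}]$ there, computing the corresponding atom of $\d\!\left(1/\coload{\y[\generic{}]}\right)$, and observing that the identity $\coload{a_0} - \coload{b_0} = \lambda\int_{a_0}^{b_0}\F{t}\,\d{t}$ is exactly what makes that point mass reconstitute the ``spread‑out'' integral $\int_{a_0}^{b_0}\lambda b_0\F{b_0}\F{x}/(\coload{a_0}\coload{b_0})\,\d{x}$. Secondary technicalities are handling countably many constant regions that may accumulate (which is routine given the monotonicity of $1/\coload{\y[\generic{}]}$ and dominated convergence) and the bounded‑support case, where the last region may be constant out to $x_{\max}$; these are routine once the single‑region identity is in hand. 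The analogous proof of \cref{thm:waiting_monotonic} would proceed identically, applied to $\int_0^\infty \frac{\excess{\z[\generic{}]}}{\coload{\y[\generic{}]}\coload{\z[\generic{}]}}\,\dF{x}$ from \cref{eq:waiting_residence_monotonic}, and \cref{thm:sesidence_monotonic} to $\int_0^\infty \frac{\z[\generic{}]}{\coload{\y[\generic{}]}}\,\dF{x}$.
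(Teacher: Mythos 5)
Your proof is correct. You reach the same conclusion as the paper, but by a different formalism: the paper builds an abstract difference-ratio operator $\Diff$ (\cref{def:Diff}) and a generic hill-valley integration-by-parts lemma (\cref{lem:yz_integrate_by_parts}), then invokes that lemma with $\Phi(u,t,v) = t/\coload{u}$ and $P = F$; you instead perform Lebesgue-Stieltjes integration by parts directly, split $\d(x/\coload{\y[\generic{}]})$ by the product rule, and then account by hand for the absolutely continuous part (on the strictly increasing regions, where $\y = x = \z$) and the atoms of $\d(1/\coload{\y[\generic{}]})$ (at the right endpoints of constant regions). Your atom computation is exactly right: the mass $1/\coload{b_0} - 1/\coload{a_0} = \lambda\int_{a_0}^{b_0}\F{t}\,\d t/(\coload{a_0}\coload{b_0})$ weighted by $b_0\F{b_0}$ reproduces the $\Rb[\generic{}]$ integrand over that valley, which is the concrete content that the paper encodes more opaquely in the chain of applications of \cref{lem:Diff_chain,lem:yz_Diff_diff,lem:yz_Diff_diff_chain}. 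The trade-off is that the paper's machinery, once built, disposes of all three of \cref{thm:waiting_monotonic,thm:residence_monotonic,thm:sesidence_monotonic} by plugging in a different $\Phi$ each time, whereas your calculation would have to be rerun (with a more involved product rule for the waiting-time integrand $\excess{\z}/(\coload{\y}\coload{\z})$, which genuinely depends on all three of $\y$, $x$, $\z$). A minor cosmetic point: you take constant regions to be $[a_0, b_0)$, while the paper's hill-valley partition uses $(u_i, v_i]$; this sidedness choice does not change the Stieltjes measure $\d(1/\coload{\y[\generic{}]})$ or the atom at the right endpoint, so it is harmless, but it's worth being consistent with \cref{def:hill-valley_pair} if this were to appear in the paper.
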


\begin{theorem}
  \label{thm:sesidence_monotonic}
  Under any monotonic SOAP policy~\generic{},
  \begin{align*}
    \E{\sesidence{\generic{1}}{}}
    = \int_0^\infty \gp[\bigg]{
        \frac{\lambda \z[\generic{}] \F{x} \F{\z[\generic{}]}}{\coload{\y[\generic{}]} \coload{\z[\generic{}]}}
        + \frac{\F{\y[\generic{}]}}{\coload{\y[\generic{}]}}
      } \d{x}.
  \end{align*}
\end{theorem}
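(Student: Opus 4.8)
The plan is to derive the formula straight from the definition of inflated residence time introduced in the proof of \cref{thm:mgk_response}, and then reconcile it with the already-established formula for mean residence time (\cref{thm:residence_monotonic}). Recall that for a size-$x$ tagged job, $\sesidence{\generic{1}}{x} =_\st B_{\ol{\y}}(\z[\generic{}])$ and $\residence{\generic{1}}{x} =_\st B_{\ol{\y}}(x)$, where $B_{\ol{\y}}(w)$ is the length of a busy period started by a deterministic amount of work $w$ in the transformed \mg{1} whose job size distribution is $\min\{X, \y[\generic{}]\}$. By \cref{eq:coload_excess}, that transformed system has load $\lambda\E{\min\{X, \y[\generic{}]\}} = 1 - \coload{\y[\generic{}]}$, so $\E{B_{\ol{\y}}(w)} = w/\coload{\y[\generic{}]}$, which is linear in $w$. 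Since \cref{eq:yxz} gives $x \le \z[\generic{}]$, taking expectations and then integrating over the tagged job size $x \sim F$ yields
\begin{align*}
  \E{\sesidence{\generic{1}}{}}
  = \E{\residence{\generic{1}}{}} + \int_0^\infty \frac{\z[\generic{}] - x}{\coload{\y[\generic{}]}} \dF{x}.
\end{align*}

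Substituting \cref{thm:residence_monotonic} for $\E{\residence{\generic{1}}{}}$ and comparing with the target, the two ``II'' terms (with integrand $\lambda \z[\generic{}] \F{x}\F{\z[\generic{}]}/(\coload{\y[\generic{}]}\coload{\z[\generic{}]})$) already match, so it remains only to prove the identity
\begin{align*}
  \int_0^\infty \frac{\z[\generic{}] - x}{\coload{\y[\generic{}]}} \dF{x}
  = \int_0^\infty \frac{\F{\y[\generic{}]} - \F{x}}{\coload{\y[\generic{}]}} \d{x}.
\end{align*}
I would prove this by a Tonelli interchange: write $\z[\generic{}] - x = \int_0^\infty \1(x \le a < \z[\generic{}]) \d{a}$ on the left and swap the order of the $\d{x}$ and $\d{a}$ integrals. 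The key structural fact is that, because the rank function of \generic{} is nondecreasing, for (almost) every $a$ the conjunction ``$x \le a$ and $a < \z[\generic{}]$'' forces $\rank{\generic{}}{x} = \rank{\generic{}}{a}$, i.e.\ $x$ and $a$ lie in the same maximal interval on which $\rank{\generic{}}{}$ is constant. On that interval $\y[\generic{}]$ is constant and equals its left endpoint $\ySize[\generic{}]{a}$, so the inner $\d{x}$-integral runs over $[\ySize[\generic{}]{a}, a]$ and collapses, by the fundamental theorem of calculus, to $F(a) - F(\ySize[\generic{}]{a}) = \F{\ySize[\generic{}]{a}} - \F{a}$. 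This reproduces the right-hand side exactly, completing the proof.

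The main obstacle will be the measure-theoretic bookkeeping in this last step. Since a monotonic (merely nondecreasing) rank function may have jump discontinuities, I must carefully justify the equivalence ``$a < \z[\generic{}]$ iff $a$ is strictly interior to the constant-rank level set of $x$'', handling the set of right endpoints of maximal constant regions, which is countable and hence Lebesgue-null; I also need $F$ to be atomless, which follows from the assumed continuity and piecewise monotonicity of the hazard rate. If the jump case proves awkward, a fallback is to prove \cref{thm:waiting_monotonic, thm:residence_monotonic, thm:sesidence_monotonic} in parallel by feeding the busy-period representations of $\waiting{\generic{1}}{x}$, $\residence{\generic{1}}{x}$, and $\sesidence{\generic{1}}{x}$ directly into \cref{eq:waiting_residence_monotonic} together with the same change-of-order argument — but routing through \cref{thm:residence_monotonic} as above is the most economical way to isolate the one genuinely new calculation.
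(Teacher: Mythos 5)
Your argument is correct, and it takes a genuinely different (and in this instance lighter) route than the paper's. The paper proves \cref{thm:sesidence_monotonic} directly from \cref{eq:sesidence} by invoking the hill–valley integration-by-parts machinery of \cref{lem:yz_integrate_by_parts} together with \cref{eq:Diff_recip, lem:yz_Diff_diff_chain}, which is the same toolkit used for \cref{thm:waiting_monotonic, thm:residence_monotonic}. You instead observe from the busy-period representations $\sesidence{\generic{1}}{x} =_\st B_{\ol{\y}}(\z[\generic{}])$ and $\residence{\generic{1}}{x} =_\st B_{\ol{\y}}(x)$ that $\E{\sesidence{\generic{1}}{}} - \E{\residence{\generic{1}}{}} = \int_0^\infty (\z[\generic{}] - x)/\coload{\y[\generic{}]}\,\dF{x}$, so that after plugging in \cref{thm:residence_monotonic}, the $\Rb$-type terms cancel and the task collapses to the single scalar identity $\int_0^\infty (\z[\generic{}] - x)/\coload{\y[\generic{}]}\,\dF{x} = \int_0^\infty (\F{\y[\generic{}]} - \F{x})/\coload{\y[\generic{}]}\,\d{x}$. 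Your Tonelli argument for that identity is sound: the conditions $x \le a < \z[\generic{}]$ together with monotonicity pin $\rank{\generic{}}{x} = \rank{\generic{}}{a}$, hence $\ySize[\generic{}]{x} = \ySize[\generic{}]{a}$ and $\coload{\y[\generic{}]} = \coload{\ySize[\generic{}]{a}}$ is constant on the inner integral, which collapses to $F(a) - F(\ySize[\generic{}]{a}) = \F{\ySize[\generic{}]{a}} - \F{a}$; the countable set of right endpoints of valleys is null and harmless. (A slightly cleaner packaging that avoids Tonelli entirely: split the outer integral over valleys and hills, note both sides vanish on hills, and on a valley $(u,v]$ integrate $\int_u^v (v-x)\,\dF{x}$ by parts to get $\int_u^v (\F{u} - \F{x})\,\d x$.) The trade-off is that your route leans on \cref{thm:residence_monotonic} as a black box, whose own proof needs the full $\Diff$-operator machinery; but taking that as given, your derivation isolates the genuinely new content and is more transparent than re-running \cref{lem:yz_integrate_by_parts} from scratch.
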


Proving these results requires new technical machinery for, roughly speaking,
performing integration by parts on expressions involving $\y[\generic{}]$ and~$\z[\generic{}]$,
such as those in~\cref{eq:waiting_residence_monotonic}.
\Cref{sub:hills_valleys} introduces the general technical machinery,
which \cref{sub:new_formulas_proofs} then applies to prove the above results.

Throughout this appendix,
$\diff$ denotes the derivative operator,
and $[t_1, \dots, t_n \mapsto \text{RHS}]$ denotes the function
that maps variables $t_1, \dots, t_n$ to expression~$\text{RHS}$.

\subsection{Integration by Parts with Hills and Valleys}
\label{sub:hills_valleys}

\begin{definition}
  A \emph{hill-valley partition} of $\R_+$ is a sequence
  \begin{align*}
    0 = u_0 \leq v_0 < u_1 < v_1 < u_2 < v_2 < \dots.
  \end{align*}
  Intervals of the form $(u_i, v_i]$ are called \emph{valleys},
  and intervals of the form $(v_i, u_{i + 1}]$
  are called \emph{hills}.\footnote{%
    We borrow the terms ``hill'' and ``valley'' from \citet{m-serpt_scully},
    who use a similar concept to analyze SOAP policies,
    but this definition is abstracted away from the details of SOAP.
    As a corner case,
    we consider the first hill or valley to also include~$0$.}
\end{definition}

\begin{definition}
  \label{def:hill-valley_pair}
  Functions $\ySize{}, \zSize{} : \R_+ \to \R_+$ are a \emph{hill-valley pair}
  for a given hill-valley partition if for each valley $(u_i, v_i]$,
  \begin{align*}
    \ySize{}(x) &= u_i, & \zSize{}(x) &= v_i, & \text{for all } x \in (u_i, v_i],
  \end{align*}
  and for each hill $(v_i, u_{i + 1}]$,
  \begin{align*}
    \ySize{}(x) &= x, & \zSize{}(x) &= x, & \text{for all } x \in (v_i, u_{i + 1}].
  \end{align*}
  For compactness, we write $\y = \ySize{}(x)$ and $\z = \zSize{}(x)$.
\end{definition}

It is simple to check that for any monotonic SOAP policy~\generic{},
the pair $\ySize[\generic{}]{}, \zSize[\generic{}]{}$ (\cref{def:yz})
is a hill-valley pair.

\begin{definition}
  \label{def:Diff}
  For functions $\Phi : \R_+ \to \R_+$,
  we define the \emph{difference ratio operator}~$\Diff$ as follows:
  \begin{align*}
    \Diff\Phi(\angle{u, v}) =
    \begin{dcases}
      \frac{\Phi(v) - \Phi(u)}{v - u} & \text{if } u \neq v \\
      \diff\Phi(u) & \text{if } u = v,
    \end{dcases}
  \end{align*}
  where $\diff$ is the derivative operator.
  Similarly, for functions with multiple arguments,
  $\Diff_i$ is a version of $\Diff$ that works on the $i$th argument:
  \begin{align*}
    \Diff_i\Phi(\dots, \angle{u, v}, \dots) =
    \Diff[t \mapsto \Phi(\dots, t, \dots)](\angle{u, v}).
  \end{align*}
\end{definition}

Like $\diff$, it is easily seen that $\Diff$ is a linear operator.
When applied to polynomials, $\Diff$ elegantly generalizes~$\diff$.
For example,
\begin{align}
  \label{eq:Diff_recip}
  \Diff\sqgp[\bigg]{t \mapsto \frac{1}{t}}(\angle{u, v}) = \frac{1}{uv}.
\end{align}
The $\Diff$ operator also obeys various chain-rule-like identities.
We highlight the two we use below.

\begin{lemma}
  \label{lem:Diff_chain}
  Let $\Phi, \Psi : \R \to \R$ be differentiable.
  For all $u, v \in \R$,
  \begin{align*}
    \Diff[t \mapsto \Phi(\Psi(t))](\angle{u, v})
    = \Diff\Phi(\angle{\Psi(u), \Psi(v)}) \Diff\Psi(\angle{u, v}).
  \end{align*}
\end{lemma}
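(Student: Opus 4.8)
The plan is a short case analysis driven by the two branches in \cref{def:Diff}, applied separately to the outer $\Diff$ on the left-hand side and to each of the two $\Diff$ factors on the right-hand side. Writing $p = \Psi(u)$ and $q = \Psi(v)$ for brevity, there are three cases to dispatch: (i) $u \neq v$ and $p \neq q$; (ii) $u = v$; and (iii) $u \neq v$ but $p = q$. I would state these cases up front and verify each in turn; since $\Diff$ is linear and each branch of its definition is elementary, no case should require more than a line or two of algebra.

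In case (i), every occurrence of $\Diff$ reduces to its difference-quotient branch, so the left-hand side is $(\Phi(q) - \Phi(p))/(v - u)$ while the right-hand side is $\frac{\Phi(q) - \Phi(p)}{q - p} \cdot \frac{q - p}{v - u}$; the factors $q - p$ cancel and the two sides agree. In case (ii), the left-hand side is $\diff[t \mapsto \Phi(\Psi(t))](u)$, which by the classical chain rule equals $\Phi'(\Psi(u))\,\Psi'(u)$; on the right, both arguments of each $\Diff$ collapse to a single point, so $\Diff\Phi(\angle{p, p}) = \Phi'(p) = \Phi'(\Psi(u))$ and $\Diff\Psi(\angle{u, u}) = \Psi'(u)$, and the product matches. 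Here the only thing used beyond the definition is that $\Phi \circ \Psi$ is differentiable, which is exactly what makes the classical chain rule applicable, and this follows from the hypotheses on $\Phi$ and $\Psi$.

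The one case that warrants any care is (iii). There $\Phi(\Psi(u)) = \Phi(\Psi(v))$, so the left-hand side is $0$; on the right, the first factor is $\Diff\Phi(\angle{p, p}) = \Phi'(p)$, a finite number, while the second factor is $\Diff\Psi(\angle{u, v}) = (\Psi(v) - \Psi(u))/(v - u) = (q - p)/(v - u) = 0$, so the product is $0$ as well. This is where the two definitional branches interact — a genuine derivative multiplied by a vanishing difference quotient — and it is the only point in the argument that is not purely mechanical; I expect it to be the main (and rather mild) obstacle, mostly a matter of bookkeeping the branches correctly. Everything else is routine, and the payoff of the lemma is that it lets us subsequently manipulate expressions such as those in \cref{eq:waiting_residence_monotonic}, which mix hill values ($\y = \z = x$) and valley values ($\y < \z$), as though $\Diff$ obeyed the ordinary chain rule uniformly.
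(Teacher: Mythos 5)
Your proof is correct and follows essentially the same three-way case split as the paper's (one-line) proof: chain rule when $u = v$, both sides vanish when $u \neq v$ but $\Psi(u) = \Psi(v)$, and cancellation of difference quotients otherwise. The extra bookkeeping you supply for the mixed case (iii) is exactly what the paper leaves implicit.
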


\begin{proof}
  If $u = v$, this is the chain rule.
  If $u \neq v$ but $\Psi(u) = \Psi(v)$, then both sides are~$0$.
  If $\Psi(u) \neq \Psi(v)$, then the result follows by a simple computation.
\end{proof}

\begin{lemma}
  \label{lem:Diff_split}
  Let $\Phi : \R^2 \to \R$ be differentiable.
  For all $u, v \in \R$,
  \begin{align*}
      \Diff[t \mapsto \Phi(t, t)](\angle{u, v})
    = \Diff_2\Phi(u, \angle{u, v}) + \Diff_1\Phi(\angle{u, v}, v).
  \end{align*}
\end{lemma}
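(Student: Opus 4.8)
The plan is to prove the identity directly from \cref{def:Diff} by splitting into the two defining cases of $\Diff$: the off-diagonal case $u \neq v$ and the diagonal case $u = v$. In both cases the engine of the argument is the telescoping decomposition
\[
  \Phi(v,v) - \Phi(u,u)
  = \bigl(\Phi(u,v) - \Phi(u,u)\bigr)
  + \bigl(\Phi(v,v) - \Phi(u,v)\bigr),
\]
which routes the diagonal increment through the corner point $\Phi(u,v)$: the first bracket moves only the second coordinate (from $u$ to $v$, first coordinate pinned at $u$), and the second bracket moves only the first coordinate (from $u$ to $v$, second coordinate pinned at $v$). These are exactly the increments that $\Diff_2\Phi(u, \angle{u,v})$ and $\Diff_1\Phi(\angle{u,v}, v)$ encode.

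First I would dispatch the case $u \neq v$. Here each $\Diff$ is a genuine difference quotient: $\Diff[t \mapsto \Phi(t,t)](\angle{u,v}) = (\Phi(v,v) - \Phi(u,u))/(v-u)$, while $\Diff_2\Phi(u, \angle{u,v}) = (\Phi(u,v) - \Phi(u,u))/(v-u)$ and $\Diff_1\Phi(\angle{u,v}, v) = (\Phi(v,v) - \Phi(u,v))/(v-u)$. Adding the last two and applying the telescoping identity above yields the claim. This case uses nothing about $\Phi$ beyond its being defined at the four relevant corners.

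Next I would handle the diagonal case $u = v$, which is where differentiability is genuinely used. Now every $\Diff$ collapses to a derivative: the left-hand side becomes $\diff[t \mapsto \Phi(t,t)](u)$, which by the multivariable chain rule equals $\diff_1\Phi(u,u) + \diff_2\Phi(u,u)$; and the right-hand side becomes $\diff_2\Phi(u,\cdot)(u) + \diff_1\Phi(\cdot,u)(u) = \diff_2\Phi(u,u) + \diff_1\Phi(u,u)$, the same sum. This is structurally the same move as in the proof of \cref{lem:Diff_chain}, with the chain rule replaced by the total-derivative-along-the-diagonal rule. Since the identity has been established exactly and independently on each case, no gluing, limiting, or continuity argument is needed to combine them; the only thing to be careful about is correctly tracking which coordinate each term varies, so I do not anticipate any real analytic obstacle here.
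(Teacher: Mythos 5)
Your proposal is correct and takes essentially the same approach as the paper: split into the cases $u \neq v$ (telescope $\Phi(v,v) - \Phi(u,u)$ through the corner value $\Phi(u,v)$) and $u = v$ (multivariable chain rule along the diagonal). The paper's proof is just a more compact rendering of the same two-case argument.
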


\begin{proof}
  If $u = v$, this is the multivariable chain rule.
  If $u \neq v$,
  \begin{align*}
    \TwoColEqMoveLeft{
      (v - u) \Diff[t \mapsto \Phi(t, t)](\angle{u, v})
    }
    &= \Phi(v, v) - \Phi(u, u) \\
    &= \Phi(v, v) - \Phi(u, v) + \Phi(u, v) - \Phi(u, u) \\
    &= (v - u)\gp{
        \Diff_1\Phi(\angle{u, v}, v)
        + \Diff_2\Phi(u, \angle{u, v})
      }.
    \qedhere
  \end{align*}
\end{proof}

The most important result of this appendix is the following lemma,
which formulates a version of integration by parts that
works for hill-valley pairs despite their discontinuity.

\begin{lemma}
  \label{lem:yz_integrate_by_parts}
  Let
  $\ySize{}, \zSize{}$ be a hill-valley pair,
  $\Phi : \R_+^3 \to \R$ be differentiable,
  $P : \R_+ \to \R$ be differentiable, and
  $\ol{P}(x) = c - P(x)$ for some $c \in \R$.
  If
  \begin{align*}
    \ol{P}(0) \Phi(0, 0, \zSize{0}) &= 0, \ifshort{&}{\\}
    \lim_{x \to \infty} \ol{P}(x) \Phi(\y, x, \z) &= 0,
  \end{align*}
  then
  \begin{align*}
    \ifshort{\EqMoveLeft}{\OneColEqMoveLeft}{
      \int_0^\infty \Phi(\y, x, \z) \diff P(x) \d{x}
    }
    &\begin{aligned}[t]
    {}=
      \int_0^\infty \Bigl(
        & \ol{P}(\y) \Diff_3\Phi(\y, \y, \angle{\y, \z})
        \ifshort{}{\TwoColEqBreak[{}]}
        + \ol{P}(x) \diff_2\Phi(\y, x, \z)
        \TwoColEqBreak[{}]
        + \ol{P}(v) \Diff_1\Phi(\angle{\y, \z}, \z, \z)
      \Bigr) \d{x}.
      \end{aligned}
  \end{align*}
\end{lemma}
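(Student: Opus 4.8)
The plan is to reduce the claim to a collection of ordinary integration-by-parts computations, one over each hill and each valley of the partition, and then show that the boundary terms produced at the hill--valley junctions telescope and, when ``spread'' over the adjacent valley, reproduce exactly the $\Diff_1$ and $\Diff_3$ terms in the stated identity. First I would write $\int_0^\infty \Phi(\y,x,\z)\diff P(x)\d{x} = \sum_i\bigl(\int_{(u_i,v_i]} + \int_{(v_i,u_{i+1}]}\bigr)$, justified by the (standing) assumption that the integral is well defined together with absolute convergence, and note that since $\ol{P} = c - P$ and $\diff$ is linear we have $\diff P = -\diff\ol{P}$. The whole argument then lives in two local computations plus a global reassembly.

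On a hill $(v_i,u_{i+1}]$ we have $\y = \z = x$, so $\Phi(\y,x,\z) = \Phi(x,x,x)$ and classical integration by parts gives a boundary term $\bigl[\ol{P}(x)\Phi(x,x,x)\bigr]$ over the hill plus an interior integral of $\ol{P}(x)\bigl(\diff_1 + \diff_2 + \diff_3\bigr)\Phi(x,x,x)$. By the $u = v$ clause of \cref{def:Diff}, this interior integrand is precisely $\ol{P}(\y)\Diff_3\Phi(\y,\y,\angle{\y,\z}) + \ol{P}(x)\diff_2\Phi(\y,x,\z) + \ol{P}(\z)\Diff_1\Phi(\angle{\y,\z},\z,\z)$ restricted to the hill (where $\y = \z = x$). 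On a valley $(u_i,v_i]$, both $\y = u_i$ and $\z = v_i$ are constant, so classical integration by parts in $x$ gives boundary terms $\ol{P}(u_i)\Phi(u_i,u_i,v_i) - \ol{P}(v_i)\Phi(u_i,v_i,v_i)$ plus an interior integral of $\ol{P}(x)\diff_2\Phi(u_i,x,v_i) = \ol{P}(x)\diff_2\Phi(\y,x,\z)$, which again matches the middle term of the target integrand on the valley.

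For the reassembly, collect the junction boundary terms by the value of $\ol{P}$ at the junction. The $\ol{P}(v_i)$ piece coming out of valley $i$ combines with the $\ol{P}(v_i)$ piece coming out of the adjacent hill to produce $\ol{P}(v_i)\bigl(\Phi(v_i,v_i,v_i) - \Phi(u_i,v_i,v_i)\bigr) = (v_i - u_i)\,\ol{P}(\z)\,\Diff_1\Phi(\angle{\y,\z},\z,\z)$ on valley $i$ (this is how $v$ in the statement should be read: $\ol{P}(v) = \ol{P}(\z)$, constant on the valley). Likewise the $\ol{P}(u_i)$ pieces (from valley $i$ and the preceding hill) combine into $(v_i - u_i)\,\ol{P}(\y)\,\Diff_3\Phi(\y,\y,\angle{\y,\z})$ on valley $i$. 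Thus each junction residue, divided by the valley length and integrated over that valley, reproduces exactly the $\Diff_1$ and $\Diff_3$ contributions, and together with the hill contributions (where $\Diff$ collapses to $\diff$) and the ubiquitous $\ol{P}(x)\diff_2\Phi(\y,x,\z)$ term, the RHS integrand is recovered in full. The only unpaired boundary terms are at $x = 0$ and $x = \infty$: the residue at $0$ is $\ol{P}(0)\Phi(0,0,\zSize{0})$ under the corner-case convention that the first hill or valley includes $0$ (this handles uniformly the cases where the first piece is a genuine valley, a degenerate valley, or a hill), and the residue at $\infty$ is $\lim_{x\to\infty}\ol{P}(x)\Phi(\y,x,\z)$; both vanish by the two hypotheses.

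The main obstacle is the bookkeeping: correctly attributing each junction's residue to exactly one adjacent valley without double-counting, and checking that the corner cases line up — in particular the $0$-endpoint under the ``first piece includes $0$'' convention, and the possibility (in the $\ENBUE$/$\Bounded$ regime) that the last valley is infinite, so that $\z$ and hence the arguments of $\Phi$ and the $\Diff$ operators must be interpreted in the $\z \to \infty$ limit. I expect the chain-rule identities (\cref{lem:Diff_chain}, \cref{lem:Diff_split}) are not needed for this lemma itself but become useful when it is specialized to the concrete $\Phi$ arising in the proofs of \cref{thm:waiting_monotonic}, \cref{thm:residence_monotonic}, and \cref{thm:sesidence_monotonic}, where one must differentiate and take difference ratios of products and reciprocals of $\coload{}$ and $\excess{}$.
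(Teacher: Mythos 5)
Your proposal is correct and takes essentially the same route as the paper's proof: both split the integral over hills and valleys, perform classical integration by parts on each piece (hills having $\y = \z = x$, valleys having $\y$, $\z$ constant), and then show that the junction boundary terms either telescope or reproduce the $\Diff_1$ and $\Diff_3$ terms when written as integrals over the adjacent valley. The only organizational difference is that the paper absorbs the $(v-u)\Diff$ terms into each valley first (by adding and subtracting $\ol{P}(u)\Phi(u,u,u)$ and $\ol{P}(v)\Phi(v,v,v)$) so that the residual boundary terms telescope uniformly with the hill boundary terms, whereas you pair the raw boundary terms across each junction directly; the arithmetic is the same. You also correctly identify the slight notational abuse $\ol{P}(v) = \ol{P}(\z)$ and the corner cases at $0$ and $\infty$.
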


\begin{proof}
  \useamsalign\allowdisplaybreaks
  For each valley $(u, v]$,
  \begin{align*}
    \EqMoveLeft{
      \int_u^v \Phi(\y, x, \z) \diff P(x) \d{x}
    }
    &= \int_u^v \ol{P}(x) \diff_2\Phi(u, x, v) \d{x}
      + \ol{P}(u)\Phi(u, u, v) - \ol{P}(v)\Phi(u, v, v) \\
    &= \int_u^v \ol{P}(x) \diff_2\Phi(u, x, v) \d{x}
      + \ol{P}(u)\Phi(u, u, u) - \ol{P}(v)\Phi(v, v, v)
    \EqBreak
      + (v - u)\ol{P}(u)\Diff_3\Phi(u, u, \angle{u, v})
      + (v - u)\ol{P}(v)\Diff_1\Phi(\angle{u, v}, v, v) \\
    &\begin{aligned}
    {}=
      \int_u^v \Bigl(
        & \ol{P}(u) \Diff_3\Phi(u, u, \angle{u, v})
        \ifshort{}{\TwoColEqBreak[{}]}
        + \ol{P}(x) \diff_2\Phi(u, x, v)
        \TwoColEqBreak[\ifshort{\hspace{-2.5em}}{}]
        + \ol{P}(v) \Diff_1\Phi(\angle{u, v}, v, v)
      \Bigr) \d{x}
      \ifshort{+ \ol{P}(u)\Phi(u, u, u) - \ol{P}(v)\Phi(v, v, v)}{}
    \end{aligned}
    \ifshort{}{\EqBreak
      + \ol{P}(u)\Phi(u, u, u) - \ol{P}(v)\Phi(v, v, v)} \\
    &\begin{aligned}
    {}=
      \int_u^v \Bigl(
        & \ol{P}(\y) \Diff_3\Phi(\y, \y, \angle{\y, \z})
        \ifshort{}{\TwoColEqBreak[{}]}
        + \ol{P}(x) \diff_2\Phi(\y, x, \z)
        \TwoColEqBreak[\ifshort{\hspace{-2.5em}}{}]
        + \ol{P}(v) \Diff_1\Phi(\angle{\y, \z}, \z, \z)
      \Bigr) \d{x}
      \ifshort{+ \ol{P}(u)\Phi(u, u, u) - \ol{P}(v)\Phi(v, v, v).}{}
    \end{aligned}
    \ifshort{}{\EqBreak
      + \ol{P}(u)\Phi(u, u, u) - \ol{P}(v)\Phi(v, v, v).}
  \end{align*}
  For each hill $(v, u]$,
  \begin{align*}
    \ifshort{\EqMoveLeft}{\OneColEqMoveLeft}{
      \int_v^u \Phi(\y, x, \z) \diff P(x) \d{x}
    }
    &= \int_v^u \ol{P}(x) \diff[t \to \Phi(t, t, t)](x) \d{x}
    \ifshort{}{\TwoColEqBreak}
      + \ol{P}(v)\Phi(v, v, v) - \ol{P}(u)\Phi(u, u, u) \\
    &\begin{aligned}
    {}=
      \int_v^u \Bigl(
        & \ol{P}(x) \diff_3\Phi(x, x, x)
        \ifshort{}{\TwoColEqBreak[{}]}
        + \ol{P}(x) \diff_2\Phi(x, x, x)
        \TwoColEqBreak[\ifshort{\hspace{-2.5em}}{}]
        + \ol{P}(x) \diff_1\Phi(x, x, x)
      \Bigr) \d{x}
      \ifshort{+ \ol{P}(v)\Phi(v, v, v) - \ol{P}(u)\Phi(u, u, u)}{}
    \end{aligned}
    \ifshort{}{\EqBreak
      + \ol{P}(v)\Phi(v, v, v) - \ol{P}(u)\Phi(u, u, u)} \\
    &\begin{aligned}
    {}=
      \int_v^u \Bigl(
        & \ol{P}(\y) \Diff_3\Phi(\y, \y, \angle{\y, \z})
        \ifshort{}{\TwoColEqBreak[{}]}
        + \ol{P}(x) \diff_2\Phi(\y, x, \z)
        \TwoColEqBreak[\ifshort{\hspace{-2.5em}}{}]
        + \ol{P}(v) \Diff_1\Phi(\angle{\y, \z}, \z, \z)
      \Bigr) \d{x}
      \ifshort{+ \ol{P}(v)\Phi(v, v, v) - \ol{P}(u)\Phi(u, u, u).}{}
    \end{aligned}
    \ifshort{}{\EqBreak
      + \ol{P}(v)\Phi(v, v, v) - \ol{P}(u)\Phi(u, u, u).}
  \end{align*}
  Summing the hill and valley expressions over all hills and valleys,
  most of the non-integral terms cancel out,
  and the two that remain are $0$ by assumption\ifshort{.}{:
  \usepreludealign
  \begin{align*}
    \EqMoveLeft{
      \int_0^\infty \Phi(\y, x, \z) \diff P(x) \d{x}
    }
    &\begin{aligned}[t]
    {}=
      \int_0^\infty \Bigl(
        & \ol{P}(\y) \Diff_3\Phi(\y, \y, \angle{\y, \z})
        \TwoColEqBreak[{}]
        + \ol{P}(x) \diff_2\Phi(\y, x, \z)
        \TwoColEqBreak[{}]
        + \ol{P}(v) \Diff_1\Phi(\angle{\y, \z}, \z, \z)
      \Bigr) \d{x}
    \end{aligned}
    \mkern-12mu
    \EqBreak
      + \ol{P}(0)\Phi(0, 0, \zSize{0})
      - \lim_{x \to \infty} \ol{P}(x)\Phi(\y, x, \z).
    \qedhere
  \end{align*}}
\end{proof}

Our final two lemmas show that
integrals using $\Diff$ can sometimes be turned into
integrals using~$\diff$.

\begin{lemma}
  \label{lem:yz_Diff_diff}
  Let $\ySize{}, \zSize{}$ be a hill-valley pair
  and $\Phi : \R_+^3 \to \R_+\esub$ be differentiable
  with respect to its second argument.
  Then
  \begin{align*}
    \int_0^\infty
      \Diff_2\Phi(\y, \angle{\y, \z}, \z)
    \d{x}
    &= \int_0^\infty \diff_2\Phi(\y, x, \z) \d{x}.
  \end{align*}
\end{lemma}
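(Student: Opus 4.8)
The plan is to exploit the hill-valley partition $0 = u_0 \le v_0 < u_1 < v_1 < \cdots$ underlying the pair $\ySize{}, \zSize{}$ (\cref{def:hill-valley_pair}): I split both integrals into the countable family of contributions coming from the individual valleys $(u_i, v_i]$ and hills $(v_i, u_{i+1}]$, verify the claimed identity on each such interval separately, and then sum.

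First I would dispatch the hills. On a hill $(v_i, u_{i+1}]$ we have $\y = x = \z$ at every age, so by the definition of $\Diff$ at coincident arguments (\cref{def:Diff}),
\[
  \Diff_2\Phi(\y, \angle{\y, \z}, \z)
  = \Diff[t \mapsto \Phi(x, t, x)](\angle{x, x})
  = \diff_2\Phi(x, x, x)
  = \diff_2\Phi(\y, x, \z),
\]
so the two integrands literally coincide on every hill and the per-interval identity is immediate there. Next I would handle the valleys. On a valley $(u_i, v_i]$ both $\y \equiv u_i$ and $\z \equiv v_i$ are constant in $x$, so the left-hand integrand equals the $x$-independent constant $\Diff_2\Phi(u_i, \angle{u_i, v_i}, v_i) = \bigl(\Phi(u_i, v_i, v_i) - \Phi(u_i, u_i, v_i)\bigr)/(v_i - u_i)$, whose integral over $(u_i, v_i]$ is exactly $\Phi(u_i, v_i, v_i) - \Phi(u_i, u_i, v_i)$. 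The right-hand integrand is $\diff_2\Phi(u_i, x, v_i) = \diff[t \mapsto \Phi(u_i, t, v_i)](x)$, so by the fundamental theorem of calculus (using differentiability of $\Phi$ in its second argument) its integral over $(u_i, v_i]$ is also $\Phi(u_i, v_i, v_i) - \Phi(u_i, u_i, v_i)$. Hence the identity holds on every valley as well.

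Finally, since the hills and valleys (together with the single point $0$, which has measure zero) form a countable partition of $\R_+$ into measurable sets, I would invoke countable additivity of the integral to sum the per-interval identities and obtain the claim. The one place requiring care is this aggregation step: one must know that the term-by-term equality survives passage to the sum, which is automatic for nonnegative integrands by monotone convergence, and more generally holds whenever one side is absolutely integrable. Since every application of this lemma in the paper (\cref{thm:waiting_monotonic, thm:residence_monotonic, thm:sesidence_monotonic}) uses it on manifestly nonnegative integrands, no subtlety actually arises there, and I expect this aggregation to be the only non-mechanical part of the proof.
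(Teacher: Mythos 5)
Your proof is correct and matches the paper's argument essentially step for step: both handle hills by observing the integrands coincide pointwise and handle valleys by the fundamental theorem of calculus applied to $t \mapsto \Phi(u_i, t, v_i)$, then sum over intervals. Your extra remark flagging the aggregation step as the one place needing care is a reasonable observation that the paper itself elides, but it does not change the substance of the argument.
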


\begin{proof}
  \useamsalign\allowdisplaybreaks
  For each valley $(u, v]$,
  \begin{align*}
    \ifshort{\MoveEqLeft}{}
    \int_u^v \Diff_2\Phi(\y, \angle{\y, \z}, \z) \d{x}
    \ifshort{}{&}= \int_u^v \Diff_2\Phi(u, \angle{u, v}, v) \d{x} \\
    &= (v - u)\Diff_2\Phi(u, \angle{u, v}, v) \ifshort{}{\\
    &}= \Phi(u, v, v) - \Phi(u, u, v) \\
    &= \int_u^v \diff_2\Phi(u, x, v) \d{x} \ifshort{}{\\
    &}= \int_u^v \diff_2\Phi(\y, x, \z) \d{x}.
  \end{align*}
  For each hill $(v, u]$,
  \begin{align*}
    \ifshort{\MoveEqLeft}{}
    \int_v^u \Diff_2\Phi(\y, \angle{\y, \z}, \z) \d{x}
    \ifshort{}{&}= \int_v^u \Diff_2\Phi(x, \angle{x, x}, x) \d{x} \\
    &= \int_v^u \diff_2\Phi(x, x, x) \d{x} \ifshort{}{\\
    &}= \int_v^u \diff_2\Phi(\y, x, \z) \d{x}.
  \end{align*}
  Summing the hill and valley expressions over all hills and valleys
  yields the desired result.
\end{proof}

\begin{lemma}
  \label{lem:yz_Diff_diff_chain}
  Let
  $\ySize{}, \zSize{}$ be a hill-valley pair
  and both $\Phi : \R_+^3 \to \R$ and $\Psi : \R_+ \to \R$ be differentiable.
  Then
  \begin{align*}
    \TwoColEqMoveLeft{
      \int_0^\infty
        \Diff[t \mapsto \Phi(\y, \Psi(t), \z)](\angle{\y, \z})
      \d{x}
    }
    &= \int_0^\infty
        \Diff_2\Phi(\y, \angle{\Psi(\y), \Psi(\z)}, \z) \diff\Psi(x)
      \d{x}.
  \end{align*}
\end{lemma}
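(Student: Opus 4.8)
The plan is to reduce this to the same hill-versus-valley bookkeeping used in the proof of \cref{lem:yz_Diff_diff}, with the chain-rule identity of \cref{lem:Diff_chain} doing the algebraic work. First I would rewrite the left-hand integrand so that its $\Phi$-dependence and $\Psi$-dependence are separated. Holding $x$ fixed, the values $\y$ and $\z$ are ordinary real numbers, so \cref{lem:Diff_chain} applies with outer function $g = [s \mapsto \Phi(\y, s, \z)]$ and inner function $\Psi$, giving
\begin{align*}
  \Diff[t \mapsto \Phi(\y, \Psi(t), \z)](\angle{\y, \z})
  = \Diff g(\angle{\Psi(\y), \Psi(\z)}) \cdot \Diff\Psi(\angle{\y, \z})
  = \Diff_2\Phi(\y, \angle{\Psi(\y), \Psi(\z)}, \z) \cdot \Diff\Psi(\angle{\y, \z}),
\end{align*}
where the last equality is just \cref{def:Diff}. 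Thus the left-hand side of the lemma equals $\int_0^\infty \Diff_2\Phi(\y, \angle{\Psi(\y), \Psi(\z)}, \z) \cdot \Diff\Psi(\angle{\y, \z}) \d{x}$, and it remains to show that replacing $\Diff\Psi(\angle{\y, \z})$ by $\diff\Psi(x)$ leaves the integral unchanged.

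Second, I would split $\R_+$ into the hills and valleys of the given hill-valley partition and treat them separately. On a hill $(v, u]$ we have $\y = \z = x$, so $\Diff\Psi(\angle{\y, \z}) = \Diff\Psi(\angle{x, x}) = \diff\Psi(x)$ by \cref{def:Diff}, and the two integrands coincide pointwise there. On a valley $(u, v]$ we have $\y = u$ and $\z = v$ constant, so the factor $\Diff_2\Phi(\y, \angle{\Psi(\y), \Psi(\z)}, \z) = \Diff_2\Phi(u, \angle{\Psi(u), \Psi(v)}, v)$ is constant over the valley; the two integrands then differ pointwise, but pulling this constant out of the integral and using
\begin{align*}
  \int_u^v \Diff\Psi(\angle{u, v}) \d{x}
  = \Psi(v) - \Psi(u)
  = \int_u^v \diff\Psi(x) \d{x}
\end{align*}
(the first equality because the integrand is the constant $(\Psi(v) - \Psi(u))/(v - u)$, the second the fundamental theorem of calculus) shows the two integrals over the valley agree. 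Summing the hill and valley contributions over all hills and valleys, exactly as in \cref{lem:yz_Diff_diff}, then completes the proof.

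This is a routine argument of the same flavor as the other lemmas in this appendix, so I do not expect a genuine obstacle; the only subtlety worth flagging is that the claimed identity is \emph{not} an equality of integrands — on valleys the two integrands genuinely disagree, and the equality is recovered only after integrating, using that $\Diff\Psi(\angle{u, v})$ is precisely the average of $\Psi'$ over $[u, v]$. A minor point to state carefully is that \cref{lem:Diff_chain} is phrased for scalar maps $\R \to \R$, so one should note that at fixed $x$ both $s \mapsto \Phi(\y, s, \z)$ and $\Psi$ are such maps, which is what licenses its use with the frozen parameters $\y$ and $\z$.
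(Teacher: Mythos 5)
Your first step, the application of \cref{lem:Diff_chain} to factor the integrand as $\Diff_2\Phi(\y, \angle{\Psi(\y), \Psi(\z)}, \z) \cdot \Diff\Psi(\angle{\y, \z})$, matches the paper exactly. From there the two proofs diverge. The paper's second step is a small sleight of hand: it recognizes the factored integrand as $\Diff_2$ applied to the auxiliary three-argument function $(u, t, v) \mapsto \Diff_2\Phi(u, \angle{\Psi(u), \Psi(v)}, v) \cdot \Psi(t)$ (this works because the first factor is constant in $t$), and then invokes \cref{lem:yz_Diff_diff} to convert the outer $\Diff_2$ at $\angle{\y, \z}$ into $\diff_2$ at $x$, which pops out $\diff\Psi(x)$. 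You instead inline a direct hill-versus-valley calculation: on hills $\Diff\Psi(\angle{x, x}) = \diff\Psi(x)$ pointwise, and on valleys you pull out the constant $\Diff_2\Phi(u, \angle{\Psi(u), \Psi(v)}, v)$ factor and observe that $\int_u^v \Diff\Psi(\angle{u, v})\d{x} = \Psi(v) - \Psi(u) = \int_u^v \diff\Psi(x)\d{x}$. Both arguments are correct. Yours is more self-contained and makes the valley-by-valley cancellation explicit, which is exactly the subtlety you flagged (the identity fails pointwise on valleys and only holds after integrating); the paper's version is a bit slicker and more modular because it reuses \cref{lem:yz_Diff_diff}, whose own proof is the same hill-valley bookkeeping you carry out by hand.
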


\begin{proof}
  \useamsalign\allowdisplaybreaks
  We compute
  \begin{align*}
    \EqMoveLeft{
      \int_0^\infty
        \Diff[t \mapsto \Phi(\y, \Psi(t), \z)](\angle{\y, \z})
      \d{x}
    }
    &= \int_0^\infty
        \Diff_2\Phi(\y, \angle{\Psi(\y), \Psi(\z)}, \z)
        \Diff \Psi(\angle{\y, \z})
      \d{x}
    \byref{lem:Diff_chain}
    \\
    &\begin{aligned}
      {}= \int_0^\infty
        \Diff_2\Bigl[&u, t, v \mapsto
        \ifshort{}{\TwoColEqBreak}
          \Diff_2\Phi(u, \angle{\Psi(u), \Psi(v)}, v)
          \cdot \Psi(t)
        \Bigr](\y, \angle{\y, \z}, \z)
      \d{x}
    \end{aligned}
    \\
    &= \int_0^\infty
        \Diff_2\Phi(\y, \angle{\Psi(\y), \Psi(\z)}, \z) \diff\Psi(x)
      \d{x}.
    \byref{lem:yz_Diff_diff}
    \qedhere
  \end{align*}
\end{proof}

\subsection{Proofs of New Formulas}
\label{sub:new_formulas_proofs}

We now apply the theory developed in \cref{sub:hills_valleys}
to prove \cref{thm:waiting_monotonic, thm:residence_monotonic, thm:sesidence_monotonic}.
Throughout the proofs,
$\y$ and $\z$ refer to $\y[\generic{}]$ and~$\z[\generic{}]$, respectively.
Recall that $\ySize{}, \zSize{}$ form a hill-valley pair
(\cref{def:hill-valley_pair})
under any monotonic SOAP policy~\generic{}.

\begin{proof}[Proof of \cref{thm:waiting_monotonic}]
  \useamsalign\allowdisplaybreaks
  We compute
  \begin{align*}
    \mkern4mu&\mkern-4mu \E{\waiting{\generic{1}}{}}
    = \int_0^\infty \frac{\excess{\z}}{\coload{\y}\coload{\z}} \dF{x}
      \byref{eq:waiting_residence_monotonic}
    \\
    &\begin{aligned}[b]
    {}=
      \int_0^\infty \Biggl(
        & \frac{\F{\y}}{\coload{\y}}
          \Diff\sqgp[\bigg]{
            t \mapsto \frac{\excess{t}}{\coload{t}}
          }(\angle{\y, \z})
        \TwoColEqBreak[{}]
        + \frac{\F{\z} \excess{\z}}{\coload{\z}}
          \Diff\sqgp[\bigg]{
            t \mapsto \frac{1}{\coload{t}}
          }(\angle{\y, \z})
      \Biggr) \d{x}
      \iftwocol{
        \ifshort{}{\TwoColEqBreak[\hspace{14em}]}
        \byref{lem:yz_integrate_by_parts}
      }{}
    \end{aligned}
      \iftwocol{}{
        \ifshort{}{\TwoColEqBreak[\hspace{14em}]}
        \byref{lem:yz_integrate_by_parts}
      }
    \\
    &\begin{aligned}[b]
    {}=
      \int_0^\infty \Biggl(
        & \frac{\F{\y}}{\coload{\y}^2} \Diff\excess{\angle{\y, \z}}
        \ifshort{}{\TwoColEqBreak[{}]}
        + \frac{\F{\y} \excess{\z}}{\coload{\y}}
          \Diff\sqgp[\bigg]{
            t \mapsto \frac{1}{\coload{t}}
          }(\angle{\y, \z})
        \EqBreak[{}]
        + \frac{\F{\z} \excess{\z}}{\coload{\z}}
          \Diff\sqgp[\bigg]{
            t \mapsto \frac{1}{\coload{t}}
          }(\angle{\y, \z})
      \Biggr) \d{x}
      \iftwocol{
        \ifshort{}{\TwoColEqBreak[\hspace{14em}]}
        \byref{lem:Diff_split}
      }{}
    \end{aligned}
      \iftwocol{}{
        \ifshort{}{\TwoColEqBreak[\hspace{14em}]}
        \byref{lem:Diff_split}
      }
    \\
    &\begin{aligned}[b]
    {}=
      \int_0^\infty \Biggl(
        &
        \iftwocol{}{
          \frac{\F{\y}}{\coload{\y}\coload{\y}} \diff\excess{x}
          \TwoColEqBreak[{}] +
        }
        \excess{\z} \gp[\bigg]{
            \frac{\F{\y}}{\coload{\y}} + \frac{\F{\z}}{\coload{\z}}
          }
          \diff\sqgp[\bigg]{
            t \mapsto \frac{1}{\coload{t}}
          }(x)
        \iftwocol{
          \TwoColEqBreak[{}]
          + \frac{\F{\y}}{\coload{\y}\coload{\y}} \diff\excess{x}
        }{}
      \Biggr) \d{x},
      \iftwocol{\byref{lem:yz_Diff_diff}}{}
    \end{aligned}
      \iftwocol{}{\byref{lem:yz_Diff_diff}}
  \end{align*}
  which equals the desired result by~\cref{eq:coload_excess}.
\end{proof}

\begin{proof}[Proof of \cref{thm:residence_monotonic}]
  \useamsalign\allowdisplaybreaks
  We compute
  \begin{align*}
    \ifshort{&}{}\E{\residence{\generic{1}}{}}
    \ifshort{}{&}= \int_0^\infty \frac{x}{\coload{\y}} \dF{x}
      \byref{eq:waiting_residence_monotonic}
    \\
    &= \int_0^\infty \gp[\bigg]{
        \z \F{\z} \Diff\sqgp[\bigg]{
          t \mapsto \frac{1}{\coload{t}}
        }(\angle{\y, \z})
        + \frac{\F{x}}{\coload{\y}}
      } \d{x}
      \ifshort{}{\TwoColEqBreak[\hspace{18em}]}
      \byref{lem:yz_integrate_by_parts}
    \\
    &= \int_0^\infty \gp[\bigg]{
        \frac{-\z \F{\z}}{\coload{\y}\coload{\z}}
          \diff\coload{x}
        + \frac{\F{x}}{\coload{\y}}
      } \d{x},
      \byref{eq:Diff_recip, lem:yz_Diff_diff_chain}
  \end{align*}
  which equals the desired result by~\cref{eq:coload_excess}.
\end{proof}

\begin{proof}[Proof of \cref{thm:sesidence_monotonic}]
  \useamsalign\allowdisplaybreaks
  Very similarly to the proof of \cref{thm:residence_monotonic}, we compute
  \begin{align*}
    \ifshort{&}{}\E{\sesidence{\generic{1}}{}}
    \ifshort{}{&}= \int_0^\infty \frac{\z}{\coload{\y}} \dF{x}
      \byref{eq:sesidence}
    \\
    &= \int_0^\infty \gp[\bigg]{
        \z \F{\z} \Diff\sqgp[\bigg]{
          t \mapsto \frac{1}{\coload{t}}
        }(\angle{\y, \z})
        + \frac{\F{\y}}{\coload{\y}}
      } \d{x}
      \ifshort{}{\TwoColEqBreak[\hspace{18em}]}
      \byref{lem:yz_integrate_by_parts}
    \\
    &= \int_0^\infty \gp[\bigg]{
        \frac{-\z \F{\z}}{\coload{\y}\coload{\z}}
          \diff\coload{x}
        + \frac{\F{\y}}{\coload{\y}}
      } \d{x},
      \byref{eq:Diff_recip, lem:yz_Diff_diff_chain}
  \end{align*}
  which equals the desired result by~\cref{eq:coload_excess}.
\end{proof}

\end{document}